\journal{Artificial Intelligence}
\newcommand{\N}{\mathbb{N}}
\newcommand{\Z}{\mathbb{Z}}
\newcommand{\TSS}{\textsc{Target Set Selection}\xspace}
\newcommand{\TSSshort}{\textsc{TSS}\xspace}
\newcommand{\twoTSS}{\textsc{$2$-Opinion Target Set Selection}\xspace}
\newcommand{\twoTSSshort}{\textsc{$2$OTSS}\xspace}
\newcommand{\PSI}{\textsc{Partitioned Subgraph Isomorphism}\xspace}
\newcommand{\PSIshort}{\textsc{PSI}\xspace}
\newcommand{\VC}{\textsc{Vertex Cover}\xspace}
\newcommand{\DP}{\operatorname{DP}}
\newcommand{\fin}{\ensuremath{\mathcal{T}}}
\newcommand{\cost}{\operatorname{cost}}
\newcommand{\bm}[1]{\mathbf{#1}}
\newcommand{\low}{\operatorname{low}}
\newcommand{\high}{\operatorname{high}}
\newcommand{\expr}{\operatorname{expr}}
\newcommand{\Oh}[1]{\ensuremath{{\mathcal{O}(#1)}}}
\newcommand{\NPh}{\textsf{NP}-hard\xspace}
\newcommand{\FPT}{\textsf{FPT}\xspace}
\newcommand{\XP}{\textsf{XP}\xspace}
\newcommand{\W}[1][1]{\textsf{W[#1]}\xspace}
\newcommand{\Wh}[1][1]{\textsf{W[#1]}-h\xspace}
\newcommand{\Whard}[1][1]{\textsf{W[#1]}-hard\xspace}
\newcommand{\Whness}[1][1]{\textsf{W[#1]}-hardness\xspace}
\newcommand{\Wcomplete}[1][1]{\textsf{W[#1]}-complete\xspace}
\newcommand{\tw}{\operatorname{tw}}
\newcommand{\w}{\omega}
\newcommand{\pw}{\operatorname{pw}}
\newcommand{\td}{\operatorname{td}}
\renewcommand{\d}{\delta}
\newtheorem{theorem}{Theorem}
\newtheorem{lemma}{Lemma}
\newtheorem{observation}{Observation}
\newtheorem{rrule}{Reduction Rule}
\theoremstyle{definition}
\newtheorem{definition}{Definition}
\theoremstyle{remark}
\newtheorem{claim}{Claim}
\newtheorem{corollary}{Corollary}
\newtheorem*{example*}{Example}
\newenvironment{claimproof}[1][\proofname]{%
  \proof[#1]
}{\endproof}
\crefname{observation}{Observation}{Observations}
\crefname{rrule}{Reduction Rule}{Reduction Rules}
\newcommand{\yesI}{\texttt{yes}-instance\xspace}
\newcommand{\convexpath}[2]{
	[
	create hullnodes/.code={
		\global\edef\namelist{#1}
		\foreach [count=\counter] \nodename in \namelist {
			\global\edef\numberofnodes{\counter}
			\node at (\nodename) [draw=none,name=hullnode\counter] {};
		}
		\node at (hullnode\numberofnodes) [name=hullnode0,draw=none] {};
		\pgfmathtruncatemacro\lastnumber{\numberofnodes+1}
		\node at (hullnode1) [name=hullnode\lastnumber,draw=none] {};
	},
	create hullnodes
	]
	($(hullnode1)!#2!-90:(hullnode0)$)
	\foreach [
	evaluate=\currentnode as \previousnode using \currentnode-1,
	evaluate=\currentnode as \nextnode using \currentnode+1
	] \currentnode in {1,...,\numberofnodes} {
		let
		\p1 = ($(hullnode\currentnode)!#2!-90:(hullnode\previousnode)$),
		\p2 = ($(hullnode\currentnode)!#2!90:(hullnode\nextnode)$),
		\p3 = ($(\p1) - (hullnode\currentnode)$),
		\n1 = {atan2(\y3,\x3)},
		\p4 = ($(\p2) - (hullnode\currentnode)$),
		\n2 = {atan2(\y4,\x4)},
		\n{delta} = {-Mod(\n1-\n2,360)}
		in
		{-- (\p1) arc[start angle=\n1, delta angle=\n{delta}, radius=#2] -- (\p2)}
	}
	-- cycle
}
\begin{document}

\begin{frontmatter}

\title{Balancing the Spread of Two Opinions in Sparse Social Networks\tnoteref{aaaiSA}}
\tnotetext[aaaiSA]{A preliminary summary of some of the results appeared as a student abstract in the Proceedings of the 36th AAAI Conference on Artificial Intelligence (AAAI~'22)~\cite{KnopSS2022}.}

\author[CTU]{Dušan Knop} %
\author[CTU,AGH]{Šimon Schierreich}
\author[CTU]{Ondřej Suchý}

\affiliation[CTU]{
    organization={Czech Technical University in Prague},%
    city={Prague},
    country={Czech Republic}
}

\affiliation[AGH]{
    organization = {AGH University of Krakow},
    city         = {Krakow},
    country      = {Poland}
}

\begin{abstract}
    Inspired by the famous \TSS problem, we propose a new discrete model to simultaneously spread two opinions within a social network and perform an initial study of its complexity. Here, we are given a social network, a seed-set of agents for each opinion, two thresholds for each agent, a budget, and a number of rounds. The first threshold represents the willingness of an agent to adopt an opinion if the agent has no opinion at all, while the second threshold states the willingness to acquire a second opinion if the agent already has one. The goal is to add at most budget-many agents to the initial seed-sets such that the process started with these extended seed-sets stabilizes within the given number of rounds, with each agent having either both opinions or none. That is, our goal is to ensure that the spread of opinions is \emph{balanced}.
	
	We show that the problem is \NPh, and thus we study the problem from the perspective of parameterized complexity. In particular, we show that the problem is \FPT when parameterized by the number of rounds, the maximum threshold, and the treewidth combined. This algorithm also applies to the combined parameter, the treedepth and the maximum threshold. Finally, we show that the problem is \FPT when parameterized by the vertex cover number, the $3$-path vertex cover number, or the vertex integrity of the input network alone. To complement our tractability results, we show that the problem is \Whard with respect to a) the sizes of the initial seed-sets and the feedback-vertex set number combined, even if all thresholds are bounded by a constant, and b) the budget, the $4$-path vertex cover number, and the feedback-vertex set number combined, even if every activation process stabilizes in at most $4$ rounds.
\end{abstract}

\begin{keyword}
Social Network \sep Diffusion Model \sep Target Set Selection \sep Parameterized Complexity

\end{keyword}

\end{frontmatter}

\section{Introduction}\label{sec:introduction}

In network dynamics, there are two main mathematical models to study influence in social networks~\cite{ShakarianBAER2015}. In both models, there is a certain set of ``active'' agents, called \emph{seed-set}, initially having some opinion or secret. Next, in discrete rounds, some new agents become activated because of influence from previously active agents in their neighborhood. The models then differ in the way in which influence is propagated through the network.

The first model is known under the name \emph{Independent cascades}~\cite{GoldenbergLM2001,EasleyK10}. Here, every agent has some predefined probability of, once activated, influencing each of her neighbors. 
In the \emph{Linear threshold} model~\cite{KempeKT03,KempeKT05,KempeKT15}, on the contrary, every agent selects her own threshold at random, and once the sum of influence of active agents in the neighborhood of the agent exceeds this threshold, the agent becomes activated as well. 

In this work, we build on the deterministic version of the second of the aforementioned models of influence spread, which is also known as the computational problem called \TSS (\TSSshort). \TSSshort was introduced by \citet{DomingosR01} in order to study the influence of direct marketing on social networks; see, e.g., the monographs of \citet{ChenCL13,BorgsBCL14,KempeKT15} and references therein. Using the threshold model of \citet{KempeKT15}, we formally define \TSSshort as follows.
We are given a social network modeled as an undirected graph~$G = (V,E)$, a \emph{threshold value} $f(v) \in \N$ for every agent~$v \in V$, and a budget~$B\in \N$.
The task is to select a set of agents~$T \subseteq V$, where $|T|\leq B$, so that when the following \emph{activation process}
\[
P^0 = T
\qquad \text{and} \qquad
P^{i+1} = P^i \cup \Big\{ v \in V \setminus P^i \,\Big|\, f(v) \le \left| P^i \cap N(v) \right| \Big\}
\]
stabilizes
with the set~$P^{\fin}$, that is, $P^{\fin} = P^{\fin+1}$, the size of~$P^{\fin}$ is maximized; an important setting being we require $P^{\fin} = V$.

Both models have been generalized in many follow-up works---the most important for us are those modeling the spread of more than one opinion (e.g., different products).
In this line of research, most of the works focused on antagonistic opinions, that is, the agents can only acquire a single opinion; see, e.g., \cite{CarnesNWZ2007,BharathiKS07,BudakAA2011}.
To the best of our knowledge, \citet{MyersL12} were the first who considered that the agent can acquire several opinions (though in a rather different setting), while \citet{LuCL2015} were the first to consider the computational complexity of problems comprising such opinions.
Later, \citet{GarimellaGPT17} discussed the importance of balancing the outreach of the two opinions being spread in the social network (for two opinions). %

These works are mainly motivated by the diffusion in a social network where it is desirable for the considered opinions to spread roughly equally and thus prevent unwanted behavior of the social networks, such as, e.g., \emph{echo bubbles} or \emph{strengthening of extreme agents}; see, e.g., \cite[Figure~1]{ConoverRFGMF11}.
In other words, the most desired outcome is that, after a similar activation process as above stabilizes, every agent, in the case of two opinions, either has no opinion or has both.

\begin{example*}
	Suppose that there are two (rather antagonistic) opinions in a social network---if one knows which of these is the truthful one, they might, e.g., block the other.
	What can we do if we cannot tell which one is truthful?
	If many agents receive only the opinion that later turns out to be deceptive, then these agents might feel deceived by the network and leave it.
	Instead, we decide to help both opinions spread evenly.
	The agents receiving two different opinions can take adequate measures to react to the situation.
	Real world examples of this kind of information spread include, e.g., two experts having different opinions on covid-19 vaccination and both claiming having data supporting their opinion---how can an AI or even a human agent tell which of the two is trustworthy? %
\end{example*}

We formally define the problem of our interest as follows.

\paragraph*{The Model}
In the \twoTSS problem (\twoTSSshort for short), we are given a simple undirected graph~$G = (V,E)$, (possibly empty) seed sets~${S_a, S_b \subseteq V}$, threshold functions ${f_1, f_2 \colon V \to \N}$, a positive integer $\fin \in \N$, and a positive integer~$B \in \N$.
The task is to select two additional seed sets~$T_a, T_b \subseteq V$ with $|T_a| + |T_b| \le B$ such that the following \emph{activation process} terminates with $P_a^{\fin}= P_a^{\fin+1} = P_b^{\fin}= P_b^{\fin+1}$.
The initialization is by setting
\begin{equation*}
	P_a^0 = S_a \cup T_a
	\qquad\mbox{and}\qquad
	P_b^0 = S_b \cup T_b
\end{equation*}
and the process continues with
\begin{align*}
	P_c^{i+1} &=
	\Big\{ v \in V \setminus \left( P_a^{i} \cup P_b^{i} \right) \,\Big|\, f_1(v) \le \left| P_c^{i} \cap N(v) \right|\Big\} %
	\cup
	\Big\{ v \in P_{\neg c}^{i} \,\Big|\, f_2(v) \le \left| P_c^{i} \cap N(v) \right|\Big\}
	\cup %
	P_c^{i}
\end{align*}
for $i \geq 0$ and $c \in \{a,b\}$, where $\neg c$ is the element in the set $\{ a,b \} \setminus \{c\}$.
That is, an agent~$v$ gains opinion $c$ if she has no opinion and the number of her neighbors already having opinion~$c$ is at least $f_1(v)$ or if she already has the other opinion and the number of her neighbors already having opinion $c$ is at least $f_2(v)$.
It is not hard to see that the above process terminates in at most~$2n$ steps, where $n = |V|$, since in each non-terminal step we add at least one vertex to~$P_a^i \cup P_b^i$ (i.e., $\left| P_a^i \setminus P_a^{i-1} \right| + \left| P_b^i \setminus P_b^{i-1} \right| \ge 1$).
Thus, letting $\fin = 2n$ corresponds to putting no restriction on the length of the process. For a running example of the activation process, we refer the reader to \Cref{fig:running_example}.

Note that an agent with $f_1(v)=0$ is for sure included in $P_c^1$, unless it is in $P_{\neg c}^{0} \setminus P_c^0$.
Similarly, an agent with $f_2(v)=0$ gains the opinion $c$ at most one round after gaining opinion $\neg c$.
Conversely, an agent with $f_1(v) \ge \deg(v)+1$ or $f_2(v) \ge \deg(v)+1$ can never pass this threshold and, hence, can be only included in $P_c^i$ through the seed sets or possibly passing the $f_1$ threshold in the latter case.

There are several aspects of the problem that might seem unrealistic (such as requiring a perfectly balanced spread).
However, for the ease of exposition, we decided to keep the model as simple as possible.
We discuss possible variations, including imbalanced gain of opinions, agent-preferred opinion, or probabilistic thresholds, in \Cref{sec:variants}.%

\paragraph*{Versatility of the Model}
Note that we model the network as an undirected graph.
Furthermore, %
we do not have opinion-specific threshold values, but rather distinguish the order in which the agent is exposed to the opinions in question.
We believe that this is an interesting setting as it can capture different agents' mindsets; see also \cite{MyersL12,LuCL2015}:

\begin{description}
	\item[Competing opinions ($f_1(v) < f_2(v)$)]
	First, an agent~$v$ might be tougher towards the second opinion (by, e.g., setting $f_2(v)$ to some multiple of $f_1(v)$) for which we can see an application in modeling of, e.g., sharing tweets of political leaders on social networks. 
	An agent convinced of ``her truth'' might be hard to conceive other viewpoints.
	\item[Complementary opinions ($f_1(v) > f_2(v)$)]
	Second, an agent~$v$ might infer the second opinion easily (by, e.g., setting $f_2(v)$ to some fraction of $f_1(v)$) for which we can see an application in modeling of, e.g., spread of different viruses in a population, since the first virus in a human body decreases its ability to resist other viruses.
	An agent might also actively seek further viewpoints on the topic.
	\item[Independent opinions ($f_1(v) = f_2(v)$)] Third, for an agent, the decision to accept the opinion might be independent of whether she already acquired the other opinion.
\end{description}

Let us stress that (even in the first mindset) if both opinions are strong enough at the same time, then the agent receives both opinions (and evaluates the number of neighboring agents having a certain opinion only against the threshold value~$f_1(v)$).
This aspect of the model might be considered slightly counterintuitive. As already mentioned, we address possible variations of the model (not only in this regard) later in \Cref{sec:variants}.

\begin{figure}[tb!]
	\centering
	\definecolor{blue}{HTML}{C0DDFF}
	\definecolor{red}{HTML}{EE4440}
	\begin{subfigure}[t]{0.32\textwidth}
		\centering
		\begin{tikzpicture}[thick,scale=0.7, every node/.style={scale=0.7}]
    \node[draw,circle,label={[align=left]200:{\tiny$f_1=2$\\\tiny$f_2=1$}}] (v0) at (0,0) {$v_0$};
    \node[draw,circle,label={above:\tiny$f_1=f_2=5$}] (v1) at (-2,0) {$v_1$};
    \node[draw,circle,label={left:\tiny$f_2=2$},fill=blue] (v2) at (0,2) {$v_2$};
    \node[draw,circle,label={[align=left]below:\tiny$f_1=1$\\\tiny$f_2=6$}] (v3) at (2,0) {$v_3$};
    \node[draw,circle,label={left:\tiny$f_1=f_2=2$}] (v4) at (0,-2) {$v_4$};
    \node[draw,circle,fill=blue,label={right:\tiny$f_2=1$}] (v5) at (2,3) {$v_5$};
    \node[draw,circle,left color=blue,right color=red] (v6) at (4,0) {$v_6$};
    \node[draw,circle,fill=red,label={below:\tiny$f_2=1$}] (v7) at (-1.5,-3) {$v_7$};
    \node[draw,circle,label={below:\tiny$f_1=2,f_2=1$}] (v8) at (1.5,-3) {$v_8$};
    \node[draw,circle,fill=red,label={below:\tiny$f_2=1$}] (v9) at (3,-2) {$v_9$};
    
    \draw (v0) edge (v1) edge (v2) edge (v3) edge (v4);
    \draw (v2) edge (v5) edge (v3);
    \draw (v4) edge (v7) edge (v8);
    \draw (v8) edge (v9);
    \draw (v3) edge (v6);
\end{tikzpicture}
		\caption{The initial state of the network. Agents $v_2,v_5$ are part of the initial seed set $S_a$, agents $v_7,v_9$ are part of the initial seed set $S_b$, and the agent $v_6$ is part of both seed sets $S_a$ and $S_b$. The remaining agents have neither opinion.}
		\label{fig:runnin_example_initial_setting}
	\end{subfigure}
	\hfill
	\begin{subfigure}[t]{0.32\textwidth}
		\centering
		\begin{tikzpicture}[thick,scale=0.7, every node/.style={scale=0.7}]
    \node[draw,circle,label={[align=left]200:{\tiny$f_1=2$\\\tiny$f_2=1$}}] (v0) at (0,0) {$v_0$};
    \node[draw,circle,label={above:\tiny$f_1=f_2=5$}] (v1) at (-2,0) {$v_1$};
    \node[draw,circle,label={left:\tiny$f_2=2$},fill=blue] (v2) at (0,2) {$v_2$};
    \node[draw,circle,label={[align=left]below:\tiny$f_1=1$\\\tiny$f_2=6$},left color=blue,right color=red] (v3) at (2,0) {$v_3$};
    \node[draw,circle,label={left:\tiny$f_1=f_2=2$}] (v4) at (0,-2) {$v_4$};
    \node[draw,circle,fill=blue,label={right:\tiny$f_2=1$}] (v5) at (2,3) {$v_5$};
    \node[draw,circle,left color=blue,right color=red] (v6) at (4,0) {$v_6$};
    \node[draw,circle,fill=red,label={below:\tiny$f_2=1$}] (v7) at (-1.5,-3) {$v_7$};
    \node[draw,circle,label={below:\tiny$f_1=2,f_2=1$}] (v8) at (1.5,-3) {$v_8$};
    \node[draw,circle,fill=red,label={below:\tiny$f_2=1$}] (v9) at (3,-2) {$v_9$};
    
    \draw (v0) edge (v1) edge (v2) edge (v3) edge (v4);
    \draw (v2) edge (v5) edge (v3);
    \draw (v4) edge (v7) edge (v8);
    \draw (v8) edge (v9);
    \draw (v3) edge (v6);
\end{tikzpicture}
		\caption{In the first round, agent $v_3$ acquires both opinions $a$ and $b$ at the same time, since the value of threshold function $f_1$ is $1$, and she has at least one neighbor with opinion $a$ and at least one neighbor with opinion $b$. Note that if the agent would gain only one opinion in this round, she would not gain the second opinion at all.
		}
		\label{fig:runnin_example_round1}
	\end{subfigure}
	\hfill
	\begin{subfigure}[t]{0.32\textwidth}
		\centering
		\begin{tikzpicture}[thick,scale=0.7, every node/.style={scale=0.7}]
    \node[draw,circle,label={[align=left]200:{\tiny$f_1=2$\\\tiny$f_2=1$}},fill=blue] (v0) at (0,0) {$v_0$};
    \node[draw,circle,label={above:\tiny$f_1=f_2=5$}] (v1) at (-2,0) {$v_1$};
    \node[draw,circle,label={left:\tiny$f_2=2$},fill=blue] (v2) at (0,2) {$v_2$};
    \node[draw,circle,label={[align=left]below:\tiny$f_1=1$\\\tiny$f_2=6$},left color=blue,right color=red] (v3) at (2,0) {$v_3$};
    \node[draw,circle,label={left:\tiny$f_1=f_2=2$}] (v4) at (0,-2) {$v_4$};
    \node[draw,circle,fill=blue,label={right:\tiny$f_2=1$}] (v5) at (2,3) {$v_5$};
    \node[draw,circle,left color=blue,right color=red] (v6) at (4,0) {$v_6$};
    \node[draw,circle,fill=red,label={below:\tiny$f_2=1$}] (v7) at (-1.5,-3) {$v_7$};
    \node[draw,circle,label={below:\tiny$f_1=2,f_2=1$}] (v8) at (1.5,-3) {$v_8$};
    \node[draw,circle,fill=red,label={below:\tiny$f_2=1$}] (v9) at (3,-2) {$v_9$};
    
    \draw (v0) edge (v1) edge (v2) edge (v3) edge (v4);
    \draw (v2) edge (v5) edge (v3);
    \draw (v4) edge (v7) edge (v8);
    \draw (v8) edge (v9);
    \draw (v3) edge (v6);
\end{tikzpicture}
		\caption{In the second round, agent $v_0$ gains opinion $a$ since he finally has at least $f_1(v_0)$ neighbors with opinion $a$. She cannot obtain opinion $b$ because there is not enough neighbors with this opinion.}
		\label{fig:runnin_example_round2}
	\end{subfigure}
	
	\vspace{0.45cm}
	
	\begin{subfigure}[t]{0.32\textwidth}
		\centering
		\begin{tikzpicture}[thick,scale=0.7, every node/.style={scale=0.7}]
    \node[draw,circle,label={[align=left]200:{\tiny$f_1=2$\\\tiny$f_2=1$}},left color=blue,right color=red] (v0) at (0,0) {$v_0$};
    \node[draw,circle,label={above:\tiny$f_1=f_2=5$}] (v1) at (-2,0) {$v_1$};
    \node[draw,circle,label={left:\tiny$f_2=2$},fill=blue] (v2) at (0,2) {$v_2$};
    \node[draw,circle,label={[align=left]below:\tiny$f_1=1$\\\tiny$f_2=6$},left color=blue,right color=red] (v3) at (2,0) {$v_3$};
    \node[draw,circle,label={left:\tiny$f_1=f_2=2$}] (v4) at (0,-2) {$v_4$};
    \node[draw,circle,fill=blue,label={right:\tiny$f_2=1$}] (v5) at (2,3) {$v_5$};
    \node[draw,circle,left color=blue,right color=red] (v6) at (4,0) {$v_6$};
    \node[draw,circle,fill=red,label={below:\tiny$f_2=1$}] (v7) at (-1.5,-3) {$v_7$};
    \node[draw,circle,label={below:\tiny$f_1=2,f_2=1$}] (v8) at (1.5,-3) {$v_8$};
    \node[draw,circle,fill=red,label={below:\tiny$f_2=1$}] (v9) at (3,-2) {$v_9$};
    
    \draw (v0) edge (v1) edge (v2) edge (v3) edge (v4);
    \draw (v2) edge (v5) edge (v3);
    \draw (v4) edge (v7) edge (v8);
    \draw (v8) edge (v9);
    \draw (v3) edge (v6);
\end{tikzpicture}
		\caption{In the third round, the agent $v_0$ obtains opinion $b$. This would not be possible without acquiring opinion $a$ in the previous round.}
		\label{fig:runnin_example_round3}
	\end{subfigure}
	\hfill
	\begin{subfigure}[t]{0.32\textwidth}
		\centering
		\begin{tikzpicture}[thick,scale=0.7, every node/.style={scale=0.7}]
        \node[draw,circle,label={[align=left]200:{\tiny$f_1=2$\\\tiny$f_2=1$}},left color=blue,right color=red] (v0) at (0,0) {$v_0$};
        \node[draw,circle,label={above:\tiny$f_1=f_2=5$}] (v1) at (-2,0) {$v_1$};
        \node[draw,circle,label={left:\tiny$f_2=2$},left color=blue,right color=red] (v2) at (0,2) {$v_2$};
        \node[draw,circle,label={[align=left]below:\tiny$f_1=1$\\\tiny$f_2=6$},left color=blue,right color=red] (v3) at (2,0) {$v_3$};
        \node[draw,circle,label={left:\tiny$f_1=f_2=2$},fill=red] (v4) at (0,-2) {$v_4$};
        \node[draw,circle,fill=blue,label={right:\tiny$f_2=1$}] (v5) at (2,3) {$v_5$};
        \node[draw,circle,left color=blue,right color=red] (v6) at (4,0) {$v_6$};
        \node[draw,circle,fill=red,label={below:\tiny$f_2=1$}] (v7) at (-1.5,-3) {$v_7$};
        \node[draw,circle,label={below:\tiny$f_1=2,f_2=1$}] (v8) at (1.5,-3) {$v_8$};
        \node[draw,circle,fill=red,label={below:\tiny$f_2=1$}] (v9) at (3,-2) {$v_9$};
        
        \draw (v0) edge (v1) edge (v2) edge (v3) edge (v4);
        \draw (v2) edge (v5) edge (v3);
        \draw (v4) edge (v7) edge (v8);
        \draw (v8) edge (v9);
        \draw (v3) edge (v6);
    \end{tikzpicture}
		\caption{In the fourth round, there are finally enough neighbors of agent $v_4$ and he gets opinion~$b$. The same holds for agent $v_2$ which has both opinions now.}
		\label{fig:runnin_example_round4}
	\end{subfigure}
	\hfill
	\begin{subfigure}[t]{0.32\textwidth}
		\centering
		\begin{tikzpicture}[thick,scale=0.7, every node/.style={scale=0.7}]
    \node[draw,circle,label={[align=left]200:{\tiny$f_1=2$\\\tiny$f_2=1$}},left color=blue,right color=red] (v0) at (0,0) {$v_0$};
    \node[draw,circle,label={above:\tiny$f_1=f_2=5$}] (v1) at (-2,0) {$v_1$};
    \node[draw,circle,label={left:\tiny$f_2=2$},left color=blue,right color=red] (v2) at (0,2) {$v_2$};
    \node[draw,circle,label={[align=left]below:\tiny$f_1=1$\\\tiny$f_2=6$},left color=blue,right color=red] (v3) at (2,0) {$v_3$};
    \node[draw,circle,label={left:\tiny$f_1=f_2=2$},fill=red] (v4) at (0,-2) {$v_4$};
    \node[draw,circle,label={right:\tiny$f_2=1$},left color=blue,right color=red] (v5) at (2,3) {$v_5$};
    \node[draw,circle,left color=blue,right color=red] (v6) at (4,0) {$v_6$};
    \node[draw,circle,fill=red,label={below:\tiny$f_2=1$}] (v7) at (-1.5,-3) {$v_7$};
    \node[draw,circle,label={below:\tiny$f_1=2,f_2=1$},fill=red] (v8) at (1.5,-3) {$v_8$};
    \node[draw,circle,fill=red,label={below:\tiny$f_2=1$}] (v9) at (3,-2) {$v_9$};
    
    \draw (v0) edge (v1) edge (v2) edge (v3) edge (v4);
    \draw (v2) edge (v5) edge (v3);
    \draw (v4) edge (v7) edge (v8);
    \draw (v8) edge (v9);
    \draw (v3) edge (v6);
\end{tikzpicture}
		\caption{The activation process stabilizes after $5$ rounds. Agent $v_1$ has no opinion due to the value of her threshold functions. The opinion $a$ is blocked by agent~$v_4$, so we need to add an agent into the additional seed set $T_a$ to achieve a balanced spread.}
		\label{fig:runnin_example_round5}
	\end{subfigure}
	
	\caption{A running example of the activation process described by the model.}
	\label{fig:running_example}
\end{figure}

\subsection{Our Contribution}

In this work, we propose a new discrete model for spreading information with at least two opinions---via the related computation problem \twoTSS---and perform an initial study of its (parameterized) complexity.

\begin{figure}[bt!]
	\begin{minipage}{0.49\linewidth}
		\centering
		\begin{tikzpicture}
			\node (vc) at (-0.75,-0.25) {$\operatorname{vc}$};
			\node (3pvc) at (-0.75,-1) {$\operatorname{3-pvc}$};
			\node (4pvc) at (-0.25,-2) {$\operatorname{4-pvc}$};
			\node (vi) at (0.75,-1.75) {$\operatorname{vi}$};
			\node (fvs) at (-1.5,-2.5) {$\operatorname{fvs}$};
			\node (td) at (0,-2.75) {$\operatorname{td}^\dagger$};
			\node (pw) at (0,-3.5) {$\operatorname{pw}$};
			\node (tw) at (-0.75,-4.25) {$\operatorname{tw}^\ddagger$};
			
			\node at (1.8, -2) {\FPT};
			\draw [shorten <= 0.25cm, shorten >= 0.25cm] (2,-2.95) to[out=140,in=-30] (-3,-0.6);
			\node at (-2.75, -2.5) {\Wh};
			
			\draw (vc) -- (3pvc);
			\draw (3pvc) -- (fvs);
			\draw (3pvc) -- (4pvc);
			\draw (3pvc) -- (vi);
			\draw (fvs) -- (tw);
			\draw (4pvc) -- (td);
			\draw (vi) -- (td);
			\draw (td) -- (pw);
			\draw (pw) -- (tw);
			
			\begin{pgfonlayer}{background}
				\fill[green!20] \convexpath{3pvc,vc,vi.north}{15pt};
				\fill[red!40] \convexpath{fvs,4pvc.south,td,pw,tw}{15pt};
			\end{pgfonlayer}
		\end{tikzpicture}
	\end{minipage}
	\begin{minipage}{0.49\textwidth}
		\centering
		\begin{tikzpicture}
			\tikzstyle{policko}=[rectangle, minimum height=.55cm, minimum width=1.7cm]
			\tikzstyle{polickoG}=[policko, fill=green!20]
			\tikzstyle{polickoR}=[policko, fill=red!40]
			\tikzstyle{trojpolicko}=[rectangle, minimum height=.5cm, minimum width=5.1cm]
			\tikzstyle{sipka}=[ultra thick, >=stealth, ->]
			\tikzstyle{tabulka}=[thick]
			
			\node[polickoG] (CC) {};
			\node[polickoG, right=0cm of CC] (CP) {};
			\node[polickoG, below=0cm of CC] (PC) {};
			\node[polickoG, right=0cm of PC] (PP) {\, Thm~\ref{thm:twoTSSIsFPTWrtRoundsFmaxTw}};
			
			\node[polickoR, right=0cm of CP] (CI) {Thm~\ref{thm:twoTSSIsHardForTDandConstRounds}};
			\node[polickoR, below=0cm of CI] (PI) {};
			\node[polickoR, below=0cm of PI] (II) {};
			\node[polickoR, left=0cm of II] (IP) {};
			\node[polickoR, left=0cm of IP] (IC) {Thm~\ref{thm:twoTSSIsHardForTWandConstFMax}};
			
			\node[policko, left=0cm of CC] (CClab) {\texttt{const}};
			\node[policko, left=0cm of PC] (PClab) {\texttt{param}};
			\node[policko, left=0cm of IC] (IClab) {\texttt{input}};
			\node[policko, above=0cm of CClab] (rounds) {};
			
			\node[policko, above=0.07cm of CC] (CClab2) {\texttt{const}};
			\node[policko, above=0cm of CP] (CPlab) {\texttt{param}};
			\node[policko, above=0.03cm of CI] (CIlab) {\texttt{input}};
			
			\node[trojpolicko, above=0cm of CPlab] (threshold) {thresholds $f_{\max}$};
			\node[policko, rotate=90, left=.2cm of CClab.north west] (treewidth) {rounds};
			
			\draw[tabulka] (II.south east) to ($(IClab.south west) - (.4,0)$) to ($ (threshold.north west) - (1.7,0) - (.4,0)$) to (threshold.north east) -- cycle;
			\draw[tabulka] (threshold.north west) to (IC.south west);
			\draw[tabulka] ($(rounds.south west) - (.4,0)$) to (CI.north east);
			
			\draw (CPlab.north west) to (IP.south west);
			\draw (CPlab.north east) to (IP.south east);
			
			\draw (PClab.north west) to (PI.north east);
			\draw (PClab.south west) to (PI.south east);
			
			\draw[sipka] ($(PP) - (.6,0)$) -- (PC.center);
			\draw[sipka] ($(PP) - (.6,0.03)$) -- ($(CP) - (.6,0)$);
			\draw[sipka] ($(PP) - (.6,0)$) -- (CC.center);
			
			\draw[sipka] ($(CI) - (0,.2)$) -- (II.center);
			\draw[sipka] ($(IC) + (.7,0)$) -- (II.center);
		\end{tikzpicture}
	\end{minipage}
	\caption{An overview of our results. Both figures use green (light-gray) to highlight \FPT results while the red (dark gray) is used for \Whard[1] results. 
    \\
    The left figure gives a complexity classification of \twoTSSshort w.r.t. different structural parameters ($\operatorname{vc}$ is the vertex cover number, $\operatorname{k-pvc}$ is the $k$-path vertex cover number, $\operatorname{vi}$ is the vertex integrity, $\operatorname{fvs}$ is the feedback vertex number, $\operatorname{td}$ is the treedepth, $\operatorname{pw}$ is the pathwidth, and $\operatorname{tw}$ is the treewidth). 
	The problem becomes \FPT if we combine the parameter marked $\dagger$ with the maximum threshold, and the same holds if we combine the structural parameter marked $\ddagger$ with the number of rounds and the maximum threshold. 
	\\
    The table on the right then presents a detailed summary of our results for the \twoTSSshort{} problem parameterized by treewidth. We investigate many settings where certain parts of the input may appear either as constants (\texttt{const}), parameters (\texttt{param}), or unrestricted (\texttt{input}).
	}
	\label{fig:results_overview}
\end{figure}

The two most natural parameters for this problem are the size of the seed-sets (i.e., $|S_a|+|S_b|$) and the solution size (i.e., the budget~$B$).
The \twoTSSshort problem is \Whard with respect to both of these parameters (\Cref{thm:twoTSSIsHardForTWandConstFMax}) and thus, following the approach taken for the single opinion, we focus on studying its complexity for the structural parameters of the input graph.
Limiting the structure of the input is further motivated by the study of \citet{ManiuSJ19}, who showed that many real-world networks exhibit a bounded treewidth.
However, we show (\Cref{thm:twoTSSIsHardForTWandConstFMax}) that the \twoTSSshort problem is \Whard for the combination of parameters $|S_a|+|S_b|$ and the treewidth of the input graph, even if we assume that $f_1(v) \le f_2(v) \le 3$ for all $v \in V(G)$.
It is worth noting that the activation process is quite long in this case; thus, one may ask what happens if we limit its length~$\fin$.
The \twoTSSshort problem remains \Whard for the combination of parameters $|S_a|+|S_b|$ and the treewidth of the input graph, even if we assume that any activation process stabilizes in 4 rounds (\Cref{thm:twoTSSIsHardForTDandConstRounds}) and $f_1(v) \le f_2(v)$ for all $v \in V(G)$.
In fact, the above-mentioned hardness results are even stronger and show that the problem is \Whard with respect to the more restrictive structural parameters such as the pathwidth, the feedback vertex number, and the $4$-path vertex cover number.

The algorithmic counterpart to the two hardness results is the following.
We show that \twoTSSshort is fixed-parameter tractable when parameterized by the number of rounds of the activation process, the maximum threshold, and the treewidth of the input graph combined (\Cref{thm:twoTSSIsFPTWrtRoundsFmaxTw}).
The same algorithm applies for the combined parameters, the treedepth of the input graph, and the maximum threshold; in fact, we show that if the treedepth is bounded, then so is the length of any activation process (which might be of independent interest).
Note that we do not assume anything about the relation of $f_1(v)$ and~$f_2(v)$.
Furthermore, \twoTSSshort is in \FPT when parameterized by the vertex cover number (\Cref{thm:twoTSSIsFPTWrtVC}), the $3$-path vertex cover number (\Cref{thm:3-pvc}), and the vertex integrity (\Cref{thm:vertex_integrity}). We note that the latter two algorithms also apply to the original \TSSshort.
We summarize our complexity results in \Cref{fig:results_overview}.

\paragraph{Theoretical Significance of the Results}
In the past few decades, many algorithmic and hardness results have been obtained for various structural graph parameters (in parameterized complexity).
Quite interestingly, only a few problems are known to be fixed-parameter tractable when parameterized by the treedepth, while being \Whard when the parameter is the pathwidth of the input graph.
These, to the best of our knowledge, are the following: the \textsc{Mixed Chinese Postman} problem~\cite{GutinJW16}, the \textsc{Length Bounded Cut} problem~\cite{DvorakK18}, \textsc{Geodesic Set}~\cite{KellerhalsK20}, and \textsc{Integer Programming}~\cite{EisenbrandHKKLO19}, where the parameter is combined---the largest coefficient of the constraint matrix~$\mathbb{A}$ and the treedepth of either the primal or the dual Gaifman graph of~$\mathbb{A}$.

\subsection{Related Work}

\paragraph{Target Set Selection}
As already mentioned, our work builds on foundations from the study of \TSS. %
This problem itself was introduced in the context of direct marketing on social networks~\cite{DomingosR01}; followed by more results and applications (see, e.g., the works of \citet{RichardsonD02,KempeKT03,EasleyK10} for applications).
The initial research direction focused on studying the stochastic setting where threshold values are selected at random and we try to maximize the expected spread of information~\cite{KempeKT03,KempeKT05,MosselR07,KempeKT15}.
It is not surprising that \TSSshort is \NPh (the ``static'' variant where threshold values are part of the input); in addition, both of its natural optimization variants (either maximize the size of $P^{\fin}$ or minimize the size of~$T$ so that $P^{\fin}$ is the whole vertex set) are hard to approximate. In fact, for the minimization variant \citet{Chen09} gave a polylogarithmic approximation lower bound which holds even if the input graph is bipartite and has a bounded degree and all threshold values are either~$1$ or~$2$.

It is not hard to see that \TSSshort itself captures, e.g., \VC. It follows from \citet{Mathieson10} and \citet{LuoMS18} that from the parameterized perspective \TSSshort is \Wcomplete[P] (which implies it is \Whard) for the standard parameter (size of the solution set~$T$). Given all this, it is not surprising that some special variants have been introduced, such as when all threshold values are bounded by a constant or the \emph{majority} variant in which all threshold values are set to half of the neighborhood size. \citet{Cicalese14} and \citet{Cicalese15} considered a variant in which the number of rounds of the activation process is bounded.

The systematic study of \TSSshort for structural parameters was initiated by \citet{BenZwiHLN09}. Together with sequel works~\cite{Ben-ZwiHLN11,NichterleinNUW13,ChopinNNW14,DvorakKT18,Hartmann18} this yields a fairly complete understanding of tractability for structural parameters:

\begin{itemize}
	\item
	\TSSshort is \FPT when parameterized by the vertex cover number, the bandwidth, or the feedback edge number,
	\item
	the majority variant is \FPT for the neighborhood diversity or the twin-cover number,
	\item
	the majority variant is \Whard for the treedepth (or the modularwidth\footnote{Modularwidth is a structural parameter incomparable with treewidth, which is usually small on dense graphs. Modularwidth possesses a nice algebraic decomposition of a graph. For a formal definition of modularwidth, see, e.g., the work of \citet{GajarskyLO2013}.}), and
	\item
	\TSSshort{} is \FPT when parameterized by the cliquewidth if all threshold values are bounded by a constant.
\end{itemize}

Later, several variants of \TSSshort were introduced and studied, such as in a dynamic~\cite{Schierreich2023,DeligkasEGS2023}, or geometric environment~\cite{DvorakKS2024}. Notably, none of these variants considers more than one opinion being spread. 

\paragraph{Information Diffusion Models}
As we already mentioned, there are two main models for information diffusion, namely the Independent cascades model and the Linear threshold model as described above.
Originally both these models are stochastic (predefined probability of activation along each edge in the first model, random threshold for each agent in the second one). 
However, computing the expected number of agents activated by a certain seed set is \textsf{\#P}-complete~\cite{ChenWW10}.
Thus, the typical approach to estimating this number is to sample the network several times and average over the results~\cite{BudakAA2011,ChenCCKLRSWWY11,LiZCGSL2014,TongWD21,TongWD18,TongDW18}.
Therefore, it becomes essential to simulate behavior in a deterministic setting, as was argued by, e.g., \cite[Lemma~1]{LuCL2015}. 
It should be pointed out that our model is (probably) the simplest and most natural combination of the deterministic version of the Linear threshold model of~\citet{KempeKT03} described above, the spread of two nonexclusive opinions~\citet{LuCL2015}, and the balance of opinions of~\citet{GarimellaGPT17}. Let us now briefly describe the latter two. 

\paragraph{Models with Multiple Opinions}
In the \emph{Comparative IC model} of~\citet{LuCL2015}, the information diffusion follows the Independent cascades model described above, but there is a separate stage of decision on adoption of the opinion. 
Each agent can be in 4 possible states with respect to each opinion, namely: \emph{idle, suspended, adopted, rejected}. Initially, she is idle with respect to both opinions $a$ and $b$. 
Once she is informed of a neighbor adopting an opinion, she uses her own \emph{node level automaton} to decide upon adopting the opinion. 
Here, if she is idle with respect to $a$ and idle or suspended with respect to $b$, she adopts $a$ with probability $q_{a|\emptyset}$, otherwise she suspends it. 
This decision can be later \emph{reconsidered} when she adopts $b$ such that it is either adopted (with overall probability $q_{a|b}$) or ultimately rejected. 
If she gets informed about $ a$ while she has already adopted $b$, then she either adopts $a$ with probability $q_{a|b}$ or rejects it. 
Similarly, for the adoption of $b$. 
The \emph{global adoption probabilities} $q_{a|\emptyset}$, $q_{a|b}$, $q_{b|\emptyset}$, and $q_{b|a}$ are the same for the whole network. 

\citet{GarimellaGPT17} considered two opposing campaigns being propagated through the network according to the Independent cascades model from initial seed sets $I_a$ and~$I_b$. The task is to select two additional seed sets $S_a$ and $S_b$ with $|S_a| +|S_b| \le k$ so as to maximize the expected number of balanced users, that is, users exposed to information from both campaigns or none. They considered two different settings, a \emph{``heterogeneous''} one, where the propagation of each campaign is independent of the other, and a \emph{``correlated''} one, where one campaign is propagated along an edge if and only if the other is.

\paragraph{Comparison with our Model}
In the previously proposed models, each agent has \emph{opinion-specific thresholds}. It is worth mentioning that the model we propose is the first that allows a certain level of interaction between the two opinions spreading in the network that can be tuned on the agent level. Thus, our model enables the study of spreading opinions, where if an agent receives one of them, this directly affects their ability/willingness to acquire the other (which was already discussed by \citet{LuCL2015} at the network level, also cf. \citet{MyersL12}).

An important difference between the model of \citet{LuCL2015} and the one presented here is that we allow some agents to view the two opinions as complementary whereas others might view them as competing while \citet{LuCL2015} set this ``on the network level'' (i.e., it is the same for all agents) in their model.
It is worth mentioning that our model is also simpler in some aspects than the one of \citet{LuCL2015}, since in our model the thresholds are perfectly symmetrical for the opinions, i.e., they do not depend on whether the opinion to be acquired is $a$ or $b$ (in their model there might be a difference, but this is again on the network level). We discuss the possibilities of making our model asymmetric in \Cref{sec:variants}.

As we have already mentioned, the study of related problems that concern more than one opinion has been initiated in works of~\cite{MyersL12,LuCL2015}.
Both \citet{GarimellaGPT17} and \citet{BeckerCDG19} study related problems in a stochastic setting, that is, the diffusion process is randomized, and the task is to maximize the expected number of balanced agents with the given budget or to minimize the cost of achieving a prescribed number of balanced agents. In both cases, they distinguish the heterogeneous and correlated settings. While \citet{GarimellaGPT17} considers two campaigns, \citet{BeckerCDG19} generalizes it to three or more, where the agent is considered balanced if it is confronted with none or at least a given number of opinions.
In these works, the authors mostly study approximation algorithms and computational hardness of approximation, while in our work, we are interested in exact algorithms.%

\section{Preliminaries}\label{sec:preliminaries}

In this section, we provide an overview of the notation used in our work. 
Let $k\in \N$ be a natural number. We denote by $[k]$ the set $\{1,\ldots,k\}$.

\paragraph{Graph Theory}
Since we model the social network of agents as a graph, we rely on basic graph notions.
A graph is a pair~$G = (V,E)$, where the elements of~$V$ are \emph{vertices} (representing agents) and $E \subseteq \binom{V}{2}$ is the set of \emph{edges}. In some considerations we also allow \emph{loops}, that is, the edges connecting $v$ with $v$ for any $v\in V$ to be part of $E$. If $G$ is a graph, we denote the set of its vertices by $V(G)$ and the set of edges by $E(G)$, unless the sets are clear from the context. We abbreviate their sizes as $n=|V(G)|$ and $m=|E(G)|$.

Let $G=(V,E)$ be a graph, and let $v\in V$ be any vertex. The \emph{neighborhood of a vertex~$v$} is the set $N_G(v) = \{ u \in V \mid \{u,v\} \in E \}$ and the \emph{degree} of $v$, denoted $\deg_G(v)$, is the size of its neighborhood $|N_G(v)|$. We often omit the subscript that indicates a graph if there is no confusion.

For a set of vertices~$U$ \emph{the subgraph induced by~$U$} is the graph $\big(U, E \cap \binom{U}{2}\big)$.
For a set of vertices~$S$ the graph $G - S$ is the subgraph induced by the set $V \setminus S$. If the set $S$ consists of a single vertex $v$, we write $G-v$ instead of $G-\{v\}$.
For further notions of graph theory, we refer the reader to the monograph of \citet{Diestel17}.

\paragraph{Parameterized Complexity}
An instance of a \emph{parameterized problem} $L\subseteq \Sigma^*\times \N$ over a finite fixed alphabet $\Sigma$ is a pair $(x,k)\in\Sigma^*\times\N$, where $x$ is the input to the problem and $k \in \N$ is the value of \emph{parameter}.

A~parameterized problem $L$ is called \emph{fixed-parameter tractable} (is in \FPT) if it is possible to decide whether $(x,k) \in L$ in $f(k) \cdot |x|^\Oh{1}$ time, where $f\colon \N \to \N$ is a computable function. If the running time of an algorithm is $f(k) \cdot |x|^{g(k)}$, where $f,g\colon\N\to\N$ are computable functions, then the algorithm is called \emph{slice-wise polynomial} and proves that $L$ is in \XP.

The class \W contains both problems in \FPT and also problems that are not believed to be in \FPT.
Therefore, if a problem is proven \Whard by a parameterized reduction, then it is presumably not in \FPT.
A parameterized reduction from a parameterized problem $L$ to a parameterized problem $Q$ is an algorithm that on input of an instance $(x,k)$ of a problem~$L$ produces in $f(k) \cdot |x|^\Oh{1}$ time an instance $(x',k')$ of the problem $Q$ such that $(x,k) \in L$ if and only if $(x',k') \in Q$ and such that $k' \le g(k)$, where $f,g\colon \N \to \N$ are computable functions.

For a more comprehensive overview of parameterized complexity theory, we refer the reader to the monograph of \citet{CyganFKLMPPS15}.

\paragraph{Treewidth, Pathwidth, and Treedepth} As we study \twoTSSshort in sparse social networks, our intention is to use some structural parameter that is small in such networks. One of the best known parameters for sparse networks is \emph{treewidth}. 

\begin{definition}[Tree decomposition]\label{def:treeDecomposition}
	A \emph{tree decomposition} of a graph~$G$ is a triple $(T, \beta, r)$, where~$T$ is a tree rooted at node~$r$ and $\beta \colon V(T) \to 2^{V(G)}$ is a mapping that satisfies:
	\begin{enumerate}[label=\roman*)]
		\item $\bigcup_{x \in V(T)} \beta(x) = V(G)$;
		\item For every $\{u, v\} \in E(G)$ there exists a node $x \in V(T)$, such that $u, v \in \beta(x)$;
		\item For every $u \in V(G)$ the nodes $\{x \in V(T) \mid u \in \beta(x)\}$ form a connected subtree of~$T$.
	\end{enumerate}
\end{definition}

To distinguish between the vertices of a tree decomposition and the vertices of the underlying graph, we use the term \emph{node} for the vertices of a given tree decomposition.

The \emph{width} of a tree decomposition $(T, \beta, r)$ is $\max_{x \in V(T)} |\beta(x)|-1$.
The \emph{treewidth} of a graph~$G$ (denoted $\tw(G)$) is the minimum width of a~tree decomposition of~$G$ over all its decompositions. The \emph{pathwidth} of a graph~$G$ (denoted $\pw(G)$) is the minimum width of a~tree decomposition $(T, \beta, r)$ of~$G$ over all decompositions for which $T$ is a path.

Let $G = (V,E)$ be a graph.
The \emph{treedepth} of a graph, denoted $\td(G)$, is defined as follows.%
\[
\td(G) = \begin{cases}
ˇ	1 &\mbox{if } |V(G)| = 1, \\[0.2cm]
1+\min_{v\in V(G)}\td(G - v) &\mbox{if $G$ is connected with } |V(G)| > 1, \\[0.2cm]
\max_{i \in [k]} \td(G_i) & \text{\begin{minipage}{.5\textwidth}if $G_1, \ldots, G_k$ are connected components of~$G$.\end{minipage}}
\end{cases}
\]

\paragraph{Vertex Cover, $\kappa$-Path Vertex Cover, Vertex Integrity, Feedback Vertex Number}

We also consider further measures used for sparse graphs to obtain a more detailed picture of the complexity.

\begin{definition}[Vertex cover number]
	Let $G = (V,E)$ be a graph. A set $U \subseteq V$ is a \emph{vertex cover} if for each edge $\{u,v\}\in E$ at least one endpoint is part of $U$. The \emph{vertex cover number} of $G$ is the minimum size of a vertex cover in~$G$.
\end{definition}

If $U$ is a vertex cover of a graph $G$, then the graph $G - U$ contains only isolated vertices, i.e., in $G- U$, there is no path with more than one vertex as a subgraph. The next parameter relaxes this restriction and allows longer paths to remain in $G-U$.

\begin{definition}[$\kappa$-path vertex cover number]
	Let $G = (V,E)$ be a graph and let $\kappa\in\N$ be a positive integer. A set $U \subseteq V$ is a \emph{$\kappa$-path vertex cover} ($\kappa$-pvc) if $G - U$ does not contain any path with $\kappa$ vertices as a subgraph.
	The \emph{$\kappa$-path vertex cover number} of $G$ is the minimum size of a $\kappa$-pvc of~$G$.
\end{definition}

We observe that a $2$-pvc of~$G$ is, in fact, equivalent to the vertex cover of~$G$.

\begin{definition}[Vertex integrity]
	Let $G = (V,E)$ be a graph and let $q \in \mathbb{N}$ be a positive integer. A set $U \subseteq V$ is a \emph{modulator to components of size~$q$} if $G - U$ does not contain any connected component with size at least $q+1$.
	The least $q$ such that there is a modulator $U$ to components of size~$q$ with $|U| \le q$, is the \emph{vertex integrity} of $G$.%
\end{definition}

In our hardness reductions, we also use the structural parameter called the \emph{feedback vertex number}, which is defined as follows.

\begin{definition}[Feedback vertex number]
	Let $G = (V,E)$ be a graph.  A set $U \subseteq V$ is a \emph{feedback vertex set of~$G$} if ${G - U}$ is a forest, i.e., if it does not contain any cycle as a subgraph. The \emph{feedback vertex number} of $G$ is the minimum size of a feedback vertex set of~$G$.
\end{definition}

\section{Basic Observations}

One can observe that setting $T_a=S_b$ and $T_b=S_a$ leads to a balanced outcome. Indeed, with this choice, both opinions begin spreading from the same set $S_a \cup S_b$, and since the thresholds are symmetric in the two opinions, the two diffusion processes coincide, and every agent ends up acquiring either both opinions or none. Therefore, the interesting task is to find smaller seed sets. In particular, without loss of generality, we can assume that $B < |S_a|+|S_b|$, as otherwise we can use the previous observation.

It is easy to note that we can always assume $S_a \cap S_b=\emptyset$ (unless $f_i$ is defined as some function of $\deg (v)$) as otherwise we can remove $S_a \cap S_b$ from the graph and subtract $|N(v) \cap S_a \cap S_b|$ from both $f_1(v)$ and $f_2(v)$ for each $v$ remaining in the graph.
This is formalized in the following reduction rule.

\begin{rrule}\label{rr:vertexWithBothOpinions}
	Let $(G,S_a,S_b,f_1,f_2,\fin,B)$ be an instance of \twoTSSshort.
	Suppose that there exists a vertex $v \in S_a \cap S_b$.
	We remove $v$ from~$G$ and define new instance~\[(G - v, S_a \setminus \{v\}, S_b \setminus \{v\}, \widehat{f}_1, \widehat{f}_2,\fin,B),\] where for $i = 1,2$ we have
	\(
	\widehat{f}_i(u) =
	\begin{cases}
		\max\{0,f_i(u)-1\}  & \text{if } u \in N(v) \\
		f_i(u)    & \text{otherwise.}
	\end{cases}
	\)
\end{rrule}

Correctness of this rule, that is, the equivalence of the original instance $(G,S_a,S_b,f_1,f_2,\fin,B)$ with the new instance $(G - v, S_a \setminus \{v\}, S_b \setminus \{v\}, \widehat{f}_1, \widehat{f}_2,\fin,B)$ is immediate.

We use the following lemma several times in the paper to bound the length of the activation process.
It is not hard to see that a vertex can gain an opinion $c$ in round $i>1$ only if in round $i-1$ either it gained the opinion $\neg c$ or some of its neighbors gained the opinion $c$. Repeating this argument and going from round $t$ to round $0$ we obtain a walk of length $t$ in $G$ with a loop added to each vertex. 

\begin{lemma}\label{lem:rounds_walk}
    Let $G$ be a graph, $\mathring{G}$ be obtained from $G$ by adding a loop to each vertex.
	Let $t \ge 1$ and suppose that there exist sets $P^0_a$, and~$P^0_b$ such that the activation process in~$G$ from these sets takes at least~$t$ rounds, i.e., $P^t_a \neq P^{t-1}_a$ or $P^t_b \neq P^{t-1}_b$.
	Then there is a walk $w$ of length~$t-1$ in~$\mathring{G}$ such that each vertex appears at most twice on $w$.	
\end{lemma}

\begin{proof}%
	The claim is trivial for $t=1$, let us assume that $t\ge 2$.
	Let $c_t \in \{a,b\}$ be such that $P^t_{c_t} \setminus P^{t-1}_{c_t} \neq \emptyset$.
	Let~$v^t$ be an arbitrary vertex in $P^t_{c_t} \setminus P^{t-1}_{c_t}$.
	For $i = t-1, \ldots, 1$, note that for each vertex $v \in P^{i+1}_{c_{i+1}} \setminus P^{i}_{c_{i+1}}$ we have $N(v) \cap \left( P^{i}_{c_{i+1}} \setminus P^{i-1}_{c_{i+1}} \right) \neq \emptyset$ or $v \in P^{i}_{\neg c_{i+1}} \setminus P^{i-1}_{\neg c_{i+1}}$.
	Now, if $N(v^{i+1}) \cap \left( P^{i}_{c_{i+1}} \setminus P^{i-1}_{c_{i+1}} \right) \neq \emptyset$, then let~$v^i$ be an arbitrary vertex in $N(v^{i+1}) \cap \left( P^{i}_{c_{i+1}} \setminus P^{i-1}_{c_{i+1}} \right)$ and $c_i=c_{i+1}$.
	Otherwise, let $v^i =v^{i+1}$ and $c_i=\neg c_{i+1}$.
	Each vertex appears in the sequence $v^1, \ldots, v^t$ at most twice, once for each opinion and
	 $v^1, \ldots, v^t$ is a walk of length $t-1$ in~$\mathring{G}$.	
\end{proof}

Next we use the above lemma to bound the length of the activation process in terms of the size of a modulator~$U$ to components of size~$q$. 

\begin{lemma}
    Let $I=(G,S_a,S_b,f_1,f_2,\fin,B)$ be an instance of \twoTSSshort such that the graph $G$ has a modulator~$U$ to components of size~$q$ and let $k = |U|$.
    If $\fin \ge (2q+1)(2k+1)$, then the instance is equivalent to instance $I'=(G,S_a,S_b,f_1,f_2,\fin',B)$, where $\fin'=(2q+1)(2k+1)$.
\end{lemma}

\begin{proof}
    Clearly, if $I'$ is a \yesI, then so is $I$.
    Conversely, if there is a solution $T_a, T_b$ for $I$ such that the activation process stabilizes within $\fin'$ rounds, then it is also a solution for $I'$.
    Hence, assume for the sake of contradiction that there is a solution $T_a, T_b$ for $I$ such that the activation process from these sets takes at least~$\fin'+1$ rounds.
    Let $\mathring{G}$ be obtained from $G$ by adding a loop to each vertex.
    By \Cref{lem:rounds_walk}, there is a walk $w$ of length at least~$\fin'+1-1=(2q+1)(2k+1)$ in~$\mathring{G}$ such that each vertex appears at most twice on $w$.
    Vertices of $U$ appear at most $2k$ times on $w$ in total.
    Hence, removing these vertices splits the walk into at most $2k+1$ parts, removing at most $4k$ edges.
    Let $w_1$ be a longest of these parts.
    Its length is at least 
    \[\left\lceil\frac{\fin'-4k}{2k+1}\right\rceil = \left\lceil\frac{(2q+1)(2k+1)-4k}{2k+1}\right\rceil = \left\lceil\frac{(2q-1)(2k+1)+4k+2-4k}{2k+1}\right\rceil \ge 2q,\] 
    i.e., it contains at least $2q+1$ vertex occurrences.
	As $w_1$ is a walk not containing any vertex of $U$, it must be contained in a single component $C$ of $\mathring{G} - U$. 
	However, as $C$ is of size at most $q$ and $w_1$ contains each vertex at most twice, we have a contradiction. 
	This finishes the proof.
\end{proof}

As $3$-pvc is also a modulator to components of size~2, we obtain the following corollary.

\begin{corollary}\label{cor:modulatorImpliesParametricBoundONTheNumberOfRounds}
	Let $I=(G,S_a,S_b,f_1,f_2,\fin,B)$ be an instance of \twoTSSshort.
	\begin{enumerate}
		\item If $G$ has a 3-pvc of size~$k$, then we may assume that $\fin \le 10 \cdot k + 5$.
		\item If $G$ has vertex integrity~$k$, then we may assume that $\fin \in \Oh{k^2}$.
	\end{enumerate}
\end{corollary}

Next, we discuss formal relation of our model to the original \TSS problem.

\begin{lemma}\label{lem:2otss_generalizes_tss}
	Every instance $I_\TSSshort$ of \TSS can be reduced to an equivalent instance $I_\twoTSSshort$ of \twoTSS in linear time.
\end{lemma}
\begin{proof}
	Let $I_\TSSshort = (G,f,B)$. We create an instance $I_\twoTSSshort=(G,S_a=\emptyset,S_b=V(G),f_1,f_2=f,\fin,B)$ of \twoTSS by taking the same network, setting $S_a = \emptyset$, $S_b = V(G)$, treating $f_1 (v) = 1$ and $f_2 (v) = f (v)$ for each agent $v\in V$, and letting $\fin=2|V(G)|$. We claim that $I_\TSSshort$ has solution if and only if the corresponding $I_\twoTSSshort$ instance has a solution.
	
	First, assume that the instance $I_\TSSshort$ has the solution $S\subseteq V$. By setting $T_a = S$ the activation process runs as follows. At the beginning, all agents obtain opinion $b$, and agents from $S$ obtain also the opinion $a$. The process continues with spreading the opinion over the network. Since all agents already have opinion $b$, only the second threshold $f_2$ is in use. Therefore, the \twoTSSshort activation process is reduced to the activation process of the original \TSSshort problem, which by assumption spreads the opinion over the network. We chose the set $S_b = V$ and at the end of the activation process all agents have both opinions and the solution is ${T_a = S, T_b = \emptyset}$.
	
	Next, let $T_a$, $T_b$ be a solution for $I_\twoTSSshort$. 
	Then the corresponding $I_\TSSshort$ instance has solution $S=T_a$ as the budget is not exceeded and the activation process stabilizes since $f \equiv f_2$ with opinion spread all over the network as in~$I_\twoTSSshort$.
	
	Finally, it is easy to see that the reduction runs in linear time, as we only set the seed-set $S_b$ to $V$ and update the threshold function value for each agent. Hence the proof is complete.
\end{proof}

The preceding lemma formally captures that the \twoTSSshort problem is indeed a generalization of the original \TSSshort problem. Consequently, all the hardness results for \TSSshort are directly carried over to our model. Thus, \twoTSS is \NPh, \Wcomplete[P] for the size of target-sets, and \Whard for the structural parameters the treewidth, the treedepth, and the modularwidth. 
Conversely, algorithmic results for the more general \twoTSSshort{} problem carry over to \TSSshort.

In \Cref{sec:3pvc,sec:vi}, we use \Cref{lem:2otss_generalizes_tss} to give an \FPT algorithm for \TSSshort if we parameterize with the $3$-path vertex cover number and the vertex integrity of the underlying graph, respectively.

\section{Positive Results}

\subsection{Vertex Cover}\label{sec:vc}

In this section, we show that \twoTSSshort is \FPT with respect to the vertex cover number of the underlying graph.  
For the rest of this subsection, let~$U$ be a minimum size vertex cover in~$G$ and let $k=|U|$.

Our algorithm is based on a couple of reduction rules that either solve the instance directly or adjust the instance so that once no reduction rule can be applied, we end up with an equivalent instance $\mathcal{I'}$ such that the remaining budget~$B$ in $\mathcal{I'}$ is at most~$2k$. We always apply the first applicable reduction rule (including \Cref{rr:vertexWithBothOpinions}) exhaustively and in the order in which they appear.

We recall that \Cref{rr:vertexWithBothOpinions} removes all agents that initially have both opinions. The next rule adds to the solution all vertices that already have one opinion and cannot obtain the second opinion via the standard activation process.

\begin{rrule}\label{rr:vertexInSNeverGetSecondOpinion}
	Let $(G,S_a,S_b,f_1,f_2,\fin,B)$ be an instance of \twoTSSshort.
	Suppose that there is a vertex $v \in S_c$ for $c \in \{a,b\}$ with $f_2(v) > \deg(v)$.
	We return the instance $(G,S_a\cup\{v\},S_b\cup\{v\},f_1,f_2,\fin,B-1)$.
\end{rrule}
\begin{proof}[Safeness]
	Suppose that $v \in S_a$ (the other case follows by a symmetric argument).
	Since \Cref{rr:vertexWithBothOpinions} is not applicable, we get $v \notin S_b$.
	Observe that $f_2(v) > \deg(v)$ implies that if $v \notin T_b$, then $v \notin P^{\fin}_b$ --- thus the process is not balanced.
	Consequently, $v \in T_b$ must hold for every solution of the given instance. The only way to achieve this is to add $v$ to $T_b$, and we have to decrease the budget by $1$ for this operation.
	Finally, note that it does not matter whether we put~$v$ in $T_b$ or $S_b$.
	We see that the two instances are equivalent.
\end{proof}

The third rule bounds the maximum value of the threshold function for each agent.

\begin{rrule}\label{rr:vertexNotInVCfValueReduction}
	Let $(G,S_a,S_b,f_1,f_2,\fin,B)$ be an instance of \twoTSSshort and suppose that there exists a vertex $v\in V$ with $f_i(v) > \deg(v) + 1$ for any $i\in\{1,2\}$.
	Then we return $(G,S_a,S_b,\widehat{f}_1,\widehat{f}_2,\fin,B)$, where for $i\in\{1,2\}$ we have
	\mbox{$ \widehat{f}_i(v) = \min\left\{f_i(v),\deg(v)+1\right\}$} and $\widehat{f}_i(u) = f_i(u)$ for every $u\in V\setminus\{v\}$.
\end{rrule}
\begin{proof}[Safeness]
	Suppose that $v\in V$ is such that $f_1(v) \geq \deg(v) + 1$. Since the instance is reduced with respect to \Cref{rr:vertexInSNeverGetSecondOpinion}, we have $v\notin S_a\cup S_b$. Regardless of the particular value of $f_1(v)$, it is true that $v$ will never gain the first opinion by the activation process, because there are not enough neighbors. This property remains unchanged even when the value of $f_1(v)$ is decreased to $\deg(v)+1$. The same argument works for the second opinion.
\end{proof}

Finally, the last reduction rule of this section directly solves a given instance if the budget is big enough. Roughly speaking, if the budget allows us to make initial seed sets equal, then only the first threshold is applied, and the outcome is always balanced.

\begin{rrule}\label{rr:boundBudgetByVCNumber}
	Let $I = (G,S_a,S_b,f_1,f_2,\fin,B)$ be an instance of \twoTSSshort and $U$ be a minimum size vertex cover in $G$.
	If $B\geq|U\setminus S_a|+|U\setminus S_b|$, then output a trivial \yesI.
\end{rrule}
\begin{proof}[Safeness]
	Let $B \ge |U\setminus S_a|+|U\setminus S_b|$. Then we set additional seed sets $T_a,T_b$ to $T_a = U\setminus S_a$ and $T_b = U\setminus S_b$ respectively. After this initialization, every vertex $u\in U$ already has both opinions. The remaining vertices either have one opinion or no opinion at all.
	
	We denote by $v$ a vertex with a single opinion. It is clear that $v\notin U$ and $v \in S_a$ or $v \in S_b$. Because we cannot apply \Cref{rr:vertexInSNeverGetSecondOpinion} on $I$, we have that $f_2(v)$ is at most $\deg(v)$. However, we know that all neighbors of $v$ have both opinions already, and due to this, the vertex~$v$ gets a second opinion in the next round of the activation process.
	
	The last case not yet discussed is when the vertex $w$ has no opinion at all. For such vertices, we have $w\notin U$ and $w \notin S_a\cup S_b$. Since we cannot apply \Cref{rr:vertexNotInVCfValueReduction} we know that $f_1(w) \leq \deg(w) + 1$. On the one hand, when the inequality is strict, then $w$ gets both opinions by a similar argument as in the previous case. On the other hand, when $f_1(w) = \deg(w) + 1$, then $w$ remains without any opinion at all.
\end{proof}

Assuming that the instance is reduced with respect to the previously defined reduction rules, we can employ a brute-force-like algorithm to show the desired result.

\begin{theorem}\label{thm:twoTSSIsFPTWrtVC}
	\twoTSS is fixed-parameter tractable parameterized by the vertex cover number $k$.
\end{theorem}
\begin{proof}
	Let $(G,S_a,S_b,f_1,f_2,\fin,B)$ be an instance of \twoTSSshort, $k$ be the vertex cover number of~$G$, and~$U$ be a $k$-sized vertex cover.
	We assume that the input instance is reduced with respect to the presented reduction rules.
	Consequently, since \Cref{rr:boundBudgetByVCNumber} cannot be applied, we have $B < 2k$.
	
	We define an equivalence relation~$\sim$ on $V\setminus U$ such that for $u,v\in V\setminus U$ we have $u\sim v$ if and only if $N(u)=N(v)$, and for $i=1,2$ we have $f_i(u) = f_i(v)$.
	The cardinality of the quotient set $(V\setminus U)/{\sim}$ is at most $2^k\cdot(k+1)\cdot(k+1)$ as we have $2^k$ possible different neighborhoods and $k+1$ possible values for each of $f_1$ and $f_2$. Together with the vertex cover vertices $U$ we have
	$k+2^k\cdot(k+1)\cdot(k+1)$ different types of vertices from which we must select at most $\min\{B,k\}$ vertices into the seed set~$T_a$ and at most $\min\{B,k\}$ into~$T_b$. That is, at most $2k$ vertices with at most $k+2^k\cdot(k+1)^2+1 \le 2^k\cdot(k+2)^2$ options each, giving us at most $\left(2^k(k+2)^2\right)^{2k} =2^{2k^2 + \Oh{k\log k}}$
	options in total. For each option, we add the guessed vertices into $T_a$ and $T_b$, respectively, and simulate the activation process. If the activation process results in a balanced outcome for at least one option, we return \emph{yes}; otherwise, the result is \emph{no}. The simulation can be done in quadratic time, and since we try all possible sets of additional seed sets, the algorithm is correct. Overall, the algorithm runs in $2^{2k^2 + \Oh{k\log k}}\cdot n^2$ time, that is, \twoTSSshort is in \FPT when parameterized by the vertex cover number.
\end{proof}

We note here that the best algorithm for the single opinion \TSSshort parameterized by the vertex cover number $k$, which is due to \citet{BanerjeeMP22}, runs in $2^\Oh{k\log k}\cdot n^\Oh{1}$ time. We leave as an open question whether our algorithm for more general \twoTSSshort can be improved to match the running time of \TSSshort, or whether our running time is asymptotically optimal under the standard theoretical assumptions, such as the Exponential-Time Hypothesis~\cite{ImpagliazzoP01}.

\subsection{Three-Path Vertex Cover}\label{sec:3pvc}

In the previous section, we parameterized by the vertex cover number to obtain a feasible algorithm.  Next, we generalize the result and use the $3$-path vertex cover number as a parameter to obtain \FPT{} algorithm for \twoTSSshort. Thus, suppose that the input graph has a $3$-pvc~$U$ with size bounded by the parameter~$k$. Note that $3$-pvc is also a modulator to components of size~2. 

First, note that \Cref{cor:modulatorImpliesParametricBoundONTheNumberOfRounds} allows us to first guess for every $u \in U$ and each opinion~$c \in \{a,b\}$ the round~$r_c(u)$ in which~$u$ receives~$c$ (or~$\infty$ if this never happens), as there are $(10k+5)^{2k}$ possible options in total.
However, we only consider guesses where for each $u \in S_c \cap U$ we have $r_c(u)=0$.
Next, we apply \Cref{rr:vertexWithBothOpinions,rr:vertexInSNeverGetSecondOpinion,rr:vertexNotInVCfValueReduction} exhaustively.
We observe that \Cref{rr:vertexNotInVCfValueReduction} yields for $v \in X$ (where $X = V(G) \setminus U$) that
\(
\widehat{f}_i(v) \le \deg(v)+1 \le k+2 \,,
\)
since every component of $G \setminus U$ has size at most two.

Our algorithm is based on the so-called $N$-fold integer programming. 
$N$-fold integer programming ($N$-fold IP) is the problem of minimizing a separable convex objective (for us it suffices to minimize a linear objective) over a set of constraints of a specific form.
Let $r \in \N$ and $s_i, t_i  \in \N$ for every $i \in [N]$.
The IP has $d = \sum_{i \in [N]} t_i$ variables partitioned into $N$ so-called \emph{bricks}.
The $i$-th brick is denoted $x^{(i)}$ and contains $t_i$ variables.
The constraints have the following form:
\begin{align}
D_1 x^{(1)} + D_2 x^{(2)} + \cdots + D_N x^{(N)} &= \bm{b}_0    \label{eq:NFold:linking}  \\
A_i x^{(i)}                        &= \bm{b}_i    & \forall i \in [N]    \label{eq:NFold:local}\,   \\
\bm{l}_i \le x^{(i)}                 &\le \bm{u}_i  & \forall i \in [N]    \label{eq:NFold:box},
\end{align}
where we have $D_i \in \Z^{r \times t_i}$, $A_i \in \Z^{s_i \times t_i}$, $\bm{b}_0 \in \Z^{r}$, $\bm{b}_i \in \Z^{s_i}$ and $\bm{l}_i,\bm{u}_i \in \Z^{t_i}$ for every $i \in [N]$. 
Let us denote $s = \max_{i \in [N]} s_i$ and recall that the dimension is $d = \sum_{i \in [N]} t_i$.
The constraints~\eqref{eq:NFold:linking} are the so-called \emph{linking constraints}, followed by the \emph{local constraints}~\eqref{eq:NFold:local}, and the \emph{box constraints}~\eqref{eq:NFold:box}.
That is, there are $r$ linking constraints and, apart from these, each variable is only involved in at most $s$ local constraints and one box constraint.
We call variables with a coefficient $0$ in all linking constraints \emph{local}.

The current best algorithm solving the $N$-fold IP in \( (rs\Delta)^{\Oh{r^2s+rs^2}} d\log d \log \operatorname{val}_{\max}\) time is by \citet{EisenbrandHKKLO19} (see also \citet{EisenbrandHK18,KouteckyLO18}), where \mbox{$\Delta = \max_{i \in [N]} \Big( \max \big( \|D_i\|_\infty, \|A_i\|_\infty \big) \Big)$} and $\operatorname{val}_{\max}$ is the maximum feasible value of the objective.

\paragraph*{$N$-Fold IP for a Guess}
Suppose now that we are given a fixed guess $(r_c(u))_{c \in \{a,b\}, u \in U}$ of the activation rounds for all vertices in $U$.
Recall that for $u \in S_a$ we must have $r_a(u)=0$ and for $u \in S_b$ we must have $r_b(u)=0$, otherwise the guess is incorrect.
By \Cref{cor:modulatorImpliesParametricBoundONTheNumberOfRounds} we may assume that $\fin \le {10\cdot k + 5}$.
That is, for each $u \in U$ we are given the activation rounds $r_a(u)$ and $r_b(u)$.
We design an $N$-fold IP with binary variables $x^{c,t}_v$ for $t \in \{ 0,\ldots, \fin, \fin+1 \}$, $c \in \{ a,b \}$, and $v \in V$ (later we will add some auxiliary variables).
The intuition is that a variable $x^{c,t}_v$ is set to~$1$ if and only if the vertex $v \in V$ has opinion~$c$ in round~$t$.
We stress here that the values of these variables for vertices in~$U$ are already fixed by our guess above (that is, these are already fully determined by the box constraints).
Similarly, we require that $x^{c,0}_v = 1$ for every $v \in S_c$.
As we required that $r_c(u)=0$ whenever $u \in S_c \cap U$, this is consistent.

The intuition is that we will impose the constraints for each vertex $v \in V$ and an opinion~$c \in \{ a,b \}$ in such a way that we secure that
\begin{enumerate}
	\item it is impossible for $v$ to receive $c$ prior or in the round $t$ if $x^{c,t}_v = 0$ and
	\item $v$ receives $c$ at latest in the round $t$ if $x^{c,t}_v = 1$.
\end{enumerate}
Note that this implies that $x^{c,t}_v$ indeed expresses whether $v$ has $c$ in round $t$.
For vertices in the set $U$ we use the linking constraints for this purpose whereas for the vertices in $X$ we use the local constraints.
We discuss the box constraints for each variable separately (if we say a variable~$x$ is binary, we assume we have added the constraints $0 \le x \le 1$).
Note that this is in compliance with our knowledge about each vertex, which is somewhat orthogonal for these groups of vertices.
For a vertex $u \in U$, we know (by the guess) the exact round in which $u$ receives each opinion, but we do not know which neighbors are activated before this happens.
For a vertex $v \in V \setminus U$ we know which neighbors in $U$ have which opinion prior to $v$ receiving it (note that this amounts to all but at most one neighbor of~$v$) but we do not know the round in which $v$ receives its opinions (or if it actually happens).

\paragraph{Working with (Local) Expressions in IP}
We often use the notation $[\expr]$ that evaluates to $1$ if $\expr$ holds and evaluates to $0$ if not and introduces a new auxiliary variable $z^{[\expr]}$ (to store the result).
To illustrate this, suppose we want to evaluate the expression $[x^{a}+ x^{b} = 0]$ for two binary variables $x^{a}$ and $x^{b}$.
First, we add an auxiliary binary variable~$z^{ab} = z^{[x^{a}+ x^{b} = 0]} = [x^{a}+ x^{b} = 0]$ to store the result---as we shall see this requires; see, e.g., \cite[Proof of Theorem~3.4]{KnopKM20}
\begin{itemize}
	\item coefficients bounded by~$1$ (this is thanks to the fact that $x^{a}$ and $x^{b}$ are binary, i.e., this value depends on the size of the domain of the variables) and
	\item $\Oh{1}$ additional constraints.
\end{itemize}
To that end, we add the following constraints.
\begin{align}
	z^{ab} \le 1 - x^{a} \quad\text{and}\quad z^{ab} \le 1 - x^{b} \label{eq:binaryexpr:lowerbounds} \\
	z^{ab} \ge 1 - x^{a} - x^{b} \label{eq:binaryexpr:upperbound}
\end{align}
The equivalence of the two is rather straightforward to verify; see~\Cref{tab:binaryexpr:values}.
It is worth noting that if $x^{a}$ and $x^{b}$ are from the same brick, then all the introduced constraints are local.
\begin{table}
    \begin{center}
	\begin{tabular}{cccccc}
		$x^{a}$ & $x^{b}$ & $z^{ab}$ & $1-x^{a}$ & $1-x^{b}$ & $1-x^{a}-x^{b}$ \\ \hline
		0 & 0 & 1 & 1 & 1 & 1\\
		1 & 0 & 0 & 0 & 1 & 0\\
		0 & 1 & 0 & 1 & 0 & 0\\
		1 & 1 & 0 & 0 & 0 & $-1$
	\end{tabular}
	\end{center}
 	\caption{%
		The table of values for $z^{ab} = z^{[x^{a}+ x^{b} = 0]} = [x^{a}+ x^{b} = 0]$.
		Note that rows where $z^{ab} = 0$ fulfill that at least one of the conditions \eqref{eq:binaryexpr:lowerbounds} evaluates the right-hand side to~$0$ while the right-hand side of \eqref{eq:binaryexpr:upperbound} evaluates to at most~$0$.
		For the first row, the right-hand side of all the conditions \eqref{eq:binaryexpr:lowerbounds} and \eqref{eq:binaryexpr:upperbound} evaluates to~$1$.
	}\label{tab:binaryexpr:values}
\end{table}

In our $N$-fold ILP, we use another ``trick''---it is possible to multiply binary variables (i.e., one can rewrite such an expression to linear constraints).
Again, let us illustrate this on $z^{ab} = x^{a} \cdot x^{b}$.
We add the following constraints; see \Cref{tab:multiply:values}.
\begin{align}
	z^{ab} \le x^{a} \quad\text{and}\quad z^{ab} \le x^{b} \label{eq:multiply:lowerbounds} \\
	z^{ab} \ge x^{a} + x^{b} - 1 \label{eq:multiply:upperbound}
\end{align}
It is not hard to see that one can extend this approach for more than two variables.
Again, if both $x^{a}$ and $x^{b}$ are from the same brick, then all the introduced constraints are local.
\begin{table}
        \begin{center}
	\begin{tabular}{cccc}
		$x^{a}$ & $x^{b}$ & $z^{ab}$ & $x^{a} + x^{b}-1$ \\ \hline
		0 & 0 & 0 & $-1$ \\
		1 & 0 & 0 & 0 \\
		0 & 1 & 0 & 0 \\
		1 & 1 & 1 & 1
	\end{tabular}
	\end{center}
	\caption{%
	The table of values for $z^{ab} = x^{a} \cdot x^{b}$.
	Note that rows where $z^{ab} = 0$ fulfill that at least one of the conditions \eqref{eq:multiply:lowerbounds} evaluates the right-hand side to~$0$ while the right-hand side of \eqref{eq:multiply:upperbound} evaluates to at most~$0$.
	For the last row, the right-hand side of all the conditions \eqref{eq:multiply:lowerbounds} and \eqref{eq:multiply:upperbound} evaluates to~$1$.
	}\label{tab:multiply:values}
\end{table}

\paragraph{Linking (Global) Constraints}
First, we describe the linking constraints of the constructed $N$-fold IP, i.e., the constraints for vertices $u \in U$.
Recall that for these vertices we know the values $r_a(u)$ and $r_b(u)$; therefore, these conditions are easier to formulate.
We first describe the types of constraints that we will use, and then we explain which of them we use based on case distinction on $r_c(u)$.
We will use the following constraint types:
\begin{itemize}
 \item For $1 \le r_c(u) < \infty$, the following constraint ensures that $u$ has enough neighbors to gain opinion $c \in \{a,b\}$ in the round $r_c(u)$ provided that it did not have any opinion in round $r_c(u)-1$:
        \begin{equation}
			\sum_{v \in N(u)} x^{c,r_c(u) - 1}_v \ge f_1(u) \label{eq:twotssNFold:globalLowerbound:case1} \,.
		\end{equation}
  \item For $2 \le r_c(u) < \infty$, the following constraint ensures that $u$ does not have enough neighbors to gain opinion $c \in \{a,b\}$ in the round prior to $r_c(u)$ provided that it did not have any opinion in round $r_c(u)-2$:		
        \begin{equation}
			\sum_{v \in N(u)} x^{c,r_c(u) - 2}_v \le f_1(u)-1 \label{eq:twotssNFold:globalUpperbound:case1} \,.
		\end{equation}
  	\item For $r_c(u)= \infty$, the following constraint ensures that $u$ does not have enough neighbors to gain opinion $c \in \{a,b\}$ even in round $\fin+1$ provided that it did not have any opinion in round $\fin$:		
        \begin{equation}
			\sum_{v \in N(u)} x^{c,\fin}_v \le f_1(u)-1 \label{eq:twotssNFold:globalUpperbound:case2} \,.
		\end{equation}	
	\item For $r_a(u) < r_b(u) < \infty$ the following constraint ensures that $u$ actually has at least $f_2(u)$ neighbors with~$b$ in the round $r_b(u)$ and, therefore, gains the opinion $b$:
	    \begin{equation}
            \sum_{v \in N(u)} x^{b,r_b(u) - 1}_v \ge f_2(u) \label{eq:twotssNFold:globalLowerbound:case2} \,.
        \end{equation}
    \item For $1 \le r_a(u) < r_b(u) < \infty$ the following constraint ensures that~$u$ does not acquire~$b$ before or together with~$a$:
        \begin{equation}
			\sum_{v \in N(u)} x^{b,r_a(u) - 1}_v \le f_1(u)-1 \label{eq:twotssNFold:globalUpperbound:case3} \,.
		\end{equation}  
	\item Finally, for $r_a(u)+2 \le r_b(u) < \infty$ the following constraint ensures that $u$ does not gain opinion $b$ prior to round $r_b(u)$, that is, it should not have enough neighbors with opinion~$b$:
		\begin{equation}
			\sum_{v \in N(u)} x^{b,r_b(u) - 2}_v \le f_2(u)-1 \label{eq:twotssNFold:globalUpperbound:case4} \,.
		\end{equation}
\end{itemize}
Equipped with these constraints, for each vertex $u \in U$ we do the following case distinction (we only list the cases with $r_a(u) \le r_b(u)$, the others are symmetric):
\begin{description}
	\item[$r_a(u) = r_b(u)=0$:] This case requires no constraints.
	\item[$r_a(u) = r_b(u)=1$:] In this case, we use for both $a$ and $b$ constraint \eqref{eq:twotssNFold:globalLowerbound:case1}.
	\item[$2\le r_a(u) = r_b(u) < \infty$:] In this case, we use for both $a$ and $b$ constraints \eqref{eq:twotssNFold:globalLowerbound:case1} and \eqref{eq:twotssNFold:globalUpperbound:case1}.
	\item[$r_a(u) = r_b(u) = \infty$:] In this case, we use for both $a$ and $b$ constraint \eqref{eq:twotssNFold:globalUpperbound:case2}.
	\item[$r_a(u) = 0$, $r_b(u) \ge 1$:] No constraints are required for $a$.
	We add constraint \eqref{eq:twotssNFold:globalLowerbound:case2} for $b$. 
	If $2 \le r_b(u) < \infty$, then we also add constraint \eqref{eq:twotssNFold:globalUpperbound:case4} for $b$.
	\item[$1 = r_a(u) < r_b(u) < \infty$:] We add constraint \eqref{eq:twotssNFold:globalLowerbound:case1} for~$a$.
	We add constraints \eqref{eq:twotssNFold:globalLowerbound:case2} and \eqref{eq:twotssNFold:globalUpperbound:case3} for~$b$. 
	If $r_b(u) \ge 3$, then we also add constraint \eqref{eq:twotssNFold:globalUpperbound:case4} for~$b$.
	\item[$2 \le r_a(u) < r_b(u) < \infty$:] We add constraints \eqref{eq:twotssNFold:globalLowerbound:case1} and~\eqref{eq:twotssNFold:globalUpperbound:case1} for~$a$.
	We add constraints \eqref{eq:twotssNFold:globalLowerbound:case2} and~\eqref{eq:twotssNFold:globalUpperbound:case3} for~$b$. 
	If $r_a(u)+2 \le r_b(u)$, then we also add constraint \eqref{eq:twotssNFold:globalUpperbound:case4} for~$b$.
\end{description}
Note that with these constraints we specify for each vertex that the round $r_c(u)$ is the round when for the first time the appropriate threshold is met.
We do not consider the case $r_a(u) < \infty$ and $r_b(u)= \infty$ (or vice-versa) as this does not lead to a balanced outcome.
Note that the total number of these constraints is~$\Oh{k}$ since there are at most five for each vertex $u\in U$.
That is, the parameter~$r$ of the $N$-fold IP fulfills~$r \in \Oh{k}$.

\paragraph{Local Constraints}
We again want to set the lower- and upperbounds on the number of neighbors with a particular opinion for each vertex.
Note that this time we have to do this ``on the fly'' as we do not know the rounds $r_a(v)$ and $r_b(v)$ for the vertices $v \in V \setminus U$.
In order to do so, we define for a vertex $v \in X$, ${t \in \{ 0,\ldots, \fin+1 \}}$, and $c \in \{ a,b \}$ the value~$\varphi^{c,t}_v$ expressing the number of vertices~$u \in N(v) \cap U$ for which we have guessed $r_c(u) \le t$.
Note that from the view of the IP, this is not a variable, but a constant (depending only on the guess). 
Now, for a value $f \in \N$ we add an auxiliary local binary variable~$z^{c,t}_v(f)$ with the following behavior:
\begin{equation}\label{eq:twotssNFold:settingZ}
z^{c,t}_v(f) =
	\begin{cases}
		\left[ x^{c,t-1}_w + \varphi^{c,t-1}_v \ge f \right] & \text{if } N(v)\setminus U = \{w\}\text{ for some }w \\[\medskipamount]
		\left[ \varphi^{c,t-1}_v \ge f \right] & \text{otherwise.} \\
	\end{cases} \,
\end{equation}
Note that this constraint is only needed when~$f - \varphi^{c,t-1}_v \in \{0,1\}$ (otherwise, we can directly set~$z^{c,t}_v(f)$ to either~$0$ or~$1$).
Consequently, we only need~\eqref{eq:twotssNFold:settingZ} when it translates to $[x^{c,t-1}_w]$; in any case, this is a linear constraint as it directly translates to $z^{c,t}_v(f) = x^{c,t-1}_w$.
Note that in certain sense variables $z^{c,t}_v(f)$ only serves as a ``syntactic sugar''.
Now, we require the following.
\begin{align}
	x^{c,t}_v &\ge z^{c,t}_v(f_1(v)) \cdot \left[ x^{a,t-1}_v + x^{b,t-1}_v = 0 \right] \label{eq:twotssNFold:boundingXT:1} \\
	x^{c,t}_v &\ge z^{c,t}_v(f_2(v)) \cdot \left[ x^{\neg c,t-1}_v \right] \label{eq:twotssNFold:boundingXT:2} \\
	x^{c,t}_v &\ge x^{c,t-1}_v \label{eq:twotssNFold:boundingXT:3} \\
	x^{c,t}_v &\le z^{c,t}_v(f_1(v)) \cdot \left[ x^{a,t-1}_v + x^{b,t-1}_v = 0 \right] + z^{c,t}_v(f_2(v)) \cdot \left[ x^{\neg c,t-1}_v \right] + x^{c,t-1}_v \label{eq:twotssNFold:boundingXT:4}
\end{align}
It is not hard to see that these constraints can be made linear using~$\Oh{1}$ additional constraints and auxiliary variables (while keeping the largest coefficient at most~$2$).

Note that due to the condition~\eqref{eq:twotssNFold:boundingXT:3} we have that the variables $x^{c,t}_v$ can only be $0$ up to certain round and must be $1$ from this round on (if they ever evaluate to $1$).
Observe that at least one right-hand side of conditions \eqref{eq:twotssNFold:boundingXT:1} and \eqref{eq:twotssNFold:boundingXT:2} evaluates to~$0$ for every value of $t$.
Let us now proceed case by case:
\begin{enumerate}
	\item
	If $v$ does not have any of the opinions, then $\left[ x^{a,t-1}_v + x^{b,t-1}_v = 0 \right]$ evaluates to~$1$ and therefore the whole right-hand side of \eqref{eq:twotssNFold:boundingXT:1} evaluates to $1$ if and only if $v$ has at least $f_1(v)$ neighbors with opinion~$c$; thus forcing $x^{c,t}_v = 1$.
	Otherwise, it evaluates to~$0$ imposing no restriction on the value of $x^{c,t}_v$.
	\item 
	Suppose $v$ already has opinion $b$.
	Then $\left[ x^{b,t-1}_v \right]$ evaluates to~$1$ and therefore the whole right-hand side of \eqref{eq:twotssNFold:boundingXT:2} evaluates to $1$ if and only if $v$ has at least $f_2(v)$ neighbors with opinion~$a$; thus forcing $x^{a,t}_v = 1$.
	Otherwise, it evaluates to~$0$ imposing no restriction on the value of $x^{a,t}_v$.
\end{enumerate}
Let us now investigate the condition~\eqref{eq:twotssNFold:boundingXT:4}.
Note that the right-hand side of~\eqref{eq:twotssNFold:boundingXT:4} evaluates to~$1$ if and only if at least one right-hand side of~\eqref{eq:twotssNFold:boundingXT:1}--\eqref{eq:twotssNFold:boundingXT:3} evaluates to $1$.
That is, if none of these evaluates to~$1$ (they impose no restriction on the binary variable $x^{c,t}_v$), then this condition ensures it evaluates to~$0$; otherwise, it does not impose any restriction on the value of $x^{c,t}_v$ as $x^{c,t}_v$ is binary.

We require that the process is stable after $\fin$ rounds.
In order to do that we add constraints for the round $\fin+1$ and require that no vertex gained a new opinion in that round.
\begin{equation}
	x^{c,\fin}_v = x^{c,\fin+1}_v \qquad \forall v \in V,\, \forall c \in \{a,b\} \label{eq:twotssNFold:ProcessStabilizes}
\end{equation}
Note that this is again a set of local constraints even for vertices in the modulator~$U$ as these are independent for individual vertices.

Finally, we have to require the following---imposing that when the process stabilizes each vertex has either acquired both $a$ and $b$ or none of them.
\begin{equation}
	x^{a,\fin}_v = x^{b,\fin}_v \qquad \forall v \in X \label{eq:twotssNFold:FinalAEqualsB}
\end{equation}
Note that the local constraints are independent for any nonadjacent vertices, that is, each brick $x^{(i)}$ of variables consists of all variables $x^{c,t}_v$, where either $v \in C_i$ for some connected component $C_i$ of $G \setminus U$ or all of them are related to the same vertex $v \in U$.
As each connected component of $G \setminus U$ is of size at most $2$ and $\fin \in \Oh{k}$, the parameter~$s$ of the $N$-fold IP fulfills~$s \in \Oh{k}$.

It is not difficult to verify that the parameter~$t$ of the $N$-fold IP fulfills~$t \in \Oh{k}$.

\paragraph*{Objective Value}
If we now minimize
\(
	\sum_{v \in V} \left( x^{a,0}_v + x^{b,0}_v \right) \,,
\)
we get the size of the smallest solution which is compatible with our initial guess.
Thus, we may verify whether the minimum overall value is at most the budget~$B$ (for the fixed guess).

\begin{theorem}\label{thm:3-pvc}
	Let $G$ be a graph with $3$-pvc number~$k$.
	There is a $k^\Oh{k^3} \cdot n\log^2 n$ time algorithm that solves \twoTSS.
\end{theorem}
\begin{proof}
	For each guess of activation rounds for the vertices in a 3-pvc $U$, we build an \mbox{$N$-fold~IP}.
	Each $N$-fold IP can be solved in time $k^\Oh{k^3} \cdot n \log^2 n$ by the result of \citet{EisenbrandHKKLO19}, as we have already observed that we have~$r,s \in \Oh{k}$ for the parameters of the $N$-fold IP and the maximum feasible value of the objective is $2n$.
	Since we generate $k^{\Oh{k}}$ such IPs, the upper bound on the total running time of our algorithm follows.
	
	Now we show the correctness.
	
	Let us first assume~$(G, S_a, S_b, f_1, f_2, \fin, B)$ is a \yesI of \twoTSS.
	Thus, there are sets~$T_a, T_b$ with $|T_a| + |T_b| \le B$ and a corresponding activation process $\mathcal{P}$.
	Our aim is to show that at least one IP of the generated IPs is feasible and its optimum is at most~$B$.
	Consider the guess, where for every $u \in U$ and every $c \in \{a,b\}$ we have $r_c(u) = 0$ if $u \in S_c \cup T_c$, $r_c(u) = \infty$ if $u \notin P^{\fin}_c$, and $r_c(u)$ is such that $u \notin P^{r_c(u)-1}_c$ and $u \in P^{r_c(u)}_c$ otherwise.
	
	For every vertex $v \in V$, $c \in \{a,b\}$ and $t \in \{0,\ldots,\fin+1\}$, we let
	\begin{equation}\label{eq:twotssNFold:settingXFromActivationProcess}
	x^{c,t}_v = \left[ v \in P^t_c \right].
	\end{equation}
	From that we get that 
	\[
	\varphi^{c,t}_v = \left|\{ u \in N(v) \cap U \mid r_c(u) \le t\}\right| =  \left| N(v) \cap U \cap P^t_c \right|.
	\]
	Since $\mathcal{P}$ stabilizes in at most $\fin$ rounds, conditions \eqref{eq:twotssNFold:ProcessStabilizes} are fulfilled.
	Note that this allows us to define the variables $z^{c,t}_v(f)$ for $v \in X$ according to~\eqref{eq:twotssNFold:settingZ}.
	Now, our aim is to prove that, under this assignment of variables, all constraints are fulfilled.
	First, we observe that~\eqref{eq:twotssNFold:FinalAEqualsB} holds for all $v \in X$, which follows from the fact that $v \in P^{\fin}_a$ if and only if $v \in P^{\fin}_b$.
	Verifying the appropriate subset of global constraints~\eqref{eq:twotssNFold:globalLowerbound:case1}--\eqref{eq:twotssNFold:globalUpperbound:case4} is rather straightforward, since the left-hand sides of these constraints evaluate the number of neighbors a vertex~$v \in U$ has in $P^{r_c(v)-2}_c$, $P^{r_c(v)-1}_c$, $P^{\fin}_c$, or $P^{r_{\neg c}(v)-1}_c$, respectively.
	It remains to verify~\eqref{eq:twotssNFold:boundingXT:1}--\eqref{eq:twotssNFold:boundingXT:4} which we do by case distinction.
	\begin{enumerate}
		\item
		Suppose $v \in P^{t-1}_c$; thus $x^{c, t-1}_v = x^{c,t}_v = 1$.
		Trivially, we get that~\eqref{eq:twotssNFold:boundingXT:1}--\eqref{eq:twotssNFold:boundingXT:3} are fulfilled.
		Finally,~\eqref{eq:twotssNFold:boundingXT:4} holds, since $x^{c,t-1}_v = 1$ and thus the right-hand side evaluates to (at least)~$1$.
		\item
		Suppose $v \notin P^{t}_a \cup P^{t}_b$.
		We immediately see that the right-hand sides of~\eqref{eq:twotssNFold:boundingXT:2} and \eqref{eq:twotssNFold:boundingXT:3} evaluate to~$0$.
		It holds that~$[x^{a,t-1}_v + x^{a,t-1}_v = 0] = 1$, since $v \notin P^{t-1}_a \cup P^{t-1}_b$ (i.e., ${x^{a,t-1}_v = x^{b,t-1}_v = 0}$).
		However, since $v \notin P^t_c$, we get $z^{c,t}_v(f_1(v)) = 0$ by~\eqref{eq:twotssNFold:settingZ}.
		Thus, from~\eqref{eq:twotssNFold:boundingXT:1}--\eqref{eq:twotssNFold:boundingXT:3} we get a (void) constraint $x^{c,t}_v \ge 0$.
		Furthermore, since the right-hand side of~\eqref{eq:twotssNFold:boundingXT:4} is a sum of the right-hand sides of~\eqref{eq:twotssNFold:boundingXT:1}--\eqref{eq:twotssNFold:boundingXT:3}, we get $x^{c,t}_v \le 0$; which is fulfilled if (and only if) $x^{c,t}_v = 0$.
		\item
		Suppose $v \notin P^{t-1}_a \cup P^{t-1}_b$ but $v \in P^t_a$ (the case $v \in P^t_b$ follows from symmetry); that is, $x^{a,t-1}_v = x^{b,t-1}_v = 0$ and consequently $[x^{a,t-1}_v + x^{a,t-1}_v = 0] = 1$.
		We get $z^{a,t}_v(f_1(v)) = 1$ (which is if and only if $|P^{t-1}_a \cap N(v)| \ge f_1(v)$ by~\eqref{eq:twotssNFold:settingZ}).
		Thus, the right-hand side of~\eqref{eq:twotssNFold:boundingXT:1} evaluates to~$1$.
		It is not hard to see that the right-hand sides of \eqref{eq:twotssNFold:boundingXT:2} and \eqref{eq:twotssNFold:boundingXT:3} both evaluate to~$0$.
		Thus, the right-hand side of~\eqref{eq:twotssNFold:boundingXT:4} is $1$, and this collection of constraints is fulfilled if (and only if) $x^{a,t}_v = 1$.
		\item
		Suppose $v \in P^t_b$ and $v \in P^{t-1}_a \setminus P^{t-1}_b$; that is, $x^{a,t-1}_v = 1$, $x^{b,t-1}_v = 0$ and consequently $[x^{a,t-1}_v + x^{a,t-1}_v = 0] = 0$.
		We immediately see that the right-hand sides of~\eqref{eq:twotssNFold:boundingXT:1} and~\eqref{eq:twotssNFold:boundingXT:3} both evaluate to~$0$.
		The right-hand side of~\eqref{eq:twotssNFold:boundingXT:2} evaluates to~$1$, since by~\eqref{eq:twotssNFold:settingZ}~$z^{b,t}_v(f_2(v)) = 1$ if (and only if) $| P^{t-1}_b \cap N(v) | \ge f_2(v)$.
		It follows that the right-hand side of~\eqref{eq:twotssNFold:boundingXT:4} evaluates to~$1$ as well and thus this collection of constraints is fulfilled if (and only if) $x^{b,t}_v = 1$.
	\end{enumerate}
	All the above arguments hold ``if and only if'', therefore the IP has a solution if and only if it describes an activation process.
	We give here a formal description for the first case (as the others follow in a very similar manner):
	\begin{enumerate}
		\item
		We show that if $x^{c, t-1}_v = 1$, then $x^{c,t}_v = 1$.
		Trivially, by \eqref{eq:twotssNFold:boundingXT:3} we get $x^{c,t}_v = 1$.
		The as the left-hand sides of \eqref{eq:twotssNFold:boundingXT:1} and \eqref{eq:twotssNFold:boundingXT:2} are at most~1, these constraints are fulfilled.
		Finally, the right-hand side of \eqref{eq:twotssNFold:boundingXT:4} evaluates to~1 and thus all the constraints hold.
		Therefore, we can see that in the process~$\mathcal{P}$ obtained by \eqref{eq:twotssNFold:settingXFromActivationProcess} we have that $v \in P^{t-1}_c$ implies $v \in P^{t}_c$ (for all $v \in X$ and $c \in \{a,b\}$).
	\end{enumerate}
	The theorem now follows since the mapping~\eqref{eq:twotssNFold:settingXFromActivationProcess} is revertible.
\end{proof}

The preceding theorem, together with \Cref{lem:2otss_generalizes_tss}, also implies an algorithm for the original \TSS problem parameterized by the $3$-path vertex cover running in time~$k^\Oh{k^3} \cdot n\log^2 n$ time. 

\begin{corollary}
	Let $G$ be a graph with $3$-pvc number~$k$.
	There is a $k^\Oh{k^3} \cdot n\log^2 n$ time algorithm solving \TSS{}.
\end{corollary}

We note that there is a simultaneously and independently announced algorithm of \citet{BanerjeeMP22} for this parameter running in $2^{2^\Oh{k}}\cdot n^\Oh{1}$ time. That is, our algorithm significantly improves the known upper bound for \TSS parameterized by the $3$-path vertex cover number. Interestingly, the algorithm of \citet{BanerjeeMP22} is based on Lenstra's algorithm. Thus, our work is another example of the phenomenon observed by \citet{BlazejKPS2024}, showing that an N-fold ILP formulation for such parameterizations often enables faster algorithms.

\subsection{Vertex Integrity}\label{sec:vi}
Let us now discuss the parameterization by the vertex integrity.
We note that this parameter is very closely related to the one investigated in the previous section.
For the rest of this section let $U$ be a modulator to components of size~$k$ such that $|U| \le k$.
By \Cref{cor:modulatorImpliesParametricBoundONTheNumberOfRounds} the number of rounds of any activation process is bounded by a function of the parameter, hence we rely on the same approach as in the previous section.
That is, we may guess `if' and `when' the vertices in the modulator should receive the two opinions; the linking constraints remain the same.
The activation process takes in this case $\Oh{k^2}$ rounds and thus the parameter $r$ of the $N$-fold IP model fulfills $r \in \Oh{k^2}$.
A slight change is needed in the case of local constraints as here we are facing further challenges---the number of neighbors of a vertex in a connected components of $G-U$ is now the parameter.
Again, we rely on rewriting result of \citet{KnopKM20}.

This change, however, only affects the variable $z^{c,t}_v(f)$.
For a vertex $v \in X$, ${t \in \{ 0,\ldots, \fin+1 \}}$, and $c \in \{ a,b \}$, let $\varphi^{c,t}_v$ be the number of vertices~$u \in N(v) \cap U$ for which we have guessed $r_c(u) \le t$.
Now, for a value $f \in \N$ we have an auxiliary binary variable~$z^{c,t}_v(f)$ with the following behavior:
\begin{equation}\label{eq:twotssNFoldVI:settingZ}
z^{c,t}_v(f) = \left[ \sum_{w \in N(v) \setminus U} x^{c,t-1}_w + \varphi^{c,t-1}_v \ge f \right] \,.
\end{equation}
Note that $\sum_{w \in N(v) \setminus U} x^{c,t-1}_w \in [k]$ and therefore we need this expression only in the case $1 \le f - \varphi^{c,t-1}_v \le |N(v) \setminus U| \le k$ (otherwise, we can directly set~$z^{c,t}_v(f)$ to either~$0$ or~$1$).

We will again present the rewriting of this constraint into a linear local constraint in a simplified setting (which then gives us the recipe to rewrite the above expression in general).
Suppose, we have $z = [x \ge \xi]$ for $1 \le \xi$ and let $0 \le x \le k$.
Then, we add the following constraint with binary variable $z$ (note that now the largest coefficient used in the IP is~$k$):
\begin{equation*}
	\xi \cdot z \le x \qquad\text{and}\qquad k \cdot z \ge 1 + x - \xi
\end{equation*}
Let us now examine this closer by a case analysis.
\begin{description}
	\item[$x < \xi$]
		In this case the right-hand side of the right condition is not positive and therefore does not impose any additional restriction on the variable $z$ (as $z$ is binary).
		The right-hand side of the left condition, however, is at most $\xi-1$ and thus requires $z=0$.
	\item[$x \ge \xi$]
		Now, the right-hand side of the left condition does not impose any restriction as this condition holds for $z \in \{ 0,1 \}$.
		The right-hand side of the right condition in this case is strictly positive thus implying $z=1$.
\end{description}
If we now use the ``new'' variables $z^{c,t}_v(f_1)$ and $z^{c,t}_v(f_2)$ in \eqref{eq:twotssNFold:boundingXT:1}--\eqref{eq:twotssNFold:boundingXT:4}, we obtain a valid IP model for a guess for the activation of the modulator.
Note that the local constraints are independent for vertices belonging to distinct connected components of $G-X$.
There are at most $k$ vertices in each connected component and we have to verify the constraints for each of them in $\Oh{k^2}$ rounds of the activation process; therefore, the parameter~$s$ of the $N$-fold IP fulfills~$s \in \Oh{k^3}$.
Thus, we arrive at the following.

\begin{theorem}\label{thm:vertex_integrity}
        There is an algorithm that solves \twoTSS in $k^\Oh{k^8} \cdot n\log^2 n$ time, where $k$ is the vertex integrity of the input graph.
\end{theorem}

Similarly to the algorithm for the parameterization by the $3$-path vertex cover number, \Cref{thm:vertex_integrity} together with \Cref{lem:2otss_generalizes_tss} yield an algorithm for the \TSS problem parameterized by the vertex-integrity of the network. 

\begin{corollary}
	There is an algorithm that solves \TSS in $k^\Oh{k^8} \cdot n\log^2 n$ time, where $k$ is the vertex integrity of the input graph.
\end{corollary}

To the best of our knowledge, the complexity picture of the \TSS problem with respect to the vertex integrity was not known and therefore, our result pushes the tractability boundary of this problem with respect to structural parameterization.

\subsection{Treewidth, Number of Rounds, and Maximum Threshold}

In this section, we give an \FPT algorithm for the combination of parameters treewidth, number of rounds, and the maximum threshold. It is clear that, under reasonable theoretical assumptions, there cannot be an algorithm that solves \twoTSSshort in \FPT time with respect to the treewidth alone, since the original \TSSshort is \Whard for this parameter \cite{Ben-ZwiHLN11}.

To simplify the description of our algorithm, we will describe the computation in terms of a \emph{nice tree decomposition} of the underlying graph $G$. Before we dive deep into the definition, let us fix some basic notation.

Recall that a tree decomposition of a graph~$G$ is a triple consisting of a tree~$T$, its root node~$r$, and a mapping $\beta \colon V(T) \to 2^{V(G)}$ that assigns to each node $x$ of the tree its \emph{bag} $\beta(x)$.
For a tree decomposition $(T, \beta, r)$ and a node $x \in V(T)$, we denote by $V_x$ the union of vertices in $\beta(x)$ and in~$\beta(y)$ for all descendants~$y$ of~$x$ in~$T$. By $E_x$, we denote the set of edges introduced in the subtree of $T$ rooted in $x$. Altogether, we denote by $G_x$ the graph $(V_x, E_x)$. We also denote by $\alpha(x)$ the set $V_x \setminus \beta(x)$.

\begin{definition}[Nice tree decomposition~{\cite[p.~168]{CyganFKLMPPS15}}]\label{def:niceTreeDecomposition}
	A tree decomposition of a graph~$G$ is \emph{nice} if $\beta(r) = \emptyset$, and each node $x \in V(T)$ is of one of the following five types:
	\begin{itemize}
		\item \emph{Leaf node}---$x$ has no children and $\beta(x) = \emptyset$;
		\item \emph{Introduce vertex node}---$x$ has exactly one child~$y$ and $\beta(x) = \beta(y) \cup \{u\}$ for some $u \in V(G) \setminus \beta(y)$;
		\item \emph{Introduce edge node}---$x$ has exactly one child~$y$, $\beta(x) = \beta(y)$, and an edge $\{u, v\} \in E(G)$ for $u, v \in \beta(x)$ is introduced (i.e., $E_x = E_y \cup \{\{u,v\}\}$);
		\item \emph{Forget node}---$x$ has exactly one child~$y$ and $\beta(x) = \beta(y) \setminus \{u\}$ for some $u \in \beta(y)$;
		\item \emph{Join node}---$x$ has exactly two children~$y, z$ and $\beta(x) = \beta(y) = \beta(z)$,
	\end{itemize}
	and each edge $e \in E(G)$ is introduced exactly once. %
\end{definition}

We note that any given tree decomposition of width $\w$ can be transformed into a nice one of the same width in $\Oh{\w^2\cdot n}$ time~\cite[Lemma 7.4, see also the discussion on page 168]{CyganFKLMPPS15}.

In this section, we prove the following theorem.

\begin{theorem}\label{thm:twoTSSIsFPTWrtRoundsFmaxTw}
	\twoTSS with a maximum threshold of $f_{\max}$ can be solved in \mbox{$(\fin \cdot f_{\max}+1) ^\Oh{\w} \cdot n$} time on graphs of treewidth at most~$\w$.
\end{theorem}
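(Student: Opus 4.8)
The plan is to design a dynamic programming algorithm over a nice tree decomposition of width $\w$, where the DP table records, for each vertex currently in the bag, enough information to determine how and when that vertex gets each opinion as the process unfolds. The key observation is that if the process runs for at most $\fin$ rounds and every threshold is at most $f_{\max}$, then the ``relevant'' state of a vertex can be summarized by a bounded amount of information. Concretely, for each vertex $v$ in a bag I would store, for each opinion $c \in \{a,b\}$, the round $r_c(v) \in \{0,1,\dots,\fin,\infty\}$ in which $v$ first acquires opinion $c$ (with $\infty$ meaning ``never within $\fin$ rounds''). This gives at most $(\fin+2)^2$ states per vertex. To verify that a vertex $v$ legitimately acquires opinion $c$ in round $r$, one must check that the number of neighbors of $v$ that already hold opinion $c$ in round $r-1$ meets the appropriate threshold ($f_1(v)$ or $f_2(v)$, depending on whether $v$ already held the other opinion). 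Since thresholds are capped at $f_{\max}$, it suffices to track the \emph{count} of already-$c$-active neighbors only up to $f_{\max}$, i.e.\ a value in $\{0,1,\dots,f_{\max}\}$.

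The DP state at a bag node $x$ therefore consists of: (i) for each $v \in \beta(x)$, the pair of activation rounds $(r_a(v), r_b(v))$; and (ii) for each $v \in \beta(x)$ and each round $i \in \{1,\dots,\fin\}$, a partial count (capped at $f_{\max}$) of how many neighbors of $v$ \emph{already processed in $G_x$} hold each opinion at round $i-1$. Roughly, the number of states is bounded by $\big((\fin+2)^2 \cdot (f_{\max}+1)^{2\fin}\big)^{\w+1} = (\fin \cdot f_{\max}+1)^{O(\w \cdot \fin)}$ — I would need to be careful to collapse this to match the stated $(\fin \cdot f_{\max}+1)^{O(\w)}$ bound, likely by storing only the running partial-count relative to the fixed target round $r_c(v)$ rather than a count for every round, reducing the per-vertex information to $O(\w)$ rather than $O(\w\fin)$ exponent. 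I would then handle the five node types of the nice decomposition: \emph{introduce vertex} guesses the activation rounds of the new vertex and initializes its neighbor-counts to zero; \emph{introduce edge} $\{u,v\}$ updates the partial neighbor-counts of $u$ and $v$ according to each other's activation times (checking consistency); \emph{forget} a vertex $u$ verifies that, given $u$'s full accumulated neighbor-counts, its recorded activation rounds $(r_a(u),r_b(u))$ are exactly those dictated by the process rules, then projects $u$ out; and \emph{join} combines two children by summing their partial counts (capping at $f_{\max}$) while requiring agreement on the activation rounds of shared bag vertices. The budget constraint $|T_a|+|T_b| \le B$ is tracked as an additive quantity by counting vertices whose activation round for an opinion is $0$ but which lie in neither seed set; I would carry the partial budget usage as an extra coordinate of the table, or simply minimize it and compare against $B$ at the root.

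The main obstacle, and the step I expect to require the most care, is the \emph{forget node}: this is the only place where a vertex's activation decision is \emph{finalized} and must be validated against the process semantics, because once forgotten, none of its future neighbors are in any bag above. I must ensure that when $u$ is forgotten, all edges incident to $u$ have already been introduced (which the niceness property guarantees), so that the accumulated neighbor-counts are complete and the consistency check ``$u$ becomes $c$-active in round $r_c(u)$ iff the count of $c$-neighbors active by round $r_c(u)-1$ first reaches the threshold there'' can be decisively verified. A subtle point is the interaction between the two opinions: whether the relevant threshold is $f_1(v)$ or $f_2(v)$ depends on whether $v$ already held the \emph{other} opinion at that round, so the check for opinion $a$ at round $r_a(v)$ must consult $r_b(v)$, and the simultaneous-acquisition case (both opinions in the same round, governed by $f_1$) must be handled explicitly — exactly the phenomenon illustrated for vertex $v_3$ in \Cref{fig:running_example}. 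Finally, to obtain the treedepth corollary, I would observe that a graph of treedepth $d$ has a tree decomposition of width at most $d-1$ in which, along any root-to-leaf path, vertices are forgotten in a nested fashion; feeding this decomposition into the same algorithm yields the claimed bound with $\w$ replaced by the treedepth, giving the $(\fin \cdot f_{\max}+1)^{O(\td)} \cdot n$ running time, and since on bounded-treedepth graphs the process always stabilizes within $O(\td)$ rounds (as the longest induced path, hence the diffusion diameter, is bounded), the parameter $\fin$ can be absorbed.
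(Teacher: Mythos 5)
Your overall approach matches the paper's: a DP over a nice tree decomposition whose per-bag-vertex state records the exact activation round $r_c(v)\in\{0,\dots,\fin,\infty\}$ of each opinion together with capped neighbor counts, with the five node types handled as you describe. However, there is a concrete gap in the state design. Your primary formulation stores a capped count for \emph{every} round $i\in\{1,\dots,\fin\}$, which gives $(f_{\max}+1)^{2\fin}$ states per vertex and an exponent of $O(\w\cdot\fin)$ rather than the claimed $O(\w)$; you notice this and propose to keep only ``the running partial-count relative to the fixed target round $r_c(v)$,'' but a single count per opinion is not enough. To certify at the forget node that $v$ acquires opinion $c$ \emph{exactly} in round $r_c(v)$ you must verify two things: that the number of $c$-active neighbors in round $r_c(v)-1$ reaches the relevant threshold, \emph{and} that the number in round $r_c(v)-2$ falls short of it (otherwise $v$ would have activated earlier, and the recorded round would be wrong). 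Since by the time $v$ is forgotten its neighbors are no longer jointly visible, both quantities must be carried in the table; the paper does exactly this with a pair $(g_c(v),h_c(v))$ of a lower bound at round $r_c(v)-1$ and an upper bound at round $r_c(v)-2$, each in $\{0,\dots,f_{\max}\}$. A third check of the same kind is needed for vertices with $r_c(v)=\infty$ (the count at round $\fin$ must stay below the threshold), which your sketch does not address. With the pair of counts per opinion the per-vertex state is $(\fin+2)^2(f_{\max}+1)^4$ and the claimed bound follows; without the second count the forget-node verification cannot distinguish the first round the threshold is met from later rounds, and the DP admits infeasible activation schedules. One further caveat: the threshold used in the round-$(r_c(v)-2)$ check depends on whether $v$ already held the other opinion \emph{at that earlier round}, so the case split on the relative order of $r_a(v)$ and $r_b(v)$ must distinguish $r_a=r_b$, $r_a=r_b-1$, and $r_a<r_b-1$; you flag the simultaneous-acquisition issue but do not carry it through to the non-activation check.

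A minor side remark: your closing claim that on graphs of treedepth $\d$ the process stabilizes within $O(\d)$ rounds is incorrect; a path of length $h$ has treedepth only about $\log_2 h$, so the bound is exponential in $\d$ (the paper proves $t\le 3^{\d}-2$ via a walk argument), which still suffices for the treedepth corollary but changes the resulting running time to $(3^{\d}\cdot f_{\max}+1)^{O(\d)}\cdot n$.
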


To prove \Cref{thm:twoTSSIsFPTWrtRoundsFmaxTw}, we describe a dynamic programming algorithm that works on a nice tree decomposition of~$G$.
The idea is as follows.
To keep track of the entire process for both opinions from the somewhat limited viewpoint of only vertices in~$\beta(x)$ we slightly alter the activation process and, more importantly, work with the thresholds for a specific opinion (either~$a$ or~$b$).

For a node $x$ of the decomposition the subproblems that we solve prescribe for every vertex $v \in \beta(x)$
\begin{enumerate}[(i)]
	\item \label{item:tw:pattern1}
	in exactly which round the vertex~$v$ gains the opinion $a$ and in which the opinion $b$---value $r_a(v),r_b (v) \in \{ 0, \ldots, \fin \} \cup \{ \infty \}$, respectively (the value $\infty$ corresponds to not gaining the opinion at all),%
	\item \label{item:tw:pattern2}
	at least how many neighbors of~$v$ in the graph~$G_x$ are already active in opinion~$a$ (opinion~$b$) in round $r_a-1$ ($r_b-1$), if applicable---value $g_a(v), g_b(v) \in \{ 0, \ldots, f_{\max} \}$, respectively (these values serve to validate that the vertex actually gains the opinion in the desired round),
	\item \label{item:tw:pattern3}
	at most how many neighbors of~$v$ in the graph~$G_x$ are active in opinion~$a$ (opinion~$b$) in round $r_a-2$ ($r_b-2$), if applicable---value $h_a(v), h_b(v) \in \{ 0, \ldots, f_{\max} \}$, respectively (these values serve to validate that the vertex does not gain the opinion prior to the desired round), and
	\item \label{item:tw:pattern4}
	at most how many neighbors of $v$ in the graph~$G_x$ are active in opinion~$a$ (opinion~$b$) in round $r_b-1$ ($r_a-1$), if applicable---value $\eta_a(v), \eta_b(v) \in \{ 0, \ldots, f_{\max} \}$, respectively (these values serve to validate that in case that the other opinion should be obtained earlier than the current one, the vertex does not gain the opinion prior or simultaneously with the other one).
\end{enumerate}
Note that \Cref{item:tw:pattern4} upper-bounds, e.g., the number of neighbors with opinion $b$ just prior to gaining $a$ (i.e., the other opinion), whereas \Cref{item:tw:pattern2} lower-bounds, e.g., the number of neighbors with opinion $a$ two rounds prior to gaining $a$ (i.e., the same opinion).
We say that the tuple~$(r_a,r_b,g_a,g_b,h_a,h_b,\eta_a,\eta_b)$ is a \emph{solution pattern for~$G_x$}.

The values $g_c$, $h_c$, and $\eta_c$, $c \in \{a,b\}$, are introduced as lower (upper) bounds, respectively, since the important thing is whether the number of active neighbors is strictly below the threshold, or at least the threshold, not the exact number of the neighbors. This also includes that it is completely irrelevant whether there are $f_{\max}$ or more active neighbors. This simplifies some parts of the algorithm, e.g., the computation in the forget node. 

Before we continue, we restrict the solution patterns we use---afterwards we only work with these (valid) solution patterns.

Fix a node~$x$ and a vertex $v \in \beta(x)$.
We say that a solution pattern~$(r_a,r_b,g_a,g_b,h_a,h_b,\eta_a,\eta_b)$ is \emph{valid for~$v$} if the following holds
\begin{itemize}
	\item
	$r_a(v) = \infty$ if and only if $r_b(v) = \infty$,
	\item
	$v \in S_c$ implies $r_c(v) = 0$ for each $c \in \{ a,b \}$,
	\item
	$r_c(v) \in \{0,\infty\}$ implies $g_c(v) = 0$, $r_c(v) \in \{0,1\}$ implies $h_c(v) = 0$ for each $c \in \{ a,b \}$ (these values are ignored, but we want the pattern to always be an octuple).
	\item 
	$r_c(v) \le r_{\neg c}(v)$ or $r_{\neg c}(v) \in \{0, \infty\}$ implies $\eta_c(v)=0$ and $r_{\neg c}(v)-1=r_c(v)-2$ implies $\eta_c(v)=h_c(v)$ for each $c \in \{ a,b \}$.
\end{itemize}
We say that
a solution pattern is \emph{valid for~$x$} if it is valid for all $v \in \beta(x)$.

Now, we define the modified activation process performed on $G_x$ for some node $x$.
While it follows the standard process for vertices of $\alpha(x)$, it is determined by a solution pattern for the vertices of $\beta(x)$.
Let~$x$ be a node, let $(r_a,r_b,g_a,g_b,h_a,h_b,\eta_a,\eta_b)$ be a valid solution pattern for~$x$, and let $T_a^x,T_b^x \subseteq V_x$ be a prospective solution with $r^{-1}_c(0)= (S_c \cup T_c^x) \cap \beta(x)$ for each $c \in \{ a,b \}$.
By a \emph{modified activation process}~$\mathcal{\widehat{P}}(x,r_a,r_b,T^x_a,T^x_b)$ we mean the following process starting from the sets $\widehat{P}_c^0 = T^x_c \cup (S_c \cap V_x)$ for each $c \in \{ a,b \}$.
For $i \ge 1$ and $c \in \{ a,b \}$ we set
\begin{align*}
\widehat{P}^i_c &=
\Big\{ v \in \alpha(x) \setminus \big(\widehat{P}_a^{i-1} \cup \widehat{P}_b^{i-1}\big) \,\Big|\, \big|N_{G_x}(v) \cap \widehat{P}_c^{i-1}\big| \ge f_1(v) \Big\} \\
&\cup
\Big\{ v \in \alpha(x) \cap \widehat{P}_{\neg c}^{i-1} \,\Big|\, \big|N_{G_x}(v) \cap \widehat{P}_c^{i-1}\big| \ge f_2(v) \Big\}\\
&\cup
\big\{ v \in \beta(x) \,\big|\, i = r_c(v) \big\}\\
&\cup \widehat{P}_c^{i-1} \,.
\end{align*}
Note again that the difference to the usual activation process consist in treating vertices in~$\beta(x)$ in a different way (by taking into account~$r_a(v)$ and~$r_b(v)$).

The modified activation process~$\mathcal{\widehat{P}}(x,r_a,r_b,T^x_a,T^x_b)$ is \emph{viable}, roughly speaking, if it corresponds to an activation process for~$G_x$ in which we treat $g_a,g_b$ as threshold values (lower bounds) and $h_a,h_b,\eta_a,\eta_b$ as upper bounds for vertices in~$\beta(x)$.
More formally, $\mathcal{\widehat{P}}(x,r_a,r_b,T^x_a,T^x_b)$ is \emph{viable for a solution pattern~$(r_a,r_b,g_a,g_b,h_a,h_b,\eta_a,\eta_b)$} if
\begin{enumerate}
	\item $\widehat{P}_a^{\fin}= \widehat{P}_a^{\fin+1} = \widehat{P}_b^{\fin}= \widehat{P}_b^{\fin+1}$,
	\item
	for all $v \in \beta(x)$ with $1 \le r_c(v) < \infty$ we have
	\begin{enumerate}
	  \item
	$g_c(v) \le \Big| N_{G_x}(v) \cap \widehat{P}^{r_c(v) - 1}_c \Big|$ and
	  \item
	$\Big| N_{G_x}(v) \cap \widehat{P}^{r_c(v) - 2}_c \Big| \le h_c(v)$
	\end{enumerate}
	for each $c \in \{ a,b \}$ where we set $\widehat{P}^{-1}_a = \widehat{P}^{-1}_b = \emptyset$, 
	\item 
    if $1 \le r_a(v) < r_b(v) < \infty$, then $\Big| N_{G_x}(v) \cap \widehat{P}^{r_a(v) - 1}_b \Big| \le \eta_b(v)$, and if $1 \le r_b(v) < r_a(v) < \infty$, then $\Big| N_{G_x}(v) \cap \widehat{P}^{r_b(v) - 1}_a \Big| \le \eta_a(v)$, and
	\item
	for all $v \in \beta(x)$ with $r_c(v) = \infty$ we have $\big| N_{G_x}(v) \cap \widehat{P}^{\fin}_c \big| \le h_c(v)$ for each $c \in \{ a,b \}$.
\end{enumerate}

A solution~$T^x_a,T^x_b \subseteq V_x$ \emph{complies} with a solution pattern~$(r_a,r_b,g_a,g_b,h_a,h_b,\eta_a,\eta_b)$ if we have ${r^{-1}_c(0) = (S_c \cup T^x_c) \cap \beta(x)}$ for each $c \in \{ a,b \}$ and the modified activation process~$\mathcal{\widehat{P}}(x,r_a,r_b,T^x_a,T^x_b)$ is viable for~$(r_a,r_b,g_a,g_b,h_a,h_b,\eta_a,\eta_b)$.
The \emph{size} of the solution~${T^x_a,T^x_b \subseteq V_x}$ is simply $|T^x_a| + |T^x_b|$.

The dynamic programming table $\DP_x$ (for a node~$x$) stores for each valid solution pattern the size of a smallest solution that complies with the pattern (or~$\infty$ if no such solution exists).

It is easy to observe that the sets $T_a,T_b \subseteq V$ form a solution if and only if they comply with the solution pattern $(\emptyset,\emptyset,\emptyset,\emptyset,\emptyset,\emptyset,\emptyset,\emptyset)$ at the root node $r$, where $\emptyset$ represents functions with an empty domain.
We formalize it in the following observation.

\begin{observation}\label{obs:treewidthSolutionAtRoot}
	Sets $T_a,T_b \subseteq V$ form a solution if and only if they comply with the solution pattern $(\emptyset,\emptyset,\emptyset,\emptyset,\emptyset,\emptyset,\emptyset,\emptyset)$ at the root node $r$.
\end{observation}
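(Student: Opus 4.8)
The plan is to exploit the fact that the root node trivializes essentially all of the extra bookkeeping introduced for the dynamic program, so that the modified activation process at $r$ degenerates into the original selection process on the whole graph.

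First I would record the structural facts about the root of a nice tree decomposition: $\beta(r) = \emptyset$, hence $V_r = V$, $\alpha(r) = V_r \setminus \beta(r) = V$, and every edge is introduced in the subtree rooted at $r$, so $E_r = E$ and $G_r = G$. In particular $N_{G_r}(v) = N_G(v)$ for every $v \in V$, and the solution $T_a^r, T_b^r \subseteq V_r = V$ is simply the pair $T_a, T_b$ from the statement. Next I would unfold what the empty pattern $(\emptyset,\emptyset,\emptyset,\emptyset,\emptyset,\emptyset)$ means at $r$: every component is the empty function on $\beta(r) = \emptyset$, so the pattern is vacuously valid for $r$, and the compliance requirement $r_c^{-1}(0) = (S_c \cup T_c^r) \cap \beta(r)$ holds trivially for any choice of $T_a, T_b$, both sides being empty for $c \in \{a,b\}$.

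The core of the argument is then to substitute $\beta(r) = \emptyset$, $\alpha(r) = V$, and $G_r = G$ into the definition of $\mathcal{\widehat{P}}(r,r_a,r_b,T_a,T_b)$. The initialization becomes $\widehat{P}_c^0 = T_c \cup (S_c \cap V) = S_c \cup T_c$, and in the recurrence the term $\{v \in \beta(r) \mid i = r_c(v)\}$ disappears while the two membership tests range over all of $V = \alpha(r)$ against $N_G$. A term-by-term comparison with the original selection process then yields $\widehat{P}_c^i = P_c^i$ for every $i$ and $c \in \{a,b\}$; this is a routine unfolding of definitions (a trivial induction on $i$ if one wishes to be formal). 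Finally I would check viability for the empty pattern: conditions (2) and (3) quantify over $v \in \beta(r) = \emptyset$ and are vacuously true, so viability collapses to condition (1), namely $\widehat{P}_a^{\fin} = \widehat{P}_a^{\fin+1} = \widehat{P}_b^{\fin} = \widehat{P}_b^{\fin+1}$, which under the identity $\widehat{P}_c^i = P_c^i$ is exactly the stabilization condition $P_a^{\fin} = P_a^{\fin+1} = P_b^{\fin} = P_b^{\fin+1}$ defining a solution. Hence $T_a, T_b$ comply with the empty pattern at $r$ if and only if they form a solution.

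The only point needing a little care, and the nearest thing to an obstacle, is verifying that the modified process reproduces the original one term for term at the root. Once $\beta(r) = \emptyset$ is used both to kill the $\beta$-specific contribution to the recurrence and to replace $G_r$ by $G$ (so that $\alpha(r) = V$ and $N_{G_r} = N_G$), every remaining discrepancy between the two processes vanishes, and the equivalence follows immediately from the definition of a solution.
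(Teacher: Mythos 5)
Your proposal is correct and follows essentially the same route as the paper's proof: both arguments observe that $\beta(r)=\emptyset$ makes the pattern and the compliance condition vacuous, that the modified activation process at the root coincides with the standard selection process on $G_r = G$, and that viability then reduces to the stabilization condition $P_a^{\fin}=P_a^{\fin+1}=P_b^{\fin}=P_b^{\fin+1}$. Your write-up is somewhat more explicit about the term-by-term identification $\widehat{P}_c^i = P_c^i$, which the paper states without elaboration, but there is no substantive difference in approach.
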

\begin{proof}
	At first, we note that if $\beta(x)$ is the empty set, then the modified activation process is equal to the standard activation process.
	
	Suppose that the sets $T_a,T_b\subseteq V$ form a solution of an instance $(G,S_a,S_b,f_1,f_2,\fin,B)$ of \twoTSSshort. According to \Cref{def:niceTreeDecomposition} we know that $\beta(r) = \emptyset$. Since the elements of a solution pattern $(r_a,r_b,g_a,g_b,h_a,h_b,\eta_a,\eta_b)$ are defined for every $v\in\beta(x)$, it follows that the only solution pattern for the root node $r$ has $r_a = r_b = g_a = g_b = h_a = h_b = \eta_a = \eta_b = \emptyset$ and the sets $T_a,T_b$ comply with this solution pattern.
	
	On the other hand, let $T_a,T_b\subseteq V_r=V$ be two sets that comply with $SP = (\emptyset,\emptyset,\emptyset,\emptyset,\emptyset,\emptyset,\emptyset,\emptyset)$ in the root node $r$, that is, the modified activation process $\mathcal{\widehat{P}}(x,\emptyset,\emptyset,T_a^r,T_b^r)$ is viable for $(\emptyset,\emptyset,\emptyset,\emptyset,\emptyset,\emptyset,\emptyset,\emptyset)$. Since the modified process is viable for $SP$, the activation process stabilizes at the latest in round $\fin + 1$, that is, $P_c^\fin = P_c^{\fin+1}$, $c\in\{a,b\}$. As stated above, the modified process is equal to the standard process. Furthermore, the graph $G_r$ for which the sets $T_a^r$ and $T_b^r$ form a solution is equal to the graph $G$; thus, $T_a$ and $T_b$ form a solution.
\end{proof}

Hence, the solution is found simply by comparing the value $\DP_r[\emptyset,\emptyset,\emptyset,\emptyset,\emptyset,\emptyset,\emptyset,\emptyset]$ computed to~$B$. It remains to show how to compute the value of $\DP_x$.

\paragraph{Leaf Node}\label{sec:leaf}
The dynamic programming table of a leaf node~$x$ has only a single entry $\DP_x[\emptyset,\emptyset,\emptyset,\emptyset,\emptyset,\emptyset,\emptyset,\emptyset] = 0$.
Clearly, the graph~$G_x$ is an empty graph, the solution $(\emptyset, \emptyset)$ complies with the pattern $(\emptyset,\emptyset,\emptyset,\emptyset,\emptyset,\emptyset,\emptyset,\emptyset)$, and is of size~$0$.

\paragraph{Introduce Vertex Node}\label{sec:intro_vert}
Let~$x$ be a node introducing a vertex~$v$ with the child node~$y$.
Note that the newly introduced vertex~$v$ is isolated in~$G_x$ (as the edges incident with it can only be introduced once the vertex is present).
Let~$\bm{p}=(r_a,r_b,g_a,g_b,h_a,h_b,\eta_a,\eta_b)$ be a valid solution pattern for~$x$.
We use~$\bm{p}|_{\beta(y)}$ to denote the restriction of the pattern~$\bm{p}$ to vertices in~$\beta(y)$.
\begin{itemize}
	\item If $g_a(v) \ge 1$ or $g_b(v) \ge 1$, then let $\DP_x[\bm{p}] = \infty$, since no solution exists.
	\item Otherwise we let
	\[\DP_x[\bm{p}] = \DP_y\mkern-5mu\big[\bm{p}|_{\beta(y)}\big] + \cost(a, \bm{p}, v) + \cost(b, \bm{p}, v),\]
	where
	\begin{equation}\label{eq:cost}
	\cost(c,\bm{p},v)=
	\begin{cases}
	1&\text{if }r_c(v)=0\text{ and }v \notin S_c\\
	0&\text{ otherwise.}
	\end{cases}
	\end{equation}
\end{itemize}

\begin{lemma}\label{lem:twDP:IntroduceVertex}
	Let~$x$ be a node introducing a vertex~$v$ with the child node~$y$.
	If $\DP_y$ was correctly computed and $\DP_x$ is computed using the algorithm for the introduce vertex node, then $\DP_x$ is also computed correctly.
\end{lemma}

\begin{proof}
	Let~$\bm{p}=(r_a,r_b,g_a,g_b,h_a,h_b,\eta_a,\eta_b)$ be a valid solution pattern for~$x$.
	First, note that $\bm{p}|_{\beta(y)}$ is valid for $y$, as $\bm{p}$ is valid for all $v \in \beta(x)$ and the validity does not depend on the graph, only on the octuple, the seed-set $S_a$, and the seed-set $S_b$.
	Hence the entry $\DP_y\mkern-5mu\left[\bm{p}|_{\beta(y)}\right]$ is well defined.
	
	Now suppose that $g_a(v) \ge 1$. This implies that $r_a(v) \notin \{0,\infty\}$, as otherwise $\bm{p}$ would not be valid.
	Suppose, for contradiction, that there is a solution $T^x_a,T^x_b \subseteq V_x$  that complies with $\bm{p}$. Then, since $\mathcal{\widehat{P}}(x,r_a,r_b,T^x_a,T^x_b)$ is viable, we have $1 \le g_a(v) \le \big| N_{G_x}(v) \cap \widehat{P}^{r_a(v) - 1}_a \big|$, which is a contradiction, since $v$ is isolated in $G_x$. Therefore, if $g_a(v) \ge 1$, then no solution complies with $\bm{p}$. By a similar argument, this also holds if $g_b(v) \ge 1$ and the answer of the algorithm is correct in these cases.
	
	To finish the proof we need the following two claims.
\begin{claim}\label{cla:tw_IntroVert_from_child}
 If there is a solution $T^y_a,T^y_b \subseteq V_y$ of size $s$ that complies with $\bm{p}|_{\beta(y)}$, then there is also a solution $T^x_a,T^x_b \subseteq V_x$ of size $s +  \cost(a, \bm{p}, v) + \cost(b, \bm{p}, v)$ that complies with $\bm{p}$.
\end{claim}	

\begin{claimproof}	
	Let $T^y_a,T^y_b \subseteq V_y$ be a solution  of size $s$ that complies with $\bm{p}|_{\beta(y)}$.
	We let $T^x_a=T^y_a \cup \{v\}$ if $r_a(v)=0$ and $v \notin S_a$ and $T^x_a=T^y_a$ otherwise. Similarly, we let $T^x_b=T^y_b \cup \{v\}$ if $r_b(v)=0$ and $v \notin S_b$ and $T^x_b=T^y_b$ otherwise. By the definition of $\cost$, we get $|T^x_a|+|T^x_b|=|T^y_a|+|T^y_b|+\cost(a, \bm{p}, v) + \cost(b, \bm{p}, v)$ as desired. We also have $r^{-1}_c(0)= (S_c \cup T_c^x) \cap \beta(x)$ for each $c \in \{ a,b \}$.
	
	Now, if we denote by $\widehat{P}^{i}_c$ the sets obtained in the process $\mathcal{\widehat{P}}(y,r_a|_{\beta(y)},r_b|_{\beta(y)},T^y_a,T^y_b)$ and by $\overline{P}^{i}_c$ the sets obtained in the process $\mathcal{\widehat{P}}(x,r_a,r_b,T^x_a,T^x_b)$, then we have $\overline{P}^{i}_c = \widehat{P}^{i}_c$ for each $i < r_c(v)$ and $\overline{P}^{i}_c = \widehat{P}^{i}_c \cup \{v\}$ for $i \ge r_c(v)$ as $v$ is isolated in $G_x$. Note that since $T^y_a,T^y_b$ complies with $\bm{p}|_{\beta(y)}$, the process $\mathcal{\widehat{P}}(y,r_a|_{\beta(y)},r_b|_{\beta(y)},T^y_a,T^y_b)$ is viable. Therefore, we have $\overline{P}_a^{\fin}= \overline{P}_a^{\fin+1} = \overline{P}_b^{\fin}= \overline{P}_b^{\fin+1}$ as $\widehat{P}_a^{\fin}= \widehat{P}_a^{\fin+1} = \widehat{P}_b^{\fin}= \widehat{P}_b^{\fin+1}$ and $r_a(v) = \infty$ if and only if $r_b(v)=\infty$ according to the validity of $\bm{p}$. Furthermore, for all $v' \in \beta(y)$ with $1 \le r_c(v') < \infty$ we have $g_c(v') \le \big| N_{G_x}(v') \cap \overline{P}^{r_c(v') - 1}_c \big|$ and $\big| N_{G_x}(v') \cap \overline{P}^{r_c(v') - 2}_c \big| \le h_c(v)$ for each $c \in \{ a,b \}$, where we set $\overline{P}^{-1}_a = \overline{P}^{-1}_b = \emptyset$, and the same holds for $v'=v$ as $v$ is isolated in $G_x$, $g_c(v)=0$, and $h_c(v) \ge 0$.
	Similarly, for each $v' \in \beta(y)$ we have that if $1 \le r_a(v') < r_b(v') < \infty$, then $\Big| N_{G_x}(v') \cap \overline{P}^{r_a(v') - 1}_b \Big| \le \eta_b(v')$, and if $1 \le r_b(v') < r_a(v') < \infty$, then $\Big| N_{G_x}(v') \cap \overline{P}^{r_b(v') - 1}_a \Big| \le \eta_a(v')$, and the same holds for $v'=v$, if applicable, as $\eta_a(v) \ge 0$ and $\eta_b(v) \ge 0$.
	Finally, for all $v' \in \beta(y)$ with $r_c(v') = \infty$ we have $\big| N_{G_x}(v') \cap \overline{P}^{\fin}_c \big| \le h_c(v')$ for each $c \in \{ a,b \}$ and again the same holds for $v'=v$, if applicable, as $v$ is isolated in $G_x$ and $h_c(v) \ge 0$.
	Therefore, the process $\mathcal{\widehat{P}}(x,r_a,r_b,T^x_a,T^x_b)$ is viable for $\bm{p}$ and the solution $T^x_a,T^x_b$ is compatible with $\bm{p}$.
\end{claimproof}	

\begin{claim}\label{cla:tw_IntroVert_to_child}
 If there is a solution $T^x_a,T^x_b \subseteq V_x$ of size $s$ that complies with $\bm{p}$, then there is also a solution $T^y_a, T^y_b \subseteq V_y$ of size $s -  \cost(a, \bm{p}, v) - \cost(b, \bm{p}, v)$ that complies with $\bm{p}|_{\beta(y)}$.
\end{claim}

\begin{claimproof}
	Let $T^x_a,T^x_b \subseteq V_x$ be a solution of size $s$ that complies with $\bm{p}$.
	Let $T^y_a=T^x_a \setminus \{v\}$ and $T^y_b=T^x_b \setminus \{v\}$. Since $v \in T^x_a$ if and only if $r_a(v)=0$ and $v \notin S_a$, we have $|T^y_a|=|T^x_a|-  \cost(a, \bm{p}, v)$ and similarly for $T^y_b$, therefore the size of $T^y_a, T^y_b$ is as required. We also have $r^{-1}_c(0)= (S_c \cup T_c^y) \cap \beta(y)$ for each $c \in \{ a,b \}$.
	
	If we denote by $\overline{P}^{i}_c$ the sets obtained in the process $\mathcal{\widehat{P}}(x,r_a,r_b,T^x_a,T^x_b)$ and by $\widehat{P}^{i}_c$ the sets obtained in the process $\mathcal{\widehat{P}}(y,r_a|_{\beta(y)},r_b|_{\beta(y)},T^y_a,T^y_b)$, then we have $\widehat{P}^{i}_c = \overline{P}^{i}_c$ for each $i < r_c(v)$ and $v \in \overline{P}^{i}_c$ and $\widehat{P}^{i}_c = \overline{P}^{i}_c \setminus \{v\}$ for $i \ge r_c(v)$ as $v$ is isolated in $G_x$. Note that since $T^x_a,T^x_b$ complies with $\bm{p}$, the process $\mathcal{\widehat{P}}(x,r_a,r_b,T^x_a,T^x_b)$ is viable. Therefore, we have $\overline{P}_a^{\fin}= \overline{P}_a^{\fin+1} = \overline{P}_b^{\fin}= \overline{P}_b^{\fin+1}$ and thus $\widehat{P}_a^{\fin}= \widehat{P}_a^{\fin+1} = \widehat{P}_b^{\fin}= \widehat{P}_b^{\fin+1}$ as $r_a(v) = \infty$ if and only if $r_b(v)=\infty$ according to the validity of $\bm{p}$. Furthermore, for all $v' \in \beta(y)$ with $1 \le r_c(v') < \infty$ we have $g_c(v') \le \big| N_{G_x}(v') \cap \widehat{P}^{r_c(v') - 1}_c \big|$ and $\big| N_{G_x}(v') \cap \widehat{P}^{r_c(v') - 2}_c \big| \le h_c(v)$ for each $c \in \{ a,b \}$, where we set $\widehat{P}^{-1}_a = \widehat{P}^{-1}_b = \emptyset$, as $v$ is isolated in $G_x$.
	Also for each $v' \in \beta(y)$ we have that if $1 \le r_a(v') < r_b(v') < \infty$, then $\Big| N_{G_x}(v') \cap \widehat{P}^{r_a(v') - 1}_b \Big| \le \eta_b(v')$, and if $1 \le r_b(v') < r_a(v') < \infty$, then $\Big| N_{G_x}(v') \cap \widehat{P}^{r_b(v') - 1}_a \Big| \le \eta_a(v')$
	Finally, for all $v' \in \beta(y)$ with $r_c(v') = \infty$ we have $\big| N_{G_x}(v') \cap \widehat{P}^{\fin}_c \big| \le h_c(v')$ for each $c \in \{ a,b \}$, since $v$ is isolated in $G_x$.
	Therefore, the process $\mathcal{\widehat{P}}(y,r_a|_{\beta(y)},r_b|_{\beta(y)},T^y_a,T^y_b)$ is viable for $\bm{p}|_{\beta(y)}$ and the solution $T^y_a,T^y_b$ complies with $\bm{p}|_{\beta(y)}$.
\end{claimproof}	
	
	In summary, if $T^x_a,T^x_b \subseteq V_x$ is a minimum size solution that complies with $\bm{p}$ and is of size $s$, then the algorithm did not set $\DP_x[\bm{p}]$ to $s'$ with $s' < s$. 
	This could only happen if $\DP_y[\bm{p}|_{\beta(y)}] = s' -  \cost(a, \bm{p}, v) - \cost(b, \bm{p}, v)$.
	Since $\DP_y[\bm{p}|_{\beta(y)}]$ was computed correctly, this would imply the existence of a solution $\overline{T}^y_a,\overline{T}^y_b \subseteq V_y$ of size $s' -  \cost(a, \bm{p}, v) - \cost(b, \bm{p}, v)$ and, by \Cref{cla:tw_IntroVert_from_child}, also the existence of a solution
	$\overline{T}^x_a,\overline{T}^x_b \subseteq V_x$ of size $s'$, contradicting the minimality of $T^x_a,T^x_b$. 
	Furthermore, by \Cref{cla:tw_IntroVert_to_child}, there is a solution $T^y_a, T^y_b \subseteq V_y$ of size $s -  \cost(a, \bm{p}, v) - \cost(b, \bm{p}, v)$ that complies with $\bm{p}|_{\beta(y)}$. Thus, since $\DP_y[\bm{p}|_{\beta(y)}]$ was computed correctly, $\DP_y[\bm{p}|_{\beta(y)}] \le s -  \cost(a, \bm{p}, v) - \cost(b, \bm{p}, v)$ and therefore $\DP_x[\bm{p}]$ is set to at most $s$ by the algorithm. Therefore, $\DP_x[\bm{p}]=s$ and it is computed correctly.
\end{proof}

\paragraph{Introduce Edge Node}\label{sec:intro_edge}
Let~$x$ be a node introducing an edge~$\{u,v\}$ with the child node~$y$.
Let~$\bm{p}=(r_a,r_b,g_a,g_b,h_a,h_b,\eta_a,\eta_b)$ be a valid solution pattern for~$x$.
Based on this pattern, we compute a solution pattern~$\bm{p}'=(r_a,r_b,g'_a,g'_b,h'_a,h'_b,\eta'_a,\eta'_b)$ which we then use for the lookup in the table $\DP_y$.
We let $\bm{p}'|_{\beta(x) \setminus\{u,v\}} = \bm{p}|_{\beta(x) \setminus\{u,v\}}$ and $\bm{p}'|_{\{u,v\}}$ be created using the following for each $c \in \{ a,b \}$.
We only describe the cases for $r_c(v) \le r_c(u)$, the others are symmetric.
In this case, we set $g'_c(v) = g_c(v)$, $h'_c(v) = h_c(v)$, and $\eta'_c(v)=\eta_c(v)$.
\begin{itemize}
	\item If $r_c(v) = r_c(u)$, then we set $g'_c(u) = g_c(u)$, $h'_c(u) = h_c(u)$, and $\eta'_c(u)=\eta_c(u)$.
	\item If $r_c(v) + 1 = r_c(u)\le \fin$, then we set $g'_c(u) = \max(0,g_c(u)-1)$, $h'_c(u) = h_c(u)$, and $\eta'_c(u)=\eta_c(u)$.
	\item If $r_c(v) + 1 < r_c(u) \le \fin$, then we set $g'_c(u) = \max(0,g_c(u)-1)$ and $h'_c(u) = h_c(u)-1$.
	If also $r_c(v) + 1 \le r_{\neg c}(u) < r_c(u)$, then we set $\eta'_c(u)=\eta_c(u)-1$, otherwise we set $\eta'_c(u)=\eta_c(u)$.
	\item If $r_c(v) \le \fin$ and $r_c(u) = \infty$, then we set $g'_c(u) = g_c(u) = 0$, $h'_c(u) = h_c(u)-1$, and $\eta'_c(u)=\eta_c(u)=0$.
\end{itemize}
If~$\bm{p}'$ contains a $-1$, then we let $\DP_x[\bm{p}]= \infty$.
Otherwise, we let $\DP_x[\bm{p}] = \DP_y[\bm{p}']$.

\begin{lemma}\label{lem:twDP:IntroduceEdge}
	Let~$x$ be a node introducing an edge~$\{u,v\}$ with the child node~$y$.
	If $\DP_y$ was correctly computed and $\DP_x$ is computed using the algorithm for the introduce edge node, then $\DP_x$ is also computed correctly.
\end{lemma}
\begin{proof}
	Let~$\bm{p}=(r_a,r_b,g_a,g_b,h_a,h_b,\eta_a,\eta_b)$ be a valid solution pattern for~$x$ and $\bm{p}'$ be as computed by the algorithm.

\begin{claim}
    If $\bm{p}'$ contains a $-1$, then there is no solution that complies with $\bm{p}$.
\end{claim}

\begin{claimproof}
    Suppose first that $h'_c(u)=-1$ for some $c \in \{ a,b \}$. 
    This implies that $h_c(u)=0$ and either $r_c(v) + 1 < r_c(u) \le \fin$ or $r_c(v) \le \fin$ and $r_c(u) = \infty$. 
    Assume, for the sake of contradiction, that there is a solution $T^x_a,T^x_b \subseteq V_x$  that complies with $\bm{p}$. Consider the sets $\widehat{P}^{i}_c$ obtained in the process $\mathcal{\widehat{P}}(x,r_a,r_b,T^x_a,T^x_b)$. As $u$ and $v$ are adjacent in $G_x$, we have $v \in  N_{G_x}(u) \cap \widehat{P}^{r_c(v)}_c$. If $r_c(u) \le \fin$, then $r_c(u) - 2 \ge r_c(v)$, which implies that $\big| N_{G_x}(u) \cap \widehat{P}^{r_c(u) - 2}_c \big| \ge 1 > h_c(v)$.
	If $r_c(u) = \infty$, then this implies that $\big| N_{G_x}(u) \cap \widehat{P}^{\fin}_c \big| \ge 1 > h_c(v)$.
	In both cases, this contradicts the viability of the process $\mathcal{\widehat{P}}(x,r_a,r_b,T^x_a,T^x_b)$.
	Therefore, if $h'_c(u)=-1$, then no solution complies with $\bm{p}$. By a similar argument, this also holds if $h'_c(v)=-1$. 
	
	Now suppose that $\eta'_c(u)=-1$ for some $c \in \{ a,b \}$. 
    This implies that $\eta_c(u)=0$ and $r_c(v) + 1 \le r_{\neg c}(u)< r_c(u) \le \fin$. 
    Assume again, for the sake of contradiction, that there is a solution $T^x_a,T^x_b \subseteq V_x$  that complies with $\bm{p}$. 
    Consider again the sets $\widehat{P}^{i}_c$ obtained in the process $\mathcal{\widehat{P}}(x,r_a,r_b,T^x_a,T^x_b)$, we again have $v \in  N_{G_x}(u) \cap \widehat{P}^{r_c(v)}_c$. 
    As $r_{\neg c}(u) - 1 \ge r_c(v)$, this implies that $\big| N_{G_x}(u) \cap \widehat{P}^{r_{\neg c}(u) - 1}_c \big| \ge 1 > \eta_c(v)$.
	This contradicts the viability of the process $\mathcal{\widehat{P}}(x,r_a,r_b,T^x_a,T^x_b)$.
	Therefore, if $\eta'_c(u)=-1$, then no solution complies with $\bm{p}$. 
	By a similar argument, this also holds if $\eta'_c(v)=-1$. 
\end{claimproof}
    
    Therefore, the answer of the algorithm is correct in these cases.
	Note that otherwise $\bm{p}'$ is a valid pattern since $\bm{p}$ is valid and there are only two ways $\bm{p}'$ could be invalid.
	One is if $g_c(x)=0$, $h_c(x)=0$, or $\eta_c(x)=0$ and $g'_c(x) \neq 0$, $h'_c(x) \neq 0$, or $\eta'_c(x) \neq 0$ for some $x \in \{u,v\}$, which is handled by the maximum in the definition and the previous claim, respectively.
	The other is if $r_{\neg c}(u)-1 = r_c(u)-1$ and $\eta'_c(u) \neq h'_c(u)$, or similarly with $v$.
	However, in this case $\eta'_c(u) \neq h'_c(u)$ and either $r_c(v)+1 \ge r_c(u)$ in which case both $h'_c(u)=h_c(u)$ and $\eta'_c(u) = \eta_c(u)$ or $r_c(v)+1 < r_c(u)$ which also implies that $r_c(v)+1 \le r_{\neg c}(u)$ and, therefore, $h'_c(u)=h_c(u)-1$ and $\eta'_c(u) = \eta_c(u)-1$, again preserving the equality.
	Hence the entry $\DP_y[\bm{p}']$ is well defined.
	
	To finish the proof we need the following two claims.
\begin{claim}\label{cla:tw_IntroEdge_from_child}
    If there is a solution $T^y_a,T^y_b \subseteq V_y=V_x$ that complies with $\bm{p}'$, then it also complies with $\bm{p}$.
\end{claim}

\begin{claimproof}
    Let $T^y_a,T^y_b \subseteq V_y=V_x$ be a solution that complies with $\bm{p}'$. 
    The condition that $r^{-1}_c(0)= (S_c \cup T_c^x) \cap \beta(x)$ for each $c \in \{ a,b \}$ is the same as $\beta(x)=\beta(y)$. As the neighborhood of each vertex in $V_x \setminus \beta(x)$ is the same in $G_x$ and $G_y$, the sets $\widehat{P}^{i}_c$ obtained in the process $\mathcal{\widehat{P}}(y,r_a,r_b,T^y_a,T^y_b)$ are the same as in the process $\mathcal{\widehat{P}}(x,r_a,r_b,T^y_a,T^y_b)$. Note that since $T^y_a,T^y_b$ complies with $\bm{p}'$ by assumption, the process $\mathcal{\widehat{P}}(y,r_a,r_b,T^y_a,T^y_b)$ is viable. Condition $\widehat{P}_a^{\fin}= \widehat{P}_a^{\fin+1} = \widehat{P}_b^{\fin}= \widehat{P}_b^{\fin+1}$ remains valid.
	Furthermore, for all $v' \in \beta(y)\setminus\{u,v\}$, all $i \in \{0, \ldots, \fin\}$, and all $c \in \{ a,b \}$ we have $N_{G_x}(v') \cap \widehat{P}^{i}_c = N_{G_y}(v') \cap \widehat{P}^{i}_c$. As $g_c(v')=g'_c(v')$ and $h_c(v')=h'_c(v')$ for these vertices, the viability conditions remain valid for these vertices.
	
	For each $c \in \{ a,b \}$, if $r_c(v) \le r_c(u)$ and $r_c(v) \neq \infty$, then we have $N_{G_x}(v) \cap \widehat{P}^{i}_c = N_{G_y}(v) \cap \widehat{P}^{i}_c$ for $i \in \{r_c(v)-2,r_c(v)-1\}\cap \{0,\ldots,\fin\}$, $g'_c(v) = g_c(v)$, and $h'_c(v) = h_c(v)$. 
	If $1 \le r_{\neg c}(v) < r_c(v) < \infty$, then we also have $N_{G_x}(v) \cap \widehat{P}^{i}_c = N_{G_y}(v) \cap \widehat{P}^{i}_c$ for $i=r_{\neg c}(v) -1$ and $\eta'_c(v)=\eta_c(v)$.
	Hence, the viability condition remains valid for $v$ in this case. 
	Similarly, if $r_c(v)=r_c(u)=\infty$, then $N_{G_x}(v) \cap \widehat{P}^{\fin}_c = N_{G_y}(v) \cap \widehat{P}^{\fin}_c$.
	\begin{itemize}
	\item If $r_c(v) = r_c(u)$, then the argument also holds for $u$ by symmetry, as $g'_c(u) = g_c(u)$, $h'_c(u) = h_c(u)$, and $\eta'_c(u)=\eta_c(u)$ in this case.
	\item If $r_c(v) + 1 = r_c(u)\le \fin$, then we have
    \[ \big|N_{G_x}(u) \cap \widehat{P}^{r_c(u)-1}_c\big| = 1+ \big|N_{G_y}(u) \cap \widehat{P}^{r_c(u)-1}_c\big| \ge 1+g'_c(u) = 1+\max(0,g_c(u)-1) \ge g_c(u),\] 
    \[\big|N_{G_x}(u) \cap \widehat{P}^{r_c(u)-2}_c\big| = \big|N_{G_y}(u) \cap \widehat{P}^{r_c(u)-2}_c\big| \le h'_c(u) = h_c(u),\]
    and if $1 \le r_{\neg c}(u) < r_c(u)$, then 
    \[\big|N_{G_x}(u) \cap \widehat{P}^{r_{\neg c}(u)-1}_c\big| = \big|N_{G_y}(u) \cap \widehat{P}^{r_{\neg c}(u)-1}_c\big| \le \eta'_c(u) = \eta_c(u).\]
	\item If $r_c(v) + 1 < r_c(u) \le \fin$, then we have 
	\[\big|N_{G_x}(u) \cap \widehat{P}^{r_c(u)-1}_c\big| = 1+ \big|N_{G_y}(u) \cap \widehat{P}^{r_c(u)-1}_c\big| \ge 1+g'_c(u) = 1+\max(0,g_c(u)-1) \ge g_c(u)\] 
	and 
	\[\big|N_{G_x}(u) \cap \widehat{P}^{r_c(u)-2}_c\big| = 1+\big|N_{G_y}(u) \cap \widehat{P}^{r_c(u)-2}_c\big| \le 1+h'_c(u) = 1+h_c(u)-1=h_c(u).\]
	If also $r_c(v) +1 \le r_{\neg c}(u) < r_c(u)$, then 
	\[\big|N_{G_x}(u) \cap \widehat{P}^{r_{\neg c}(u)-1}_c\big| = 1+\big|N_{G_y}(u) \cap \widehat{P}^{r_{\neg c}(u)-1}_c\big| \le 1+\eta'_c(u) = 1+\eta_c(u)-1=\eta_c(u),\]
	otherwise
	\[\big|N_{G_x}(u) \cap \widehat{P}^{r_{\neg c}(u)-1}_c\big| = \big|N_{G_y}(u) \cap \widehat{P}^{r_{\neg c}(u)-1}_c\big| \le \eta'_c(u) = \eta_c(u).\]
	\item If $r_c(v) \le \fin$ and $r_c(u) = \infty$, then we have 
	\[\big|N_{G_x}(u) \cap \widehat{P}^{\fin}_c\big| = 1+\big|N_{G_y}(u) \cap \widehat{P}^{\fin}_c\big| \le 1+h'_c(u) = 1+h_c(u)-1=h_c(u).\]
	Therefore, the viability condition also remains valid for $u$ in this case. 
	\end{itemize}
	The other cases follow by a symmetric argument. Therefore, the process is viable and $T^y_a,T^y_b$ also complies with $\bm{p}$.
\end{claimproof}

\begin{claim}\label{cla:tw_IntroEdge_to_child}
    If there is a solution $T^x_a,T^x_b \subseteq V_x=V_y$ that complies with $\bm{p}$, then it also complies with $\bm{p}'$.
\end{claim}

\begin{claimproof}
    Let $T^x_a,T^x_b \subseteq V_x=V_y$ be a solution that complies with $\bm{p}$.
    As observed above, the only difference between the conditions imposed are the viability conditions for~$u$ and~$v$.
	Let again $\widehat{P}^{i}_c$ be the sets obtained in the process $\mathcal{\widehat{P}}(y,r_a,r_b,T^x_a,T^x_b)$ (or the other one, they are the same). For each $c \in \{ a,b \}$ if $r_c(v) \le r_c(u)$ and $r_c(v) \neq \infty$, then we have $N_{G_y}(v) \cap \widehat{P}^{i}_c = N_{G_x}(v) \cap \widehat{P}^{i}_c$ for $i \in \{r_c(v)-2,r_c(v)-1\}\cap \{0,\ldots,\fin\}$, $g'_c(v) = g_c(v)$ and $h'_c(v) = h_c(v)$. 
	If $1 \le r_{\neg c}(v) < r_c(v) < \infty$, then we also have $N_{G_x}(v) \cap \widehat{P}^{i}_c = N_{G_y}(v) \cap \widehat{P}^{i}_c$ for $i=r_{\neg c}(v) -1$ and $\eta'_c(v)=\eta_c(v)$.
	Therefore, the viability condition remains valid for $v$ in this case. Similarly, if $r_c(v)=r_c(u)=\infty$, then $N_{G_y}(v) \cap \widehat{P}^{\fin}_c = N_{G_x}(v) \cap \widehat{P}^{\fin}_c$.
	\begin{itemize}
	\item If $r_c(v) = r_c(u)$, then the argument also holds for $u$ by symmetry, as $g'_c(u) = g_c(u)$, $h'_c(u) = h_c(u)$, and $\eta'_c(u) = \eta_c(u)$ in this case.
	\item If $r_c(v) + 1 = r_c(u)\le \fin$, then we have 
	\[\big|N_{G_y}(u) \cap \widehat{P}^{r_c(u)-1}_c\big| = \big|N_{G_x}(u) \cap \widehat{P}^{r_c(u)-1}_c\big|-1 \ge g_c(u)-1,\] 
	so 
	\[\big|N_{G_y}(u) \cap \widehat{P}^{r_c(u)-1}_c\big| \ge \max(0,g_c(u)-1) = g'_c(u),\] 
	\[\big|N_{G_y}(u) \cap \widehat{P}^{r_c(u)-2}_c\big| = \big|N_{G_x}(u) \cap \widehat{P}^{r_c(u)-2}_c\big| \le h_c(u) = h'_c(u),\]
	and if $1 \le r_{\neg c}(u) < r_c(u)$, then 
	\[\big|N_{G_y}(u) \cap \widehat{P}^{r_{\neg c}(u)-1}_c\big| = \big|N_{G_x}(u) \cap \widehat{P}^{r_{\neg c}(u)-1}_c\big| \le \eta_c(u) = \eta'_c(u),\]
	\item If $r_c(v) + 1 < r_c(u) \le \fin$, then we have 
	\[\big|N_{G_y}(u) \cap \widehat{P}^{r_c(u)-1}_c\big| = \big|N_{G_x}(u) \cap \widehat{P}^{r_c(u)-1}_c\big|-1 \ge g_c(u)-1,\] 
	hence 
	\[\big|N_{G_y}(u) \cap \widehat{P}^{r_c(u)-1}_c\big| \ge \max(0,g_c(u)-1) = g'_c(u)\] 
	and 
	\[\big|N_{G_y}(u) \cap \widehat{P}^{r_c(u)-2}_c\big| = \big|N_{G_x}(u) \cap \widehat{P}^{r_c(u)-2}_c\big|-1 \le h_c(u)-1=h'_c(u).\]
	If also $r_c(v) +1 \le r_{\neg c}(u) < r_c(u)$, then 
	\[\big|N_{G_y}(u) \cap \widehat{P}^{r_{\neg c}(u)-1}_c\big| = \big|N_{G_x}(u) \cap \widehat{P}^{r_{\neg c}(u)-1}_c\big|-1 \le \eta_c(u)-1 = \eta'_c(u),\]
	otherwise
	\[\big|N_{G_y}(u) \cap \widehat{P}^{r_{\neg c}(u)-1}_c\big| = \big|N_{G_x}(u) \cap \widehat{P}^{r_{\neg c}(u)-1}_c\big| \le \eta_c(u) = \eta'_c(u).\]
	\item If $r_c(v) \le \fin$ and $r_c(u) = \infty$, then we have 
	\[\big|N_{G_y}(u) \cap \widehat{P}^{\fin}_c\big| = \big|N_{G_x}(u) \cap \widehat{P}^{\fin}_c\big|-1 \le h_c(u)-1 = h'_c(u).\] Therefore, the viability condition also remains valid for $u$ in this case. 
	\end{itemize}
	The other cases follow a symmetric argument. Therefore, the process is viable and $T^x_a,T^x_b$ also complies with~$\bm{p}'$.
\end{claimproof}	
	
	In summary, let $T_a^x,T_b^x \subseteq V_x$ be a minimum size solution that complies with $\bm{p}$ and is of size $s$.
	If we had $\DP_y[\bm{p}'] < s$, then, since $\DP_y[\bm{p}']$ was computed correctly, there would be a corresponding solution $T^y_a,T^y_b \subseteq V_y$ that is consistent with $\bm{p}'$ and is strictly smaller than~$s$. But, by \Cref{cla:tw_IntroEdge_from_child}, this solution would comply with $\bm{p}$, contradicting the minimality of $T^x_a,T^x_b$. 
	Therefore, we have $\DP_y[\bm{p}'] \ge s$. 
	Furthermore, since, by \Cref{cla:tw_IntroEdge_to_child}, $T^x_a,T^x_b$ complies with $\bm{p}'$, we have $\DP_y[\bm{p}'] \le s$, that is, $\DP_y[\bm{p}'] = s$. Therefore, $\DP_x[\bm{p}]=s$ and it is computed correctly.
\end{proof}

\paragraph{Forget Node}\label{sec:forget_vert}
Let~$x$ be a node forgetting a vertex~$v$ with the child node~$y$ and let~$\bm{p}$ be a valid solution pattern for~$x$.
For $r_a, r_b \in [\fin]$ we define $\bm{q}_{r_a,r_b} = (\overline{r}_a,\overline{r}_b,\overline{g}_a,\overline{g}_b,\overline{h}_a,\overline{h}_b,\overline{\eta}_a,\overline{\eta}_b)$ as (we only list those with $r_a \leq r_b$, the others are symmetric)
\[
\mkern-10mu
\begin{array}{ll}
(0,0,0,0,0,0,0,0)                               & \text{if } r_a = r_b = 0,\\
(0,1,0,f_2(v),0,0,0,0)                          & \text{if } r_a = 0, r_b = 1, \\
(0,r_b,0,f_2(v),0,f_2(v)-1,0,0)                 & \text{if } r_a = 0, r_b > 1, \\
(r_a,r_b,f_1(v),f_1(v),0,0,0,0)                 & \text{if } r_a = r_b =1,\\
(r_a,r_b,f_1(v),f_2(v),0,f_1(v)-1,0,0)             & \text{if } r_a =1,  r_b=2,\\
(r_a,r_b,f_1(v),f_2(v),0,f_2(v)-1,0,f_1(v)-1)   & \text{if } r_a=1,  r_b>2,\\
(r_a,r_b,f_1(v),f_1(v),f_1(v)-1,f_1(v)-1,0,0)   & \text{if } r_a = r_b > 1,\\
(r_a,r_b,f_1(v),f_2(v),f_1(v)-1,f_1(v)-1,0,f_1(v)-1)   & \text{if } 1 < r_a = r_b-1,\\
(r_a,r_b,f_1(v),f_2(v),f_1(v)-1,f_2(v)-1,0,f_1(v)-1)   & \text{if } 1 < r_a < r_b-1,\\
\end{array}
\]
and $\bm{q}_{\infty,\infty}=(\infty,\infty,0,0,f_1(v)-1,f_1(v)-1,0,0)$.
Let
\[
\mathcal{Q}' =
\begin{cases}
\{ \bm{q}_{0,0}\}
& \text{if } v \in S_a \cap S_b\\
\big\{ \bm{q}_{0,r_b} \,\big|\, r_b \in [\fin]\big\}
& \text{if } v \in S_a \setminus S_b\\
\big\{ \bm{q}_{r_a,0} \,\big|\, r_a \in [\fin]\big\}
& \text{if } v \in S_b \setminus S_a\\
\big\{ \bm{q}_{r_a,r_b} \,\big|\, r_a,r_b \in [\fin] \big\}\cup
\{\bm{q}_{\infty,\infty}\}
& \text{if } v \notin S_a \cup S_b\\
\end{cases}
\]
and
let $\mathcal{Q}$ be obtained from $\mathcal{Q}'$ by removing all octuples containing a $-1$.
Note that $Q$ is always non-empty, as $\bm{q}_{0,0} \in Q$.
Set $\DP_x[\bm{p}] = \min_{\bm{q} \in \mathcal{Q}} \Big(\DP_y\big[\bm{p} \cup (v \mapsto \bm{q})\big]\Big)$,
where the pattern $\bm{p}'=\bm{p} \cup (v \mapsto \bm{q})$ is such that $\bm{p}'|_{\beta(x)} = \bm{p}|_{\beta(x)}$ and
the values for $v$ are given by $\bm{q}$.

\begin{lemma}\label{lem:twDP:forgetNode}
	Let~$x$ be a node forgetting a vertex~$v$ with the child node~$y$.
	If $\DP_y$ was computed correctly, and $\DP_x$ is computed using the algorithm for the forget node, then $\DP_x$ is also computed correctly.
\end{lemma}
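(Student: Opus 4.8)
The plan is to prove the two inequalities that together give $\DP_x[\bm{p}] = \inf_{\bm{q} \in \mathcal{Q}} \DP_y[\bm{p} \cup (v \mapsto \bm{q})]$. Throughout I would exploit that a forget node satisfies $G_x = G_y$ (so $V_x = V_y$ and every vertex has the same neighbourhood in both graphs), and that two modified processes driven by the same sets $T^x_a,T^x_b$ differ \emph{only} in how the single vertex $v$ is handled: at $y$ it is a bag vertex, forced to enter opinion $c$ in round $r_c(v)$, whereas at $x$ it lies in $\alpha(x)$ and is governed by the ordinary thresholds $f_1,f_2$. Hence the whole argument reduces to controlling the behaviour of $v$, and I would reuse the monotonicity $\widehat{P}^{i}_c \subseteq \widehat{P}^{i+1}_c$ of the process repeatedly.

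For $\DP_x[\bm{p}] \le \inf_{\bm{q}}\DP_y[\bm{p}\cup(v\mapsto\bm{q})]$ I would take a minimum-size solution $T^x_a,T^x_b$ complying with $\bm{p}$ at $x$, read off from the threshold-driven process at $x$ the rounds $r_a,r_b\in\{0,\dots,\fin\}\cup\{\infty\}$ in which $v$ gains opinions $a$ and $b$, and claim the same pair complies with $\bm{p}\cup(v\mapsto\bm{q}_{r_a,r_b})$ at $y$. Since the forced rounds at $y$ coincide with the real rounds at $x$, a short induction on the round index shows the two processes produce identical set sequences; the stabilisation condition is inherited, and the bounds $g_c(v),h_c(v)$ prescribed by $\bm{q}_{r_a,r_b}$ are exactly the threshold inequalities that made $v$ activate in round $r_c$. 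One also checks that $\bm{q}_{r_a,r_b}$ survives the removal of sextuples containing $-1$, which holds precisely because the relevant thresholds were attained.

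For the reverse inequality I would fix $\bm{q}=\bm{q}_{r_a,r_b}\in\mathcal{Q}$ together with a solution complying with $\bm{p}\cup(v\mapsto\bm{q})$ at $y$, and argue it complies with $\bm{p}$ at $x$. Writing $\overline{P}^i_c$ for the $x$-process sets and $\widehat{P}^i_c$ for the $y$-process sets, the heart is an induction proving $\overline{P}^i_c=\widehat{P}^i_c$ for all $i,c$: the step is immediate for every vertex other than $v$ (bag vertices are forced identically; $\alpha$-vertices see identical neighbourhoods), so it suffices to show the threshold dynamics at $x$ activate $v$ in opinion $c$ in \emph{exactly} round $r_c(v)$. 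This is where the six cases defining $\bm{q}_{r_a,r_b}$ enter: one verifies case by case that the lower bound $g_c$ at round $r_c-1$ forces activation in round $r_c$, while the upper bound $h_c$ at round $r_c-2$ together with monotonicity forbids earlier activation, with $f_1$ or $f_2$ selected according to whether $c$ is gained as a first or a second opinion (simultaneous gain $r_a=r_b$, consecutive $r_a=r_b-1$, separated $r_a<r_b-1$, gained from the seed $r_c=0$, or never gained $r_c=\infty$). Once $v$'s activation is pinned down, the viability conditions for the remaining vertices of $\beta(x)=\beta(y)\setminus\{v\}$ transfer verbatim, and the minimality bookkeeping ($\DP_y[\bm{p}\cup(v\mapsto\bm{q})]\ge s$ and $\le s$) is identical to the other node types.

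The main obstacle is this backward direction, and specifically the exhaustive check that the tabulated $g_c,h_c$ truly characterise the threshold activation of $v$ in all regimes. The delicate point is the \emph{interaction of the two opinions}: when $v$ is scheduled to take $a$ strictly before $b$, one must guarantee that the threshold dynamics do not let $v$ acquire $b$ as a \emph{first} opinion before round $r_a$ (which would enlarge $\widehat{P}_b$ early and could invalidate the pattern on a neighbour lying in $\beta(x)$). Pinning this down requires arguing that, while $v$ carries no opinion, its $b$-neighbourhood stays below $f_1$ up to round $r_a$, and only the $f_2$-threshold governs the later gain of $b$ — an argument that leans on monotonicity and on the forced order of opinions rather than on a single tabulated bound. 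Cleanly dispatching the boundary regimes $r_c\in\{0,1\}$ (where the $h_c$-condition is vacuous since $\widehat{P}^{-1}_c=\emptyset$) and $r_c=\infty$ (where only the final upper bound $|N_{G_x}(v)\cap\widehat{P}^{\fin}_c|\le h_c(v)$ applies) is the fiddliest part of the verification.
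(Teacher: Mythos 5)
Your proposal follows essentially the same route as the paper's proof: both directions are established by transferring a complying solution between $x$ and $y$ unchanged, reading off (respectively imposing) the activation rounds of $v$, and the backward direction rests on the same round-by-round induction with a case split on the position of $i$ relative to $r_a \le r_b$, using the tabulated $g_c$/$h_c$ values (with $f_1$ or $f_2$ chosen by whether the opinion is gained first or second) to pin $v$'s activation to exactly the prescribed rounds — including the delicate corner you flag, which the paper likewise handles via monotonicity of the $\widehat{P}^i_c$. The only blemish is that the labels of your two inequalities are swapped (transferring the minimum $x$-solution to $y$ yields $\inf_{\bm{q}}\DP_y[\bm{p}\cup(v\mapsto\bm{q})]\le s$, not the reverse), but the substance of each direction is exactly the one the paper uses.
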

\begin{proof}%
	Let~$\bm{p}$ be a valid solution pattern for~$x$.
	Let us first note that for each $\bm{q} = (\overline{r}_a,\overline{r}_b,\overline{g}_a,\overline{g}_b,\overline{h}_a,\overline{h}_b,\overline{\eta}_a,\overline{\eta}_b)$ in $\mathcal{Q}$ the pattern $\bm{p}'=\bm{p} \cup (v \mapsto \bm{q})$ is valid.
	Indeed, $\bm{p}$ is valid for all $v' \in \beta(x)$ and for $v$ we have the following.
	First,  if $r'_a(v) = \overline{r}_a = \infty$, then $\bm{q} = \bm{q}_{\infty,\infty}$, thus $r'_b(v)= \overline{r}_b = \infty$, and vice versa. 
	Second, if $v \in S_c$ for some $c \in \{ a,b \}$, then for each $\bm{q} = (\overline{r}_a,\overline{r}_b,\overline{g}_a,\overline{g}_b,\overline{h}_a,\overline{h}_b,\overline{\eta}_a,\overline{\eta}_b)$ in $\mathcal{Q}$ we have $r'_c(v) = \overline{r}_c = 0$.
	Third, for each $\bm{q} = (\overline{r}_a,\overline{r}_b,\overline{g}_a,\overline{g}_b,\overline{h}_a,\overline{h}_b,\overline{\eta}_a,\overline{\eta}_b)$ in $\mathcal{Q}$ and each $c \in \{ a,b \}$, if $r'_c(v) = \overline{r}_c  \in \{0,\infty\}$, then $g'_c(v) = \overline{g}_c=0$, whereas if $r'_c(v) = \overline{r}_c  \in \{0,1\}$, then $h'_c(v) = \overline{h}_c=0$.
	Fourth, if $r'_c(v) = \overline{r}_c \le r'_{\neg c}(v) = \overline{r}_{\neg c}$ or $r'_{\neg c}(v) = \overline{r}_{\neg c} \in \{0,\infty\}$, then $\eta'_c(v) = \overline{\eta}_c=0$, and if $r'_{\neg c}(v)-1 = \overline{r}_{\neg c}-1 = r'_{c}(v)-2 = \overline{r}_{c}-2$, then $\eta'_c(v) = \overline{\eta}_c = \overline(h)_c = h'_c(v)$. 
	Therefore, $\bm{p}'$ is indeed valid and $\DP_y[\bm{p}']$ is well defined.
	
	To finish the proof we need the following two claims.
\begin{claim}\label{cla:tw_Forget_to_child}
    If there is a solution $T^x_a,T^x_b \subseteq V_x=V_y$  that complies with $\bm{p}$, then it also complies with $\bm{p}'=\bm{p} \cup (v \mapsto \bm{q})$ for some $\bm{q} = (\overline{r}_a,\overline{r}_b,\overline{g}_a,\overline{g}_b,\overline{h}_a,\overline{h}_b,\overline{\eta}_a,\overline{\eta}_b)$ in $\mathcal{Q}$.
\end{claim}

\begin{claimproof}	
	Let $T^x_a,T^x_b \subseteq V_x=V_y$ be a solution that complies with $\bm{p}$.
	Consider the sets $\widehat{P}^{i}_c$ obtained in process $\mathcal{\widehat{P}}(x,r_a,r_b,T^x_a,T^x_b)$ and let $\widehat{P}^{-1}_c=\emptyset$. Let $\overline{r}_c$ be such that $v \in \left(\widehat{P}^{\overline{r}_c}_c\setminus \widehat{P}^{\overline{r}_c-1}_c\right)$ or $\infty$ if $v \notin \widehat{P}^{\fin}_c$ for $c \in \{a,b\}$. Note that we have $v \in S_c \cup T^x_c =\widehat{P}^0_c$ if and only if $\overline{r}_c =0$. Thus $\bm{q}_{\overline{r}_a,\overline{r}_b} \in \mathcal{Q}'$.
	
	We want to show that also $\bm{q}_{\overline{r}_a,\overline{r}_b} \in \mathcal{Q}$ and that $T^x_a,T^x_b$ complies with $\bm{p}'=\bm{p} \cup (v \mapsto \bm{q}_{\overline{r}_a,\overline{r}_b})$. Assume that $\overline{r}_a \le \overline{r}_b$.
	If $\overline{r}_a \in \{1, \ldots, \fin\}$, then, since $v \in \left(\widehat{P}^{\overline{r}_a}_a\setminus \widehat{P}^{\overline{r}_a-1}_a\right)$ and $v \notin \widehat{P}^{\overline{r}_a-1}_b$, we have $f_1(v)=\overline{g}_a \le \big|N_{G_x}(v) \cap \widehat{P}^{\overline{r}_a-1}_a\big|$. If $\overline{r}_a \in \{2, \ldots, \fin\}$, then, as $v \notin \widehat{P}^{\overline{r}_a-1}_a$, we have $f_1(v)-1=\overline{h}_a \ge \big|N_{G_x}(v) \cap \widehat{P}^{\overline{r}_a-2}_a\big|$. Similarly, if $\overline{r}_a = \infty$, then, as $v \notin \widehat{P}^{\fin+1}_a$, we have $f_1(v)-1=\overline{h}_a \ge \big|N_{G_x}(v) \cap \widehat{P}^{\fin}_a\big|$. If $\overline{r}_b=\overline{r}_a$, then we have the same bounds for $\overline{g}_b$ and $\overline{h}_b$.
	If $\overline{r}_b=\overline{r}_a+1$, then, as $v \in \left(\widehat{P}^{\overline{r}_b}_b\setminus \widehat{P}^{\overline{r}_b-1}_b\right)$ and $v \in \widehat{P}^{\overline{r}_b-1}_a$ we have $f_2(v)=\overline{g}_b \le \big|N_{G_x}(v) \cap \widehat{P}^{\overline{r}_b-1}_b\big|$ and, as $v \notin \widehat{P}^{\overline{r}_b-2}_a$, that $f_1(v)-1=\overline{h}_b \ge \big|N_{G_x}(v) \cap \widehat{P}^{\overline{r}_b-2}_b\big|$.
	If $\overline{r}_b\ge\overline{r}_a+2$, then, as $v \in \left(\widehat{P}^{\overline{r}_b}_b\setminus \widehat{P}^{\overline{r}_b-1}_b\right)$ and $v \in \widehat{P}^{\overline{r}_b-1}_a$ we have $f_2(v)=\overline{g}_b \le \big|N_{G_x}(v) \cap \widehat{P}^{\overline{r}_b-1}_b\big|$ and, as $v \in \widehat{P}^{\overline{r}_b-2}_a$, that $f_2(v)-1=\overline{h}_b \ge \big|N_{G_x}(v) \cap \widehat{P}^{\overline{r}_b-2}_b\big|$.
	Furthermore, in the last two cases, as $v \notin \widehat{P}^{\overline{r}_a}_b$, we have $f_1(v)-1=\overline{\eta}_b \ge \big|N_{G_x}(v) \cap \widehat{P}^{\overline{r}_a-1}_b\big|$.
	In particular, these inequalities imply that all coordinates of $\bm{q}_{\overline{r}_a,\overline{r}_b}$ are at least $0$ and, hence, $\bm{q}_{\overline{r}_a,\overline{r}_b} \in \mathcal{Q}$.
	
	To show that $T^x_a,T^x_b$ complies with $\bm{p}'=(r'_a,r'_b,g'_a,g'_b,h'_a,h'_b,\eta'_a,\eta'_b)$ note that $r'^{-1}_c(0)=(S_c \cup T_c^x) \cap \beta(y)$ due to the way we defined $\overline{r}_c$. 
	Consider the sets $\overline{P}^{i}_c$ obtained in process $\mathcal{\widehat{P}}(y,r'_a,r'_b,T^x_a,T^x_b)$. Again, due to the way we defined $\overline{r}_c$ we have $\overline{P}^{i}_c = \widehat{P}^{i}_c$ for every $i \in \{0, \ldots, \fin\}$ and each $c \in \{a,b\}$. Hence the condition $\widehat{P}_a^{\fin}= \widehat{P}_a^{\fin+1} = \widehat{P}_b^{\fin}= \widehat{P}_b^{\fin+1}$ remains satisfied. Similarly, the viability conditions are satisfied for each vertex $v' \in \beta(x)$ as $\bm{p}'|_{\beta(x)} = \bm{p}|_{\beta(x)}$. The viability conditions for the vertex $v$ are given by the inequalities of the previous paragraph. Hence $\mathcal{\widehat{P}}(y,r'_a,r'_b,T^x_a,T^x_b)$ is viable and $T^x_a,T^x_b$ complies with $\bm{p}'$.
\end{claimproof}

\begin{claim}\label{cla:tw_Forget_from_child}
    If there is a solution $T^y_a,T^y_b \subseteq V_x=V_y$ that complies with $\bm{p}'=\bm{p} \cup (v \mapsto \bm{q})$ for some $\bm{q} = (\overline{r}_a,\overline{r}_b,\overline{g}_a,\overline{g}_b,\overline{h}_a,\overline{h}_b,\overline{\eta}_a,\overline{\eta}_b)$ in $\mathcal{Q}$, then it also complies with~$\bm{p}$. 
\end{claim}

\begin{claimproof}
	Let $T^y_a,T^y_b \subseteq V_x=V_y$ be a solution that complies with $\bm{p}'=\bm{p} \cup (v \mapsto \bm{q})$ for some $\bm{q} = (\overline{r}_a,\overline{r}_b,\overline{g}_a,\overline{g}_b,\overline{h}_a,\overline{h}_b,\overline{\eta}_a,\overline{\eta}_b)$ in $\mathcal{Q}$. 
	Let $\bm{p}'=(r'_a,r'_b,g'_a,g'_b,h'_a,h'_b,\eta'_a,\eta'_b)$. 
	Since $r_c=r'_c|_{\beta(x)}$, $r'^{-1}_c(0)=(S_c \cup T_c^y) \cap \beta(y)$ implies $r^{-1}_c(0)=(S_c \cup T_c^y) \cap \beta(x)$. 
	Our goal is to show the viability of the process $\mathcal{\widehat{P}}(x,r_a,r_b,T^y_a,T^y_b)$ based on the viability of the process $\mathcal{\widehat{P}}(y,r'_a,r'_b,T^y_a,T^y_b)$. To this end let $\widehat{P}^{i}_c$ be the sets obtained in process $\mathcal{\widehat{P}}(y,r'_a,r'_b,T^y_a,T^y_b)$ and $\overline{P}^{i}_c$ be the sets obtained in process $\mathcal{\widehat{P}}(x,r_a,r_b,T^y_a,T^y_b)$.
	We claim that $\widehat{P}^{i}_c = \overline{P}^{i}_c$ for each $i \in \{0, \ldots, \fin\}$ and each $c \in \{a,b\}$. Once this is proven, the viability of $\mathcal{\widehat{P}}(x,r_a,r_b,T^y_a,T^y_b)$ follows, since $\bm{p}=\bm{p}'|_{\beta(x)}$.
	
	We prove the claim by induction on $i$. For $i=0$ we have $\widehat{P}^{0}_c = \overline{P}^{0}_c = T^y_c \cup (S_c \cap V_x)$, constituting the base case of the induction. For $i \ge 1$ note that for each $v' \in V_x \setminus \{v\}$ the conditions for $v'$ to be included in $\widehat{P}^{i}_c$ and in $\overline{P}^{i}_c$ are the same, as $\widehat{P}^{i-1}_c = \overline{P}^{i-1}_c$.
	Assume that $\overline{r}_a \le \overline{r}_b$.
	\begin{itemize}
		\item
		If $i < \overline{r}_a$, then $v \notin \left(\widehat{P}^{i}_a \cup \widehat{P}^{i}_b\right)$ and, as the process $\mathcal{\widehat{P}}(y,r'_a,r'_b,T^y_a,T^y_b)$ is viable, we have 
		\[\big|N_{G_x}(v) \cap \widehat{P}^{i-1}_a\big| \le \big|N_{G_x}(v) \cap \widehat{P}^{\overline{r}_a-2}_a\big| \le h'_a(v) = \overline{h}_a= f_1(v)-1.\] As $\widehat{P}^{i-1}_a=\overline{P}^{i-1}_a$ by induction hypothesis, this implies that $v \notin \overline{P}^{i}_a$ and $\widehat{P}^{i}_a=\overline{P}^{i}_a$. %
		If $\overline{r}_a < \overline{r}_b$, then 
		\[\big|N_{G_x}(v) \cap \widehat{P}^{i-1}_b\big| \le \big|N_{G_x}(v) \cap \widehat{P}^{\overline{r}_a-1}_b\big| \le \eta'_b(v) = \overline{\eta}_b= f_1(v)-1,\]
		which again implies $v \notin \overline{P}^{i}_b$ and $\widehat{P}^{i}_b=\overline{P}^{i}_b$.
		This is the case even for $i=\overline{r}_a < \overline{r}_b$.
		\item
		If $i = \overline{r}_a$, then $v \in \widehat{P}^{i}_a$,  and, as the process $\mathcal{\widehat{P}}(y,r'_a,r'_b,T^y_a,T^y_b)$ is viable, we have $\big|N_{G_x}(v) \cap \widehat{P}^{i-1}_a\big|  \ge  g'_a(v) = \overline{g}_a=f_1(v)$. As $\widehat{P}^{i-1}_a=\overline{P}^{i-1}_a$ by induction hypothesis, this implies that $v \in \overline{P}^{i}_a$ and $\widehat{P}^{i}_a=\overline{P}^{i}_a$. Similar argument works for $\overline{P}^{i}_b$ if $\overline{r}_b = \overline{r}_a$.
		\item
		If $\overline{r}_a < i < \overline{r}_b$, then $v \in \widehat{P}^{i-1}_a= \overline{P}^{i-1}_a$, $v \notin \widehat{P}^{i}_b$ and, as the process $\mathcal{\widehat{P}}(y,r'_a,r'_b,T^y_a,T^y_b)$ is viable, we have 
		\[\big|N_{G_x}(v) \cap \widehat{P}^{i-1}_b\big|  \le  \big|N_{G_x}(v) \cap \widehat{P}^{\overline{r}_b-2}_b\big|  \le h'_b(v) = \overline{h}_b = f_2(v)-1.\] 
		As $\widehat{P}^{i-1}_b=\overline{P}^{i-1}_b$ by induction hypothesis, this implies that $v \notin \overline{P}^{i}_b$ and $\widehat{P}^{i}_b=\overline{P}^{i}_b$.
		\item
		If $i= \overline{r}_b > \overline{r}_a$, then $v \in \widehat{P}^{i-1}_a= \overline{P}^{i-1}_a$, $v \in \widehat{P}^{i}_b$ and, as the process $\mathcal{\widehat{P}}(y,r'_a,r'_b,T^y_a,T^y_b)$ is viable, we have $\big|N_{G_x}(v) \cap \widehat{P}^{i-1}_b\big|  \ge   g'_b(v) = \overline{g}_b = f_2(v)$. As $\widehat{P}^{i-1}_b=\overline{P}^{i-1}_b$ by induction hypothesis, this implies that $v \in \overline{P}^{i}_b$ and $\widehat{P}^{i}_b=\overline{P}^{i}_b$.
		\item
		If $i> \overline{r}_b$, then $v \in \widehat{P}^{i-1}_a= \overline{P}^{i-1}_a$ and $v \in \widehat{P}^{i-1}_b= \overline{P}^{i-1}_b$, thus $v \in \widehat{P}^{i}_a$, $v \in \overline{P}^{i}_a$, $v \in \widehat{P}^{i}_b$, and $v \in \overline{P}^{i}_b$. Therefore $\widehat{P}^{i}_a=\overline{P}^{i}_a$ and $\widehat{P}^{i}_b=\overline{P}^{i}_b$.
	\end{itemize}
	The proof for the case $\overline{r}_a > \overline{r}_b$ follows by a symmetric argument.
	This finishes the proof of the induction step and the proof of the claim.
\end{claimproof}	
	
	To sum up, let $T^x_a,T^x_b \subseteq V_x$ be a minimum size solution which complies with $\bm{p}$ and let it be of size $s$.
	If we had $\DP_y[\bm{p}'] < s$ for $\bm{p}'=\bm{p} \cup (v \mapsto \bm{q})$ for some $\bm{q} \in Q$, then, since $\DP_y[\bm{p}']$ was computed correctly, there would be a corresponding solution $T^y_a,T^y_b \subseteq V_y$ which complies with $\bm{p}'$ and is of size strictly less than $s$. 
	But, by \Cref{cla:tw_Forget_from_child}, this solution would comply with $\bm{p}$, contradicting the minimality of $T^x_a,T^x_b$. 
	Hence we have $\DP_y[\bm{p}'] \ge s$ for every $\bm{p}'=\bm{p} \cup (v \mapsto \bm{q}),\bm{q} \in Q$. 
	Moreover, as $T^x_a,T^x_b$ complies with $\bm{p}'$ for $\bm{p}'=\bm{p} \cup (v \mapsto \bm{q})$ for suitable $\bm{q} \in Q$ by \Cref{cla:tw_Forget_to_child}, we have $\DP_y[\bm{p}'] \le s$ for this $\bm{p}'$, that is, $\DP_y[\bm{p}'] = s$. 
	Hence $\DP_x[\bm{p}]=s$ and it is computed correctly.
\end{proof}

\paragraph{Join Node}\label{sec:join}
Let~$x$ be a join node with children~$y$ and~$z$ and let~$\bm{p}=(r_a,r_b,g_a,g_b,h_a,h_b,\eta_a,\eta_b)$ be a valid solution pattern for~$x$.
Set \[\DP_x[\bm{p}] = \inf (\DP_y[\bm{p}_y]+\DP_z[\bm{p}_z])- \sum_{c\in \{a,b\}}\sum_{v\in \beta(x)} \cost(c,\bm{p},v),\]
where $\cost$ is as in \Cref{eq:cost} and the infimum is taken over all pairs of valid patterns ${\bm{p}_y=(r^y_a,r^y_b,g^y_a,g^y_b,h^y_a,h^y_b,\eta^y_a,\eta^y_b)}$ and $\bm{p}_z=(r^z_a,r^z_b,g^z_a,g^z_b,h^z_a,h^z_b,\eta^z_a,\eta^z_b)$ such that for every ${v \in \beta(x)}$ and for every $c \in \{a,b\}$ we have $r^y_c(v)=r^z_c(v)=r_c(v)$, $g^y_c(v)+g^z_c(v)=g_c(v)$, $h^y_c(v)+h^z_c(v)=h_c(v)$, and $\eta^y_c(v)+\eta^z_c(v)=\eta_c(v)$ .

\begin{lemma}\label{lem:twDP:joinNode}
	Let~$x$ be a join node with children~$y$ and~$z$.
	If $\DP_y$ and $\DP_z$ were computed correctly, and $\DP_x$ is computed using the algorithm for the join node, then $\DP_x$ is also computed correctly.
\end{lemma}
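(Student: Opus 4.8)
The plan is to follow the same pattern as the proofs for the other node types: exhibit a correspondence between solutions complying with $\bm{p}$ at $x$ and pairs of solutions complying with \emph{compatible} patterns at the children $y$ and $z$, and then verify that the sizes match the claimed formula. Two structural facts about a join node drive everything. Since $\beta(x)=\beta(y)=\beta(z)$ and $V_y\cap V_z=\beta(x)$, the sets $\alpha(y)$ and $\alpha(z)$ are disjoint, every vertex of $\alpha(y)$ has all its neighbours inside $V_y$ (and symmetrically for $\alpha(z)$), and $V_x=V_y\cup V_z$. Since every edge is introduced exactly once, $E_y\cap E_z=\emptyset$, which gives for each $v\in\beta(x)$ a disjoint split $N_{G_x}(v)=N_{G_y}(v)\sqcup N_{G_z}(v)$.

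The technical heart is a claim that the modified activation processes compose. Concretely: if $T^x_c=T^y_c\cup T^z_c$ with $T^y_c\subseteq V_y$, $T^z_c\subseteq V_z$, and the three patterns share the same round function $r_c$ on $\beta(x)$, then, writing $\widehat{P}^{y,i}_c$ and $\widehat{P}^{z,i}_c$ for the sets of the processes $\mathcal{\widehat{P}}(y,\dots)$ and $\mathcal{\widehat{P}}(z,\dots)$, the sets $\widehat{P}^{i}_c$ of $\mathcal{\widehat{P}}(x,r_a,r_b,T^x_a,T^x_b)$ satisfy $\widehat{P}^{i}_c=\widehat{P}^{y,i}_c\cup\widehat{P}^{z,i}_c$ for all $i$ and $c$. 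I would prove this by induction on $i$: the base case uses $S_c\cap V_x=(S_c\cap V_y)\cup(S_c\cap V_z)$; in the step, a vertex of $\alpha(y)$ sees only $V_y$ and so is decided exactly as in the $y$-process (symmetrically for $\alpha(z)$), while a bag vertex enters $\widehat{P}^i_c$ in all three processes precisely when $i=r_c(v)$. As an immediate corollary, the disjoint neighbourhood split yields, for every $v\in\beta(x)$ and every round $j$, the identity $|N_{G_x}(v)\cap\widehat{P}^{j}_c|=|N_{G_y}(v)\cap\widehat{P}^{y,j}_c|+|N_{G_z}(v)\cap\widehat{P}^{z,j}_c|$.

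With this in hand both directions become routine. For \emph{composition}, take minimum solutions complying with valid patterns $\bm{p}_y,\bm{p}_z$ that satisfy $r^y_c=r^z_c=r_c$, $g^y_c+g^z_c=g_c$ and $h^y_c+h^z_c=h_c$ on $\beta(x)$, and set $T^x_c=T^y_c\cup T^z_c$; the stabilisation condition follows from the union formula, while the lower- and upper-bound conditions of viability follow by adding the inequalities at $y$ and $z$ and invoking the neighbour-count identity. For \emph{decomposition}, start from a minimum solution $T^x_a,T^x_b$ complying with $\bm{p}$, set $T^y_c=T^x_c\cap V_y$ and $T^z_c=T^x_c\cap V_z$, keep $r_c$, and split the bounds by, e.g., $g^y_c(v)=\min\{g_c(v),|N_{G_y}(v)\cap\widehat{P}^{y,r_c(v)-1}_c|\}$ and $h^y_c(v)=|N_{G_y}(v)\cap\widehat{P}^{y,r_c(v)-2}_c|$ (with $g^z_c=g_c-g^y_c$, $h^z_c=h_c-h^y_c$, and the obvious variant using $\widehat{P}^{y,\fin}_c$ when $r_c(v)=\infty$). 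The viability of $\bm{p}$ together with the neighbour-count identity guarantees that these coordinates lie in $\{0,\dots,f_{\max}\}$ and that both resulting patterns are valid and viable; the zero-conventions of validity handle the ignored coordinates when $r_c(v)\in\{0,1,\infty\}$.

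It remains to account for sizes, which is where the subtracted term appears. Since $T^y_c\subseteq V_y$ and $T^z_c\subseteq V_z$ meet only inside $\beta(x)$, we have $|T^x_c|=|T^y_c|+|T^z_c|-|T^y_c\cap T^z_c|$ with $T^y_c\cap T^z_c=T^x_c\cap\beta(x)$. Working with minimum solutions (so that $T_c\cap S_c=\emptyset$), compliance forces a bag vertex to lie in $T^x_c$ exactly when $r_c(v)=0$ and $v\notin S_c$, i.e.\ exactly when $\cost(c,\bm{p},v)=1$; hence the overcount equals $\sum_{v\in\beta(x)}\cost(c,\bm{p},v)$, the very term subtracted in the formula. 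Summing this identity over $c\in\{a,b\}$ and combining it with the two directions shows that minimising $\DP_y[\bm{p}_y]+\DP_z[\bm{p}_z]-\sum_{c}\sum_{v\in\beta(x)}\cost(c,\bm{p},v)$ over compatible pairs computes the size of a minimum solution complying with $\bm{p}$. I expect the induction establishing the process-composition claim—keeping the $\alpha(y)$, $\alpha(z)$, and bag vertices bookkept consistently across the three processes—to be the main obstacle, with the existence of a valid bound split in the decomposition direction the second most delicate point.
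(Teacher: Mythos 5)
Your proposal is correct and follows essentially the same route as the paper's proof: the same structural facts about a join node, the same inductive claim that the three modified activation processes compose (union on $V_x$, restriction to $V_y$ and $V_z$), the same additive split of the $g$- and $h$-bounds in the decomposition direction, and the same $\cost$-based accounting for the double-counted bag vertices. Your use of $g^y_c(v)=\min\{g_c(v),|N_{G_y}(v)\cap\widehat{P}^{y,r_c(v)-1}_c|\}$ is a minor (and slightly more careful) variant of the paper's split $g^y_c(v)=|N_{G_y}(v)\cap\widehat{P}^{r_c(v)-1}_{(y),c}|$, ensuring the complementary coordinate stays nonnegative, but it does not change the argument.
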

\begin{proof}
	Note first that, since each edge is introduced only once, for each vertex $v \in \beta(x)=\beta(y)=\beta(z)$ we have $N_{G_x}(v)=N_{G_y}(v) \cup N_{G_z}(v)$ and $N_{G_y}(v) \cap N_{G_z}(v) = \emptyset$. Note also that $V_y \cap V_z=\beta(x)$ and for each vertex $v \in \alpha(y)$ we have $N_{G_x}(v)=N_{G_y}(v) \subseteq V_y$ and similarly for $v \in \alpha(z)$.
	
	Let $\bm{p}=(r_a,r_b,g_a,g_b,h_a,h_b,\eta_a,\eta_b)$ be a valid pattern for~$x$. We need the following two claims. 

\begin{claim}\label{cla:tw_Join_from_child}
	Let $\bm{p}_y=(r^y_a,r^y_b,g^y_a,g^y_b,h^y_a,h^y_b,\eta^y_a,\eta^y_b)$ be a valid pattern for~$y$, and $\bm{p}_z=(r^z_a,r^z_b,g^z_a,g^z_b,h^z_a,h^z_b,\eta^z_a,\eta^z_b)$ be a valid pattern for~$z$ such that for every $v \in \beta(x)$ and every $c \in \{a,b\}$ we have $r^y_c(v)=r^z_c(v)=r_c(v)$, $g^y_c(v)+g^z_c(v)=g_c(v)$, $h^y_c(v)+h^z_c(v)=h_c(v)$, and $\eta^y_c(v)+\eta^z_c(v)=\eta_c(v)$.
	If $T^y_a,T^y_b \subseteq V_y$ is a solution which complies with $\bm{p}_y$ and $T^z_a,T^z_b\subseteq V_z$ is a solution which complies with $\bm{p}_z$, then 
	$T^x_a=T^y_a \cup T^z_a$ and $T^x_b=T^y_b \cup T^z_b$ form a solution which complies with $\bm{p}$ and has size $|T^y_a|+|T^y_b|+|T^z_a|+|T^z_b|-\sum_{c\in \{a,b\}}\sum_{v\in \beta(x)} \cost(c,\bm{p},v)$.
\end{claim}

\begin{claimproof}
    Let $\bm{p}_y, \bm{p}_z, T^y_a,T^y_b, T^z_a$, and $T^z_b$ satisfy the assumptions of the claim.
	Since $T^y_a,T^y_b$ complies with $\bm{p}_y$ we have that $v \in \beta(y)$ is in $T^y_c$ for $c \in \{a,b\}$ if and only if $r^y_c(v)=0$ and $v \notin S_c$. However, this is exactly when $\cost(c,\bm{p}_y,v)$ is $1$. Similarly for $T^z_a,T^z_b$. Since $r_c^y=r_c^z=r_c$, it follows that $T^y_c\cap \beta(x)=T^z_c\cap \beta(x)=T^x_c\cap \beta(x)$, $|T^y_c \cap T^z_c|=\sum_{v\in \beta(x)} \cost(c,\bm{p},v)$, and $|T^x_c|=|T^y_c \cup T^z_c|=|T^y_c|+|T^z_c|-\sum_{v\in \beta(x)} \cost(c,\bm{p},v)$, giving the size bound.
	
	We want to show the viability of the process $\mathcal{\widehat{P}}(x,r_a,r_b,T^x_a,T^x_b)$.
	To this end let $\widehat{P}^{i}_{(x),c}$ be the sets obtained in process $\mathcal{\widehat{P}}(x,r_a,r_b,T^x_a,T^x_b)$,
	$\widehat{P}^{i}_{(y),c}$ be the sets obtained in process $\mathcal{\widehat{P}}(y,r_a,r_b,T^y_a,T^y_b)$, and $\widehat{P}^{i}_{(z),c}$ be the sets obtained in process $\mathcal{\widehat{P}}(z,r_a,r_b,T^z_a,T^z_b)$.
	We claim that  $\widehat{P}^{i}_{(x),c} =\widehat{P}^{i}_{(y),c} \cup \widehat{P}^{i}_{(z),c}$ and $\widehat{P}^{i}_{(x),c} \cap \beta(x)= \widehat{P}^{i}_{(y),c} \cap \beta(x)=\widehat{P}^{i}_{(z),c} \cap \beta(x)$ for each ${i \in \{0, \ldots, \fin+1\}}$ and each $c \in \{a,b\}$.
	
	We prove the claim by induction on $i$. For $i=0$ we have $\widehat{P}^{0}_{(y),c}= T^y_c \cup (S_c \cap V_y)$, $\widehat{P}^{0}_{(z),c}= T^z_c \cup (S_c \cap V_z)$, and  
	\begin{align*}
	\widehat{P}^{0}_{(x),c}&= T^x_c \cup (S_c \cap V_x)\\
	&=T^x_c \cup \big(S_c \cap (V_y \cup V_z)\big)\\
	&= (T^x_c \cap V_y) \cup (T^x_c \cap V_z) \cup (S_c \cap V_y) \cup (S_c \cap V_z) \\
	&= T^y_c \cup (S_c \cap V_y) \cup T^z_c \cup (S_c \cap V_z)\\
	&=\widehat{P}^{0}_{(y),c} \cup \widehat{P}^{0}_{(z),c}.
	\end{align*}
	Moreover, as we have shown, 
	\[\widehat{P}^{0}_{(x),c} \cap \beta(x)= \widehat{P}^{0}_{(y),c} \cap \beta(x)=\widehat{P}^{0}_{(z),c} \cap \beta(x)=r^{-1}_c(0) \setminus S_c.\]
	This constitutes the base case of the induction.
	
	Assume that $i\ge 1$ and that the claim holds for all lesser $i$.
	For $v \in \beta(x)$ we have $v \in \widehat{P}^{i}_{(y),c}$ if and only if $r_c(v) \le i$, and, as $r_c=r^y_c=r^z_c$, the condition for inclusion in $\widehat{P}^{i}_{(z),c}$ and $\widehat{P}^{i}_{(x),c}$ is exactly the same. Hence $\widehat{P}^{i}_{(x),c} \cap \beta(x)= \widehat{P}^{i}_{(y),c} \cap \beta(x)=\widehat{P}^{i}_{(z),c} \cap \beta(x)$.
	For $v \in \alpha(y)$ we have $v \in \widehat{P}^{i}_{(y),c}$ if 
	\begin{itemize}
	\item either $v \in \widehat{P}^{i-1}_{(y),c}$, 
	\item or $v \notin (\widehat{P}_{(y),a}^{i-1} \cup \widehat{P}_{(y),b}^{i-1})$ and $\big|N_{G_y}(v) \cap \widehat{P}_{(y),c}^{i-1}\big| \ge f_1(v)$, 
	\item or $v \in \widehat{P}_{(y),\neg c}^{i-1}$ and $\big|N_{G_y}(v) \cap \widehat{P}_{(y),c}^{i-1}\big| \ge f_2(v)$.
	\end{itemize}
	However, since $N_{G_x}(v)=N_{G_y}(v) \subseteq V_y$, $\widehat{P}^{i-1}_{(x),c} =\widehat{P}^{i-1}_{(y),c} \cup \widehat{P}^{i-1}_{(z),c}$ and, thus, $\widehat{P}^{i-1}_{(y),c} = V_y \cap \widehat{P}^{i-1}_{(x),c}$, we have $\big|N_{G_y}(v) \cap \widehat{P}_{(y),c}^{i-1}\big|=\big|N_{G_x}(v) \cap \widehat{P}_{(x),c}^{i-1}\big|$. 
	Therefore, $v \in \widehat{P}^{i}_{(y),c}$ if 
	\begin{itemize}
	\item either $v \in \widehat{P}^{i-1}_{(x),c}$, 
	\item or $v \notin (\widehat{P}_{(x),a}^{i-1} \cup \widehat{P}_{(x),b}^{i-1})$ and $\big|N_{G_x}(v) \cap \widehat{P}_{(x),c}^{i-1}\big| \ge f_1(v)$,
	\item or 	$v \in \widehat{P}_{(x),\neg c}^{i-1}$ and $\big|N_{G_x}(v) \cap \widehat{P}_{(x),c}^{i-1}\big| \ge f_2(v)$. 
	\end{itemize}
	However, this is exactly if $v \in \widehat{P}^{i}_{(x),c}$.
	A completely analogous argument shows that for $v \in \alpha(z)$ we have $v \in \widehat{P}^{i}_{(z),c}$ if and only if $v \in \widehat{P}^{i}_{(x),c}$.
	This finishes the proof of the induction step and the proof of the correspondence of the processes.
	
	Since the processes $\mathcal{\widehat{P}}(y,r_a,r_b,T^y_a,T^y_b)$ and $\mathcal{\widehat{P}}(z,r_a,r_b,T^z_a,T^z_b)$ are viable, we have $\widehat{P}_{(y),a}^{\fin}= \widehat{P}_{(y),a}^{\fin+1} = \widehat{P}_{(y),b}^{\fin}= \widehat{P}_{(y),b}^{\fin+1}$ and
	$\widehat{P}_{(z),a}^{\fin}= \widehat{P}_{(z),a}^{\fin+1} = \widehat{P}_{(z),b}^{\fin}= \widehat{P}_{(z),b}^{\fin+1}$.
	As $\widehat{P}_{(x),c}^{\fin} = \widehat{P}_{(y),c}^{\fin} \cup \widehat{P}_{(z),c}^{\fin}$ and $\widehat{P}_{(x),c}^{\fin+1} = \widehat{P}_{(y),c}^{\fin+1} \cup \widehat{P}_{(z),c}^{\fin+1}$, it follows that $\widehat{P}_{(x),a}^{\fin}= \widehat{P}_{(x),a}^{\fin+1} = \widehat{P}_{(x),b}^{\fin}= \widehat{P}_{(x),b}^{\fin+1}$.
	
	For each $c \in \{ a,b \}$  and every $v \in \beta(x)$ with $1 \le r_c(v) < \infty$ we have 
	\begin{align*}
	\big| N_{G_x}(v) \cap \widehat{P}_{(x),c}^{r_c(v) - 1} \big| &= \big| (N_{G_y}(v) \cup N_{G_y}(v)) \cap \widehat{P}_{(x),c}^{r_c(v) - 1} \big|\\
	&= \big| N_{G_y}(v) \cap \widehat{P}_{(x),c}^{r_c(v) - 1} \big|+\big| N_{G_z}(v) \cap \widehat{P}_{(x),c}^{r_c(v) - 1} \big| \\
	&= \big| N_{G_y}(v) \cap \widehat{P}_{(y),c}^{r_c(v) - 1} \big|+\big| N_{G_z}(v) \cap \widehat{P}_{(z),c}^{r_c(v) - 1} \big| \\
	&\ge g^y_c(v)+g^z_c(v)\\
	&= g_c(v).
	\end{align*}
	If $r_c(v) \ge 2$, we also have 
	\begin{align*}
	\big| N_{G_x}(v) \cap \widehat{P}_{(x),c}^{r_c(v) - 2} \big|&= \big| N_{G_y}(v) \cap \widehat{P}_{(y),c}^{r_c(v) - 2} \big|+\big| N_{G_z}(v) \cap \widehat{P}_{(z),c}^{r_c(v) - 2} \big| \\
	&\le h^y_c(v)+h^z_c(v)\\
	&= h_c(v).
	\end{align*}
	If $1 \le r_{\neg c}(v) <r_c(v) < \infty$, then 
	\begin{align*}
	\big| N_{G_x}(v) \cap \widehat{P}_{(x),c}^{r_{\neg c}(v) - 1} \big|&= \big| N_{G_y}(v) \cap \widehat{P}_{(y),c}^{r_{\neg c}(v) - 1} \big|+\big| N_{G_z}(v) \cap \widehat{P}_{(z),c}^{r_{\neg c}(v) - 1} \big| \\
	&\le \eta^y_c(v)+\eta^z_c(v)\\
	&= \eta_c(v).
	\end{align*}
	Finally, for each $c \in \{ a,b \}$ and every $v \in \beta(x)$ with $r_c(v) = \infty$ we have 
	\begin{align*}
	\big| N_{G_x}(v) \cap \widehat{P}_{(x),c}^{\fin} \big|&= \big| N_{G_y}(v) \cap \widehat{P}_{(y),c}^{\fin} \big|+\big| N_{G_z}(v) \cap \widehat{P}_{(z),c}^{\fin} \big|\\
	&\le h^y_c(v)+h^z_c(v)\\
	&= h_c(v).
	\end{align*}
	Hence the process $\mathcal{\widehat{P}}(x,r_a,r_b,T^x_a,T^x_b)$ is viable for $\bm{p}$ and $T^x_a,T^x_b$ is a solution which complies with $\bm{p}$.
\end{claimproof}

\begin{claim}\label{cla:tw_Join_to_child}
    If $T^x_a,T^x_b \subseteq V_x$ is a solution which complies with $\bm{p}$, then there is a valid pattern $\bm{p}_y=(r_a,r_b,g^y_a,g^y_b,h^y_a,h^y_b,\eta^y_a,\eta^y_b)$ for~$y$, a valid pattern $\bm{p}_z=(r_a,r_b,g^z_a,g^z_b,h^z_a,h^z_b,\eta^z_a,\eta^z_b)$ for~$z$, a solution $T^y_a,T^y_b \subseteq V_y$ which complies with $\bm{p}_y$, and a solution $T^z_a,T^z_b\subseteq V_z$ which complies with $\bm{p}_z$ such that for every $v \in \beta(x)$ and every $c \in \{a,b\}$ we have $g^y_c(v)+g^z_c(v)=g_c(v)$, $h^y_c(v)+h^z_c(v)=h_c(v)$, $\eta^y_c(v)+\eta^z_c(v)=\eta_c(v)$ and $|T^y_a|+|T^y_b|+|T^z_a|+|T^z_b|=|T^x_a|+|T^x_b|+\sum_{c\in \{a,b\}}\sum_{v\in \beta(x)} \cost(c,\bm{p},v)$.
\end{claim}

\begin{claimproof}	
	Let $T^x_a,T^x_b \subseteq V_x$ be a solution which complies with $\bm{p}$.
	Let $T^y_c= T^x_c \cap V_y$ and $T^z_c= T^x_c \cap V_z$. Since $T^x_a,T^x_b$ complies with $\bm{p}$ we have that $v \in \beta(x)$ is in $T^x_c$ if and only if $r_c(v)=0$ and $v \notin S_c$ for each $c \in \{a,b\}$. However, this is exactly when $\cost(c,\bm{p},v)$ is $1$. Furthermore, the same holds for $T^y_c$ and $T^z_c$, as $\beta(x) = V_y \cap V_z$.
	It follows that $T^y_c\cap \beta(x)=T^z_c\cap \beta(x)=T^x_c\cap \beta(x)$, $|T^y_c \cap T^z_c|=\sum_{v\in \beta(x)} \cost(c,\bm{p},v)$, and $|T^y_c|+|T^z_c| = |T^y_c \cup T^z_c| + |T^y_c \cap T^z_c| = |T^x_c| + \sum_{v\in \beta(x)} \cost(c,\bm{p},v)$, giving the size bound.
	
	Before we define $\bm{p}_y$ and $\bm{p}_z$ we consider the modified processes corresponding to the solutions.
	Let $\widehat{P}^{i}_{(x),c}$ be the sets obtained in process $\mathcal{\widehat{P}}(x,r_a,r_b,T^x_a,T^x_b)$,
	$\widehat{P}^{i}_{(y),c}$ be the sets obtained in process $\mathcal{\widehat{P}}(y,r_a,r_b,T^y_a,T^y_b)$, and $\widehat{P}^{i}_{(z),c}$ be the sets obtained in process $\mathcal{\widehat{P}}(z,r_a,r_b,T^z_a,T^z_b)$.
	We claim that $\widehat{P}^{i}_{(y),c} =\widehat{P}^{i}_{(x),c} \cap V_y$ for each $i \in \{0, \ldots, \fin+1\}$ and each $c \in \{a,b\}$.
	
	We prove the claim by induction on $i$. For $i=0$ we have $\widehat{P}^{0}_{(x),c}= T^x_c \cup (S_c \cap V_x)$ while  $\widehat{P}^{0}_{(y),c}= T^y_c \cup (S_c \cap V_y) = \widehat{P}^{0}_{(x),c} \cap V_y$.
	This constitutes the base case of the induction.
	
	Assume that $i\ge 1$ and that the claim holds for all lesser $i$.
	For $v \in \beta(x)$ we have $v \in \widehat{P}^{i}_{(x),c}$ if and only if $r_c(v) \le i$, and the condition for inclusion in $\widehat{P}^{i}_{(y),c}$ is exactly the same.
	For $v \in \alpha(y)$ we have $v \in \widehat{P}^{i}_{(x),c}$ if 
	\begin{itemize}
	\item either $v \in \widehat{P}^{i-1}_{(x),c}$, 
	\item or $v \notin (\widehat{P}_{(x),a}^{i-1} \cup \widehat{P}_{(x),b}^{i-1})$ and $\big|N_{G_x}(v) \cap \widehat{P}_{(x),c}^{i-1}\big| \ge f_1(v)$, 
	\item or $v \in \widehat{P}_{(x),\neg c}^{i-1}$ and $\big|N_{G_x}(v) \cap \widehat{P}_{(x),c}^{i-1}\big| \ge f_2(v)$.
	\end{itemize}
	However, since $N_{G_x}(v)=N_{G_y}(v) \subseteq V_y$, $\widehat{P}^{i-1}_{(y),c} =\widehat{P}^{i-1}_{(x),c} \cap V_y$, we have $\big|N_{G_y}(v) \cap \widehat{P}_{(y),c}^{i-1}\big|=\big|N_{G_x}(v) \cap \widehat{P}_{(x),c}^{i-1}\big|$. 
	Therefore, $v \in \widehat{P}^{i}_{(x),c}$ if 
	\begin{itemize}
	\item either $v \in \widehat{P}^{i-1}_{(y),c}$, 
	\item or $v \notin (\widehat{P}_{(y),a}^{i-1} \cup \widehat{P}_{(y),b}^{i-1})$ and $\big|N_{G_y}(v) \cap \widehat{P}_{(y),c}^{i-1}\big| \ge f_1(v)$, 
	\item or $v \in \widehat{P}_{(y),\neg c}^{i-1}$ and $\big|N_{G_y}(v) \cap \widehat{P}_{(y),c}^{i-1}\big| \ge f_2(v)$. 
	\end{itemize}
	However, this is exactly if $v \in \widehat{P}^{i}_{(y),c}$.
	This finishes the proof of the induction step and the proof of the correspondence of the processes.
	
	A completely analogous argument shows, that $\widehat{P}^{i}_{(z),c} =\widehat{P}^{i}_{(x),c} \cap V_z$ for each ${i \in \{0, \ldots, \fin+1\}}$ and each $c \in \{a,b\}$. Since $\widehat{P}_{(x),a}^{\fin}= \widehat{P}_{(x),a}^{\fin+1} = \widehat{P}_{(x),b}^{\fin}= \widehat{P}_{(x),b}^{\fin+1}$, it follows that $\widehat{P}_{(y),a}^{\fin}= \widehat{P}_{(y),a}^{\fin+1} = \widehat{P}_{(y),b}^{\fin}= \widehat{P}_{(y),b}^{\fin+1}$ and similarly for $z$.
	
	Now we define the rest of $\bm{p}_y$ and $\bm{p}_z$. Let $c \in \{a,b\}$ and $v \in \beta(x)$.
	If $r_c(v) \in \{0,\infty\}$ then let $g^y_c(v) = g^z_c(v)=0$ (note that $g_c(v)=0$ in this case by the validity of $\bm{p}$).
	Otherwise, we let $g^y_c(v) = \big| N_{G_y}(v) \cap \widehat{P}_{(y),c}^{r_c(v) - 1} \big|$ and $g^z_c(v) =g_c(v)-g^y_c(v)$.
	Note that in this case, by viability of $\mathcal{\widehat{P}}(x,r_a,r_b,T^x_a,T^x_b)$ we have $g_c(v) \le \big| N_{G_x}(v) \cap \widehat{P}_{(x),c}^{r_c(v) - 1} \big|$ and as $N_{G_x}(v)=N_{G_y}(v) \cup N_{G_z}(v)$ and $N_{G_y}(v) \cap N_{G_z}(v) = \emptyset$ we have 
	\[\big| N_{G_z}(v) \cap \widehat{P}_{(z),c}^{r_c(v) - 1} \big|=\big| N_{G_x}(v) \cap \widehat{P}_{(x),c}^{r_c(v) - 1} \big|-\big| N_{G_y}(v) \cap \widehat{P}_{(y),c}^{r_c(v) - 1} \big| \ge g_c(v)-g^y_c(v)=g^z_c(v).\]
	
	If $r_c(v) \in \{0,1\}$ then let $h^y_c(v) = h^z_c(v)=0$ (note that $h_c(v)=0$ in this case).
	If $r_c(v) = \infty$, then we let $h^y_c(v) = \big| N_{G_y}(v) \cap \widehat{P}_{(y),c}^{\fin} \big|$ and $h^z_c(v) =h_c(v)-h^y_c(v)$.
	In this case $\big| N_{G_x}(v) \cap \widehat{P}_{(x),c}^{\fin} \big| \le h_c(v)$ and, thus, 
	\[\big| N_{G_z}(v) \cap \widehat{P}_{(z),c}^{\fin} \big|=\big| N_{G_x}(v) \cap \widehat{P}_{(x),c}^{\fin} \big|-\big| N_{G_y}(v) \cap \widehat{P}_{(y),c}^{\fin} \big| \le h_c(v)-h^y_c(v)=h^z_c(v).\]
	Otherwise let $h^y_c(v) = \big| N_{G_y}(v) \cap \widehat{P}_{(y),c}^{r_c(v) - 2} \big|$ and $h^z_c(v) =h_c(v)-h^y_c(v)$.
	In this case $\big| N_{G_x}(v) \cap \widehat{P}_{(x),c}^{r_c(v) - 2} \big| \le h_c(v)$ and 
	\[\big| N_{G_z}(v) \cap \widehat{P}_{(z),c}^{r_c(v) - 2} \big|=\big| N_{G_x}(v) \cap \widehat{P}_{(x),c}^{r_c(v) - 2} \big|-\big| N_{G_y}(v) \cap \widehat{P}_{(y),c}^{r_c(v) - 2} \big| \le h_c(v)-h^y_c(v)=h^z_c(v).\]
	If $r_c(v) \le r_{\neg c}(v)$ or $r_{\neg c}(v) \in \{0,\infty\}$, then let $\eta^y_c(v)=\eta^z_c(v)=0$ (note that $\eta_c(v)=0$ in this case).
	Otherwise, i.e., if $1 \le r_{\neg c}(v) < r_c(v) <\infty$, then let $\eta^y_c(v) = \big| N_{G_y}(v) \cap \widehat{P}_{(y),c}^{r_{\neg c}(v) - 1} \big|$ and $\eta^z_c(v) =\eta_c(v)-\eta^y_c(v)$.
	In this case $\big| N_{G_x}(v) \cap \widehat{P}_{(x),c}^{r_{\neg c}(v) - 1} \big| \le \eta_c(v)$ and 
	\[\big| N_{G_z}(v) \cap \widehat{P}_{(z),c}^{r_{\neg c}(v) - 1} \big|=\big| N_{G_x}(v) \cap \widehat{P}_{(x),c}^{r_{\neg c}(v) - 1} \big|-\big| N_{G_y}(v) \cap \widehat{P}_{(y),c}^{r_{\neg c}(v) - 1} \big| \le \eta_c(v)-\eta^y_c(v)=\eta^z_c(v).\]
	
	It follows directly from the definition and notes therein that $\bm{p}_y$ and $\bm{p}_z$ are valid and that the processes $\mathcal{\widehat{P}}(y,r_a,r_b,T^y_a,T^y_b)$ and $\mathcal{\widehat{P}}(z,r_a,r_b,T^z_a,T^z_b)$ are viable for them.
	Thus, $T^y_a,T^y_b$ complies with $\bm{p}_y$ and $T^z_a,T^z_b$ complies with $\bm{p}_z$.
\end{claimproof}
	
	To sum up, let $T^x_a,T^x_b \subseteq V_x$ be a minimum size solution which complies with $\bm{p}$ and let it be of size $s$.
	On one hand, if we had $\DP_y[\bm{p}_y] = s_y$ and $\DP_z[\bm{p}_z] = s_z$ for some pair of valid patterns $\bm{p}_y=(r^y_a,r^y_b,g^y_a,g^y_b,h^y_a,h^y_b,\eta^y_a,\eta^y_b)$ and $\bm{p}_z=(r^z_a,r^z_b,g^z_a,g^z_b,h^z_a,h^z_b,\eta^z_a,\eta^z_b)$ such that $s_y+s_z - \sum_{c\in \{a,b\}}\sum_{v\in \beta(x)} \cost(c,\bm{p},v) < s$ and for every $v \in \beta(x)$ and every $c \in \{a,b\}$ we have $r^y_c(v)=r^z_c(v)=r_c(v)$, $g^y_c(v)+g^z_c(v)=g_c(v)$, $h^y_c(v)+h^z_c(v)=h_c(v)$, and $\eta^y_c(v)+\eta^z_c(v)=\eta_c(v)$ then, since $\DP_y[\bm{p}_y]$ and $\DP_y[\bm{p}_z]$ were computed correctly, there would be  corresponding solutions $T^y_a,T^y_b \subseteq V_y$ which complies with $\bm{p}_y$ and is of size $s_y$ and $T^z_a,T^z_b \subseteq V_z$ which complies with $\bm{p}_z$ and is of size $s_z$. But then, by \Cref{cla:tw_Join_from_child}, there would be a solution which complies with $\bm{p}$ and is of size $s_y+s_z - \sum_{c\in \{a,b\}}\sum_{v\in \beta(x)} \cost(c,\bm{p},v) < s$, contradicting the minimality of $T^x_a,T^x_b$. Hence $\DP_x[\bm{p}] \ge s$.
	
	On the other hand, by \Cref{cla:tw_Join_to_child}, there is a pair of valid patterns $\bm{p}_y=(r^y_a,r^y_b,g^y_a,g^y_b,h^y_a,h^y_b,\eta^y_a,\eta^y_b)$ and $\bm{p}_z=(r^z_a,r^z_b,g^z_a,g^z_b,h^z_a,h^z_b,\eta^z_a,\eta^z_b)$ such that for every $v \in \beta(x)$ and every $c \in \{a,b\}$ we have $r^y_c(v)=r^z_c(v)=r_c(v)$, $g^y_c(v)+g^z_c(v)=g_c(v)$, $h^y_c(v)+h^z_c(v)=h_c(v)$, and $\eta^y_c(v)+\eta^z_c(v)=\eta_c(v)$, a solution $T^y_a,T^y_b \subseteq V_y$ of size $s_y$ which complies with $\bm{p}_y$, and a solution $T^z_a,T^z_b \subseteq V_z$ of size $s_z$ which complies with $\bm{p}_z$ such that $s_y+s_z - \sum_{c\in \{a,b\}}\sum_{v\in \beta(x)} \cost(c,\bm{p},v) = s$.
	Thus $\DP_y[\bm{p}_y] \le s_y$ and $\DP_z[\bm{p}_z] \le s_z$. 
	Therefore 
	\begin{align*}
	\DP_x[\bm{p}] &\le \DP_y[\bm{p}_y]+\DP_z[\bm{p}_z]- \sum_{c\in \{a,b\}}\sum_{v\in \beta(x)} \cost(c,\bm{p},v)\\
	&\le s_y+s_z - \sum_{c\in \{a,b\}}\sum_{v\in \beta(x)} \cost(c,\bm{p},v)\\
	&= s.
	\end{align*}
	Hence $\DP_x[\bm{p}]=s$ and it is computed correctly.
\end{proof}

\begin{proof}[Proof of \Cref{thm:twoTSSIsFPTWrtRoundsFmaxTw}]
	As already mentioned, we use bottom-up dynamic programming along a nice tree decomposition of width $\w'=\Oh{\w}$ which has $\w^{\Oh{1}} \cdot n$ nodes and can be found in $2^{\Oh{\w}}\cdot n$ time~\cite{CyganFKLMPPS15}.
	For each node there are at most $(\fin+2)^{2(\w'+1)} \cdot (f_{\max}+1) ^ {6(\w'+1)}$ solution patterns and hence the table has at most $(\fin+2)^{2(\w'+1)} \cdot (f_{\max}+1) ^ {6(\w'+1)}$ entries for each node.
	By the described algorithm, each entry can be computed by traversing all the entries of the child or the children.
	Hence, it can be done in $O\Big((\fin+2)^{4(\w'+1)} \cdot (f_{\max}+1) ^ {12(\w'+1)}\Big)$ time. Since the decomposition has $\w^{\Oh{1}} \cdot n$ nodes, the total running time is \mbox{$(\fin \cdot f_{\max}+1) ^\Oh{\w} \cdot n$}.
	
	The correctness of the algorithm follows by a bottom-up induction from the correctness of the computation in leaf nodes, \Cref{lem:twDP:IntroduceVertex,lem:twDP:IntroduceEdge,lem:twDP:forgetNode,lem:twDP:joinNode}, and \Cref{obs:treewidthSolutionAtRoot}.
\end{proof}

Finally, we note that the dynamic programming algorithm from \Cref{thm:twoTSSIsFPTWrtRoundsFmaxTw} shows that \twoTSSshort is in \XP parameterized by the treewidth alone.

\begin{corollary}
	\twoTSS can be solved in \mbox{$n^\Oh{\w}$} time on graphs of treewidth at most~$\w$.
\end{corollary}

\subsection{Treedepth and Maximum Threshold}\label{sec:treedepth}
In this section, we show that the algorithm from \Cref{thm:twoTSSIsFPTWrtRoundsFmaxTw} also applies for the combination of the treedepth of the input graph and the maximum threshold; in fact, we show that if the treedepth is bounded, then so is the length of any activation process (which might be of independent interest).

\begin{lemma}\label{lem:rounds_td}
	Let $t \ge 1$ and~$G$ be a graph and suppose that there exist sets $P^0_a$, and~$P^0_b$ such that the activation process in~$G$ from these sets takes at least~$t$ rounds, i.e., $P^t_a \neq P^{t-1}_a$ or $P^t_b \neq P^{t-1}_b$.
	Then $\td(G) \ge \log_3 (t+1)$.
\end{lemma}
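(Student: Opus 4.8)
The plan is to prove the equivalent quantitative bound: if $\td(G)=d$, then the activation process started from any sets $P^0_a,P^0_b$ can enlarge the union $P^i_a\cup P^i_b$ only during the first $3^d-2$ rounds. Since the hypothesis says the union still changes in round $t$, this forces $t\le 3^{\td(G)}-2$, i.e.\ $\td(G)\ge\log_3(t+2)$. So the whole task reduces to upper-bounding a stabilization time by a function of treedepth, which I would do by induction on $d$.

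To make the induction close, I would first \emph{strengthen} the statement to a slightly more general ``augmented'' process, in which each vertex is additionally equipped with a fixed number of permanently active phantom neighbors of each opinion (equivalently, opinion-specific threshold decreases) that never change. This generalization is needed because the recursive step clamps one vertex and folds its fixed opinions into its neighbors, producing exactly such an augmented instance on a smaller graph. Over all augmented instances on graphs of treedepth at most $d$ and all initial sets I would track \emph{two} quantities: the last round $R(d)$ in which some vertex changes any opinion (full stabilization) and the last round $R_U(d)$ in which the union changes.

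The inductive step uses an elimination-tree root: for connected $G$ with $\td(G)=d\ge 2$ there is a vertex $r$ with $\td(G-r)=d-1$, so every component of $G-r$ has treedepth at most $d-1$ and interacts with the rest only through $r$. Because opinions are only gained, $r$ passes through at most three states---no opinion, one opinion, both opinions---splitting time into three phases with boundaries $t_1$ (the round of $r$'s first opinion) and $t_2$ (its second). Within each phase $r$'s opinions are constant, so each component runs an augmented process on a treedepth-$(d-1)$ graph, with $r$'s fixed opinions absorbed into the phantom counts, to which the inductive bounds apply \emph{measured from the phase's start}. The key coupling observation, and the main obstacle, is that $r$ gains its \emph{first} opinion (a union change) only once enough of its neighbors carry one specific opinion, and that count can keep rising through \emph{second}-opinion spread inside the components; hence $t_1$ and $t_2$ must be controlled by the \emph{full} stabilization $R(d-1)$ of the components (giving $t_1\le R(d-1)+1$ and $t_2\le t_1+R(d-1)+1$), whereas the union growth \emph{inside} the components is controlled by $R_U(d-1)$ from each phase start. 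This is exactly why a naive induction on the union-stabilization time alone fails and why the two quantities must be carried together; I also need the routine fact that a change at $r$ in round $t_i$ requires a triggering change among its neighbors in round $t_i-1$.

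Assembling the phase bounds then gives $R(d)\le 3R(d-1)+2$ and, since $R_U(d)\le\max\bigl(t_1,\;t_2+R_U(d-1)\bigr)$ with $t_2\le 2R(d-1)+2$, also $R_U(d)\le 2\cdot 3^{d-1}+R_U(d-1)$. The base cases are $R(1)=2$ and $R_U(1)=1$: a single augmented vertex may acquire a second opinion one round after its first, but the union changes only when it acquires its first opinion. Solving the first recurrence yields $R(d)=3^d-1$, and substituting this into the second yields $R_U(d)=3^d-2$, which is precisely the bound needed. The disconnected case is immediate, as the process runs independently on each component and treedepth is the maximum over components. What remains mechanical is verifying that the phase-wise restrictions of the process are faithfully represented by augmented instances on the components and that these instances start from an arbitrary (hence worst-case admissible) configuration, so that the maxima defining $R$ and $R_U$ legitimately apply.
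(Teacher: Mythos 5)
Your proposal is correct, but it takes a genuinely different route from the paper's. The paper argues \emph{backwards} from the round-$t$ activation: it extracts from the process a sequence $v^0,\dots,v^t$ that is a walk in $G$ with loops allowed -- each step either moves to the neighbor whose activation triggered the current one, or stays put when the trigger was the vertex's own acquisition of the other opinion -- so that each vertex appears at most twice (once per opinion); it then invokes a separate walk-versus-treedepth bound (\cref{lem:walk_td}), itself proved by induction on walk length using the fact that deleting the elimination root, which occurs at most twice on the walk, splits it into at most three pieces. You instead run the induction \emph{forwards} and directly on the treedepth, using the root's three possible opinion-states to cut time into three phases and carrying the coupled pair of full and union stabilization times $R,R_U$ through augmented instances; your recurrences $R(d)\le 3R(d-1)+2$ and $R_U(d)\le 2\cdot 3^{d-1}+R_U(d-1)$ with $R(1)=2$, $R_U(1)=1$ solve to exactly the required $t\le 3^d-2$. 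Both arguments ultimately rest on the same two facts -- the elimination root yields a three-way split, and monotonicity limits each vertex to two state changes -- but they package them differently: the paper's version is more modular, and its walk lemma is a reusable generalization of the standard path--treedepth bound, whereas yours is self-contained and makes the quantitative stabilization bound explicit at the cost of extra bookkeeping (phantom neighbors, two coupled quantities). Your observation that the union-time alone cannot close the induction, because the root's first-opinion trigger may come from second-opinion spread inside the components, is precisely the subtlety that the paper's walk construction absorbs by allowing loops on the walk.
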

It is well known that if a graph contains a path of length $h$, then its treedepth is at least $\lceil\log_2 (h+2)\rceil$;  see, e.g., \cite[pp. 117 and 118]{NesetrilM12Sparsity}.  %
For the proof of \Cref{lem:rounds_td}, we use the following generalization of this result.

\begin{lemma}\label{lem:walk_td}
	Let $G$ be a graph, $\mathring{G}$ be obtained from $G$ by adding a loop to each vertex, and $w$ be a walk of length~$h$ in~$\mathring{G}$ such that each vertex appears at most $r$ times on $w$.
	Then $\td(G) \ge \lceil\log_{r+1} (h+2)\rceil$.
\end{lemma}
\begin{proof}
	We prove the lemma by induction on the length $h$ of the walk $w$.
	
	If $G$ contains a nonempty walk, then it must have at least one vertex and, thus, ${\td(G) \ge 1}$.
	Therefore, the lemma is true whenever the length of the walk is $h \le r-1$.
	
	Assume now that $h \ge r$ and the lemma holds for all shorter walks.
	Note that the walk contains at least two different vertices and hence $|V(G)| > 1$.
	If $G$ is disconnected, then we can limit ourselves to the component which contains the walk, since $\td(G) = \max_{i \in [k]} \td(G_i)$, where  $G_1, \ldots, G_k$ are connected components of~$G$.
	Hence, we assume that $G$ is connected.
	Let $u$ be the vertex such that $\td(G) = 1+ \td(G \setminus \{u\}) = 1+\min_{v\in V(G)}\td(G \setminus \{v\})$.
	If $u$ is not part of the walk, then the walk $w$ must be contained in a single connected component $C$ of $\mathring{G} \setminus \{u\}$, and we can apply the lemma to this component, as $\td(G) = 1 + \td(G \setminus \{u\}) \ge \td(C)$.
	Hence, we assume that $u$ is part of walk $w$.
	
	As $u$ appears $q \leq r$ times on $w$, its removal splits the walk into at most $q +1$ parts, removing at most $2q$ edges. Let $w_1$ be a longest of these parts, its length is at least 
	\[\frac{h-2q}{q+1}=\frac{h-(2q+2)+2}{q+1}=\frac{h+2}{q+1}-2 \ge \frac{h+2}{r+1}-2.\] 
	As $w_1$ is a walk not containing $u$, it must be contained in a single component $C$ of $\mathring{G} \setminus \{u\}$. 
	Since $w_1$ contains each vertex at most $r$ times, the treedepth of $C$ is at least 
	\[\left\lceil\log_{r+1} \left(\frac{h+2}{r+1}-2+2\right)\right\rceil=\left\lceil(\log_{r+1} (h+2))-\log_{r+1} (r+1)\right\rceil=\left\lceil\log_{r+1} (h+2)\right\rceil-1\]
	by the induction hypothesis. Hence $\td(G) = 1+ \td(G \setminus \{u\}) \ge 1+ \td(C)$ is at least $\lceil\log_{r+1} (h+2)\rceil-1+1=\lceil\log_{r+1} (h+2)\rceil$. This finishes the proof.
\end{proof}

Now \Cref{lem:rounds_td} follows from \Cref{lem:walk_td} and \Cref{lem:rounds_walk}.

\begin{proof}[Proof of \Cref{lem:rounds_td}]
    Let $G$ be a graph and $\mathring{G}$ be obtained from $G$ by adding a loop to each vertex.
	By \Cref{lem:rounds_walk}, if there are sets such that the activation process in~$G$ from these sets takes at least~$t$ rounds, then there is a walk in $\mathring{G}$ of length $t-1$ such that each vertex appears at most twice on this walk.
	Hence, by \Cref{lem:walk_td}, the treedepth of $G$ is at least $\lceil\log_{3} (t+1)\rceil$, as required.
\end{proof}

\begin{corollary}\label{cor:td}
	\twoTSS with a maximum threshold of $f_{\max}$ can be solved in $( 3^{\d} \cdot f_{\max}+1)^{\Oh{\d}} \cdot n$ time on graphs of treedepth at most~$\d$.
\end{corollary}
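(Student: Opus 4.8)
The plan is to derive \cref{cor:td} from \cref{thm:twoTSSIsFPTWrtRoundsFmaxTw} by observing that on a graph of bounded treedepth both the treewidth and the relevant length of the activation process are automatically bounded in terms of~$\d$. Concretely, I would first cap the horizon at $\fin' = \min(\fin, 3^{\d})$ and then run the treewidth algorithm with $\w = \d$.

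First I would bound the number of rounds. By \cref{lem:rounds_td}, any activation process on~$G$ that still adds a vertex in round~$t$ forces $\td(G) \ge \log_3(t+2)$; since $\td(G) \le \d$ this means $t \le 3^{\d} - 2$, so no process on~$G$ can change after round $3^{\d}-2$. This bound is uniform over all seed sets, hence it applies in particular to the process started from $S_a \cup T_a$ and $S_b \cup T_b$ for every candidate pair $(T_a,T_b)$. Consequently the sets $P^i_a, P^i_b$ are constant for all $i \ge 3^{\d}-2$, so the termination-and-balance requirement $P_a^{\fin} = P_a^{\fin+1} = P_b^{\fin} = P_b^{\fin+1}$ takes the same truth value for every $\fin \ge 3^{\d}-2$ and reduces to testing $P_a^{3^{\d}-2} = P_b^{3^{\d}-2}$ on the fully stabilised process. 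Replacing $\fin$ by $\fin' = \min(\fin, 3^{\d})$ therefore yields an equivalent instance.

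Second I would use the standard relation $\tw(G) \le \td(G) - 1 \le \d$, so the graph has treewidth at most~$\d$. Feeding the capped instance into \cref{thm:twoTSSIsFPTWrtRoundsFmaxTw}, which itself constructs a nice tree decomposition of width $O(\d)$, gives running time
\[
  (\fin' \cdot f_{\max}+1)^{O(\w)} \cdot n \;\le\; (3^{\d} \cdot f_{\max}+1)^{O(\d)} \cdot n,
\]
which is exactly the claimed bound.

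The one step needing care is the equivalence of the capped instance with the original: I must check that shortening the horizon neither creates nor destroys solutions. This is precisely where the uniform round bound of \cref{lem:rounds_td} is essential --- it guarantees that stabilisation occurs by round $3^{\d}-2 \le \fin'$ regardless of the seeds, so the balance condition evaluated at the capped horizon coincides with the balance condition of the genuinely stabilised process. Once this equivalence is in place, the remainder is a plain substitution of the parameter bounds into the already-proven treewidth algorithm, so I expect no further obstacles.
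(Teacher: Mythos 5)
Your proposal is correct and follows essentially the same route as the paper's own (very terse) proof: cap $\fin$ at roughly $3^{\d}$ via \cref{lem:rounds_td}, note $\tw(G)\le\td(G)$, and invoke \cref{thm:twoTSSIsFPTWrtRoundsFmaxTw}. Your added care about the equivalence of the capped instance is a welcome elaboration of a step the paper leaves implicit, not a deviation.
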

\begin{proof}
	By \Cref{lem:rounds_td} we get that by setting
	\(
	{\fin := \min ( \fin, 3^{\d} )}
	\)
	we get an equivalent instance.
	It is well known that $\tw(G) \le \td(G)$ for every graph~$G$.
	Thus, we can apply \Cref{thm:twoTSSIsFPTWrtRoundsFmaxTw}.
\end{proof}

\section{Hardness Results}

In this section, we propose multiple parameterized reductions showing \Whness for different assumed (combinations of) structural parameters. All of our reductions start with the \PSI(\PSIshort) problem, which is defined as follows. We are given two undirected graphs $G$ and $H$ with $|V(H)| \le |V(G)|$ ($H$ is \emph{smaller})
and a~mapping~${\psi\colon V(G) \to V(H)}$. The~question is whether there is a mapping~${\phi\colon V(H) \to V(G)}$ such that $\phi$ is injective, ${\{\phi(u),\phi(v)\} \in E(G)}$ for each $\{u,v\} \in E(H)$, and $\psi \circ\phi$ is the identity. Since \textsc{Partitioned Clique}, which is known to be \Wcomplete with respect to the size of the clique~\cite{Pietrzak03}, is a special case of \PSIshort where $H$ is a complete graph, it follows that \PSIshort is \Whard with respect to~$|E(H)|$.

Most of our hardness results are accompanied with appropriate lower bounds based on the \emph{Exponential-Time Hypothesis} (ETH for short) of \citet{ImpagliazzoP01}. The hypothesis states that every algorithm solving \textsc{$3$-SAT} needs at least $2^{cn}$ time in worst-case, where $c>0$ is some universal constant and $n$ is the number of variables of the input formula. As the ETH originally handles \textsc{$3$-SAT}, we use the following subsequent theorem to be able to provide lower bounds directly from our \PSIshort reductions.

\begin{theorem}[{\citet[Corollary 6.3]{Marx10}}]\label{thm:PSI_ETH}
	If \PSI can be solved in time
	$f(m)\cdot n^{o(m/\log m)}$, where $f$ is an arbitrary computable function and
	$m$ is the number of edges of the smaller graph $H$, then ETH fails.
\end{theorem}

We note that the \PSIshort problem can be solved for each connected component of $H$ separately. Thus, we always assume that $|V(H)| \le |E(H)|+1$.

At first, to simplify our construction and argumentation used in hardness reductions, we introduce the following \emph{gadget} and independently prove some of its properties.

\paragraph{Selection Gadget: Selection of an Element in a Set}\label{sec:selection_gadget}
We first describe a gadget for the selection of a single element in a set~$W$; let $n_W = |W|$.
In what follows, please refer to Figure~\ref{fig:coloredVertexSelectionGadget}.
This gadget consists of~$n_W$ \emph{selection vertices} which are in one-to-one correspondence to the elements of~$W$ and two copies of a path on three vertices (\emph{guard paths}).
The central vertex of each guard path is connected to every selection vertex and one leaf (of each guard path) is in the set~$S_a$.
We stress that only selection vertices might be connected to other vertices in our hardness reductions.

\begin{figure}[tb!]
	\begin{center}
		\begin{tikzpicture}[node distance=.5cm]
  \tikzstyle{vertex}=[draw,thick,circle,minimum width=2pt,fill=white]
  \tikzstyle{vertexA}=[vertex,fill=red]
  \tikzstyle{vertexB}=[vertex,fill=blue]
  \tikzstyle{edge}=[thick]

  \newcommand{\n}{14}
  \newcommand{\firstGadgetAt}{2}
  \newcommand{\secondGadgetAt}{11}
  \newcommand{\topNodeDistance}{1.5cm}

  \node[vertex,label={[xshift=-3pt,yshift=5pt]180:$f_1=3$},label={[xshift=-3pt,yshift=-5pt]180:$f_2=3$}] (v1) {};
  \foreach \x[remember=\x as \xx (initially 1)] in {2,3,...,\n} {
    \node[vertex,right of=v\xx] (v\x) {};
  }
  \node[right of=v14,xshift=0.5em] (I) {$I$};

  \begin{scope}[on background layer]
    \node[draw,fill=gray!30,dashed,rounded corners,fit=(v1)(v\n)] {};
  \end{scope}

  \begin{scope}[node distance=\topNodeDistance]
    \node[vertex,label={[label distance=.4cm]$f_1 = 1$},label={$f_2 = 2$}] (vS) at ($(v\firstGadgetAt) + (0,1.5)$) {};
    \node[vertex,left of=vS,label={[label distance=.4cm]$f_1 = 1$},label={$f_2 = 2$}] (vSguard) {};
    \node[vertexA,right of=vS,label={$f_2 = 1$}] (vSselected) {};

    \draw[edge] (vSguard) to (vS) to (vSselected);
    \foreach \x in {1,2,...,\n} {
      \draw[edge] (vS) to (v\x);
    }
  \end{scope}

  \begin{scope}[node distance=\topNodeDistance]
    \node[vertex,label={[label distance=.4cm]$f_1 = 1$},label={$f_2 = 2$}] (vS) at ($(v\secondGadgetAt) + (0,1.5)$) {};
    \node[vertex,right of=vS,label={[label distance=.4cm]$f_1 = 1$},label={$f_2 = 2$}] (vSguard) {};
    \node[vertexA,left of=vS,label={$f_2 = 1$}] (vSselected) {};

    \draw[edge] (vSguard) to (vS) to (vSselected);
    \foreach \x in {1,2,...,\n} {
      \draw[edge] (vS) to (v\x);
      \draw[edge] (v\x) to +(255:.6);
      \draw[edge] (v\x) to +(285:.6);
    }
  \end{scope}
\end{tikzpicture}
		\caption{%
			An~overview of the vertex selection gadget.
			Red vertices are in the set~$S_a$.
			Individual values of thresholds~$f_1,f_2$ are above guard path vertices.
			All selection vertices have the same thresholds.
		}
		\label{fig:coloredVertexSelectionGadget}
	\end{center}
\end{figure}

\begin{lemma}\label{lem:atLeastOneBVertexInSelectionGadget}
	Let~$X$ and~$I$ be the vertex set and the set of selection vertices of a selection gadget, respectively.
	Any solution $T_a,T_b$ satisfies $|T_b \cap X| \ge 1$.
	Furthermore, if~$|T_b \cap X| = 1$ for a solution $T_a,T_b$, then $T_b \cap X = T_b \cap I$ and all the vertices of the guard paths receive both opinions in the activation process.
\end{lemma}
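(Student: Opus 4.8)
We first fix convenient names. For $i\in\{1,2\}$ let $s_i$ be the centre of the $i$-th guard path (the vertex adjacent to every selection vertex, with $f_1(s_i)=1$, $f_2(s_i)=2$), let $p_i\in S_a$ be its seeded leaf (with $f_2(p_i)=1$), and let $q_i$ be its other leaf (with $f_1(q_i)=1$, $f_2(q_i)=2$); note that the gadget contains no vertex of $S_b$, so $S_b\cap X=\emptyset$. Two mechanics drive everything. (a) The leaf $q_i$ has the single neighbour $s_i$, so $\deg(q_i)=1<2=f_2(q_i)$ and the second-opinion rule can \emph{never} fire at $q_i$; consequently $q_i$ can acquire $b$ only by lying in $T_b$, or by gaining it while still opinionless, i.e. as part of its very first opinion, via $f_1(q_i)=1$ from $s_i$. (b) Since $p_i\in S_a=P_a^0$ and $f_1(s_i)=1$, an opinionless $s_i$ acquires $a$ in the next round; hence whenever $s_i$ becomes active from an opinionless state it does so holding $a$ (possibly together with $b$).

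For the first claim I would show that a valid target set satisfies $T_b\cap N[s_i]\neq\emptyset$, where $N[s_i]\subseteq X$. As the instance is balanced, $q_i$ holds either both opinions or none. If $q_i$ holds both, then it holds $b$, so by (a) either $q_i\in T_b$ or $b$ lies among the opinions $s_i$ holds in its first active round; since $s_i$ becomes active in round $0$ or $1$ and (using $S_b\cap X=\emptyset$) a round-$0$ $b$ inside $X$ can come only from $T_b$, this means $s_i\in T_b$ or a neighbour of $s_i$ is in $T_b$. If instead $q_i$ holds no opinion, then $q_i$ never receives $a$, which by (a),(b) forces $s_i$ to be stuck holding $b$ only, hence $s_i\in T_b$. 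In every case $T_b$ meets $N[s_i]$, and taking $i=1$ already gives $|T_b\cap X|\ge 1$.

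For the second claim, suppose $|T_b\cap X|=1$ and let $w$ be its unique element. Applying the previous paragraph to both guard paths forces $w\in N[s_1]\cap N[s_2]$, so it remains to identify this intersection. The centres $s_1,s_2$ are non-adjacent, and inside $X$ the closed neighbourhood of $s_i$ is $\{s_i,p_i,q_i\}\cup I$; as $p_i$ and $q_i$ are private to the $i$-th guard path, the only common vertices are the selection vertices, giving $N[s_1]\cap N[s_2]=I$. Hence $w\in I$, i.e. $T_b\cap X=T_b\cap I$. Finally, since $w\in I$ we have $s_i\notin T_b$, so by (b) each $s_i$ receives $a$, whence each $q_i$ (and $p_i\in S_a$) also receives $a$; validity of the target set then upgrades every guard-path vertex from ``holds $a$'' to ``holds both'', which is the remaining assertion.

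The step I expect to be the main obstacle is the timing analysis of $q_i$ in the second paragraph: one must argue cleanly that, because the $f_2$-rule is permanently dead at $q_i$, its only route to $b$ is a \emph{simultaneous} first acquisition of $a$ and $b$ driven by $s_i$ holding $b$ already in its first active round, and one must absorb the degenerate possibilities where $s_i$ or $q_i$ happen to be placed in $T_a$ or $T_b$. Once this case distinction is pinned down, the identity $N[s_1]\cap N[s_2]=I$ and the implication ``receives $a\Rightarrow$ receives both'' (via validity) are routine.
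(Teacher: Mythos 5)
Your proof is correct and follows essentially the same route as the paper's: the non-seeded guard-path leaf can only acquire $b$ simultaneously with its very first activation, which forces $T_b$ to meet $N[s_i]\subseteq X$ for each centre, and intersecting this condition over the two guard paths isolates the selection vertices. The only cosmetic differences are that your case ``$q_i$ holds no opinion'' is vacuous (since $s_i$, and hence $q_i$, is always activated within two rounds), and that you close with the ``holds at least one opinion plus balance implies holds both'' shortcut where the paper traces the specific rounds explicitly.
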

\begin{proof}
	Let  $T_a,T_b$ be a solution.
	Suppose for contradiction that~$T_b \cap X = \emptyset$.
	Let~$P$ be a guard path with vertex set~$\{v_1,v_2,v_3\}$, where~$v_2$ is the central vertex and~$v_1 \in S_a$.
	Now, either $v_2 \in T_a$ or in the first round of the activation process, vertex~$v_2$ receives opinion~$a$ from~$v_1$, i.e., $v_2 \in P^1_a$.
	Note that both $I \cap P^1_a$ and $I \cap P^1_b$ might be nonempty.
	Nevertheless, we get~$v_3 \in P^2_a$ and possibly $v_2 \in P^2_b$.
	But, since~$f_2(v_3) = 2$, we get that~$v_3 \notin P^i_b$ for any~$i \in \N$.
	Thus~$T_b$ is not a solution.
	
	We conclude that to have a solution, one of the following must hold:~$|N(v_2) \cap T_b| \ge 1$ or $v_2 \in T_a$ and $v_2 \in T_b$ (as otherwise~$v_3 \in P^2_a$ and $v_3 \notin P^2_b$).
	Thus, if~$|T_b \cap X| = 1$, then we have~$T_b \cap X = T_b \cap I$, since~$I = N(v_2) \cap N(v_2')$, where~$v_2'$ is the central vertex of the other guard path.
	Now, $v_1 \in P^2_b$ (since it receives the opinion~$b$ from~$v_2$) as well as~$v_3 \in P^2_a$ and~$v_3 \in P^2_b$ (receiving both opinions at the same time from~$v_2$).
\end{proof}

\subsection{Constant Maximum Threshold}\label{sec:hardness_const_tresholds}
In this section, we show that \twoTSSshort is \Whard with respect to the budget and the pathwidth, or the feedback vertex set of the input graph, even if all thresholds are bounded by $3$, presenting a reduction from \PSIshort. Let $(G,H,\psi)$ be an instance of \PSIshort.

While designing the selection gadget (as well as in the proof of \Cref{lem:atLeastOneBVertexInSelectionGadget}) we used a specific property of our model. %
One of the leaves of a guard path must receive both opinions, $a$ and~$b$, in the same round of the activation process.
We are going to utilize the same property once again when designing a gadget to check the incidence.
In the penultimate section we discuss possible variations of the model that prevent this behavior and how the hardness results carry over to these variations. 

\paragraph{Incidence Gadget}
An incidence gadget connects two selection gadgets---one for the selection of a vertex in~$V_w = \psi^{-1}(w)$ (as described earlier) and one for the selection of an edge in $E_{ww'} = \{\{u,v\}\mid \{u,v\} \in E(G),  u \in V_w, v \in V_{w'} \}$ for $w,w' \in V(H)$ with $\{w,w'\} \in E(H)$ (with the preselected vertex in the set~$S_b$, i.e., for an edge selection gadget we ``switch'' the role of red vertices).
We begin by enumerating vertices in~$V_w$ by numbers in~$\{ 1, \ldots, n \}$; let $\eta \colon V_w \to \{ 1, \ldots, n \}$ be the enumeration.
Now, the $ww'$-incidence check gadget consists of three \emph{connector} vertices (which we call vertex-connector, edge-connector, and super-connector), a \emph{sentry} vertex, and several paths as follows; refer to Figure~\ref{fig:incidenceGadget}.
We connect the vertex~$v \in V_w$ to the vertex-connector vertex by a path containing exactly~$n+\eta(v)$ additional vertices.
We connect the vertex~$e \in E_{ww'}$ to the edge-connector vertex by a path containing exactly~$n+\eta(v)$ additional vertices for $v = e \cap V_w$.
Finally, we connect the sentry vertex and vertex- and edge-connector to the super-connector vertex.
We set $f_1(x) = 1$ and $f_2(x) = 3$ for the sentry vertex~$x$ and the super connector vertex~$x$.
We set $f_1(x) = 1$ and $f_2(x) = 1$ for every vertex~$x$ in the paths connecting the selection vertices to the connector vertex and for the vertex and the edge connector vertices $x$.

\begin{figure}[tb!]
	\begin{center}
		\begin{tikzpicture}[node distance=.5cm]
  \tikzstyle{vertex}=[draw,thick,circle,minimum width=2pt,fill=white]
  \tikzstyle{vertexA}=[vertex,fill=red]
  \tikzstyle{vertexB}=[vertex,fill=blue]
  \tikzstyle{edge}=[thick]
  \tikzstyle{sentryStyle}=[vertex,draw=purple]
  \tikzstyle{connectorStyle}=[vertex,draw=yellow!80]
  \tikzstyle{longPath}=[edge,decorate,decoration={snake}]

  \newcommand{\nVertex}{7}
  \newcommand{\nEdge}{15}
  \newcommand{\sentryPosition}{5}

  \begin{scope}[yshift=8cm]
    \node[vertex,label={[xshift=-3pt,yshift=5pt]180:$f_1=3$},label={[xshift=-3pt,yshift=-5pt]180:$f_2=3$}] (v1) {};
    \foreach \x[remember=\x as \xx (initially 1)] in {2,3,...,\nVertex} {
      \node[vertex,right of=v\xx] (v\x) {};
    }
    \begin{scope}[on background layer]
      \node[draw,fill=gray!30,dashed,rounded corners,fit=(v1)(v\nVertex)] {};
    \end{scope}

    \draw[decorate,decoration={brace,amplitude=6pt}] ($(v\nVertex.south) + (.5,0)$) to node [midway,xshift=12pt] {\footnotesize $P_n$}  ($(v\nVertex.south) + (.5,0) - (0,1.8)$);
  \end{scope}

  \begin{scope}
    \node[vertex,label={[xshift=-3pt,yshift=5pt]180:$f_1=3$},label={[xshift=-3pt,yshift=-5pt]180:$f_2=3$}] (e1) {};
    \foreach \x[remember=\x as \xx (initially 1)] in {2,3,...,\nEdge} {
      \node[vertex,right of=e\xx] (e\x) {};
    }
    \begin{scope}[on background layer]
      \node[draw,fill=gray!30,dashed,rounded corners,fit=(e1)(e\nEdge)] {};
    \end{scope}
  \end{scope}

  \node[connectorStyle,label={0:$f_1=1 \,\, f_2=1$}] (connectorVertex) at ($(v\sentryPosition)!.5!(e\sentryPosition)$) {};
  \node[connectorStyle,label={0:$f_1=1 \,\, f_2=3$},below of=connectorVertex] (connectorMid) {};
  \node[connectorStyle,label={0:$f_1=1 \,\, f_2=1$},below of=connectorMid] (connectorEdge) {};
  \node[sentryStyle,label={[yshift=5pt]180:$f_1=1$},label={[yshift=-5pt]180:$f_2=3$}] (sentry) at ($(connectorMid) - (1,0)$) {};
  \draw[edge] (connectorMid) to (sentry);
  \draw[edge] (connectorVertex) to (connectorMid) to (connectorEdge);

  \node at ($(connectorVertex) + (0,2)$) {$\cdots$};

  \path (connectorVertex) to node[vertex,midway,xshift=-8pt] (v1first) {} (v1);
  \path (connectorVertex) to node[vertex,midway] (v2first) {} node[vertex,pos=.25] {} (v2);
  \path (connectorVertex) to node[vertex,midway] (vNfirst) {} node[vertex,pos=.125] {} node[vertex,pos=.25] {} node[vertex,pos=.375] {} (v\nVertex);

  \path (connectorEdge) to node[vertex,pos=.33] {} node[vertex,pos=.66] (v2EdgeLast1) {} (e2);
  \path (connectorEdge) to node[vertex,pos=.33] {} node[vertex,pos=.66] (v2EdgeLast2) {} (e7);
  \path (connectorEdge) to node[vertex,pos=.33] {} node[vertex,pos=.66] (v2EdgeLast3) {} (e\nEdge);

  \draw[longPath] (v1) to (v1first);
  \draw[longPath] (v2) to (v2first);
  \draw[longPath] (v\nVertex) to (vNfirst);

  \foreach \i/\j in {1/2,2/7,3/\nEdge} {
    \draw[longPath] (v2EdgeLast\i) to (e\j);
    \begin{scope}[on background layer]
      \draw[edge] (v2EdgeLast\i) to (connectorEdge);
    \end{scope}
  }

  \begin{scope}[on background layer]
    \draw[edge] (v1first) to (connectorVertex);
    \draw[edge] (v2first) to (connectorVertex);
    \draw[edge] (vNfirst) to (connectorVertex);
  \end{scope}
\end{tikzpicture}
		\caption{%
			An overview of the incidence gadget with vertex selection gadget on top and edge selection gadget in bottom.
			The curvy edge stands for a path of length~$n$.
			The yellow vertices are the connector vertices and the purple vertex is the sentry vertex.
		}
		\label{fig:incidenceGadget}
	\end{center}
\end{figure}

We denote the constructed underlying graph with~$\widehat{G}$.
It remains to set the length of the activation process and the budget for the reduced instance, which we do by setting~$\fin=2|V(\widehat{G})|$ and~$B = |V(H)| + |E(H)|$.
This completes the description of the reduction; we denote the resulting instance by~$\mathcal{I}$.
Observe that one can produce the reduced instance in time polynomial in the size of the given instance of PSI.
Note that there are exactly~$B$ selection gadgets in~$\mathcal{I}$.
Therefore, by \Cref{lem:atLeastOneBVertexInSelectionGadget} there must be exactly one vertex in the set~$T_b$ in each vertex selection gadget and exactly one vertex in the set~$T_a$ in each edge selection gadget.
We call such a pair of sets~$(T_a,T_b)$ \emph{good}.
It is not hard to see (using \Cref{lem:atLeastOneBVertexInSelectionGadget}) that if a pair of sets~$(T_a,T_b)$ with $T_a,T_b \subseteq V(\widehat{G})$ is not good, then the pair $(T_a,T_b)$ is not a solution for~$\mathcal{I}$.

\begin{lemma}\label{lem:goodTSSAtivatesAllInAtLeastOneOpinion}
	Let $(T_a, T_b)$ be a good pair for~$\widehat{G}$.
	It holds that~$P_a^{\fin} \cup P_b^{\fin} = V(\widehat{G})$.
\end{lemma}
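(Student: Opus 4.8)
The plan is to argue that, in a good pair, the single target vertex chosen in each gadget floods that entire gadget—together with all the incidence paths hanging off it—with one fixed opinion, and that this flood then spills over onto the connector and sentry vertices; hence no vertex is ever left without an opinion.

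First I would fix a vertex selection gadget, say the one for the set $V_w$, and let $s$ be its unique selection vertex lying in $T_b$, which exists because $(T_a,T_b)$ is good. Both central guard vertices are adjacent to $s\in P^0_b$ and have $f_1=1$, so each of them lies in $P^1_b$ (indeed, by \cref{lem:atLeastOneBVertexInSelectionGadget} all guard-path vertices even receive both opinions). Consequently, from round~$1$ on, every selection vertex of this gadget has its two central-guard neighbours carrying opinion~$b$. Since we may assume that $H$ is connected, the vertex $w$ is incident to an edge of $H$, so the gadget for $V_w$ is attached to at least one incidence gadget and every selection vertex of it has a path to a vertex-connector; it therefore remains to supply each selection vertex with a third $b$-neighbour along such a path.

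Next I would follow the opinion along the incidence paths. Every internal path vertex and every connector vertex has $f_1=1$, so a single opinion propagates without obstruction along any such path. In particular, opinion~$b$ runs from $s\in P^0_b$ down $s$'s path to the vertex-connector, which then lies in $P_b$, and from there $b$ runs back up the paths of all the other selection vertices. Hence the path-neighbour of every selection vertex eventually carries~$b$, which, as $f_1=f_2=3$ for selection vertices, forces that vertex into $P_b$ regardless of whether it already holds opinion~$a$. Thus all vertices of the vertex selection gadget, all of its attached path vertices, and the vertex-connector acquire opinion~$b$. Interchanging the roles of $a$ and $b$ (and of $S_a$ and $S_b$), the symmetric argument shows that every vertex of every edge selection gadget, all of its path vertices, and the edge-connector acquire opinion~$a$.

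Finally I would activate the remaining vertices. The super-connector has $f_1=1$ and is adjacent both to a vertex-connector carrying~$b$ and to an edge-connector carrying~$a$, so it gains at least one opinion; the sentry, having $f_1=1$ and being adjacent to the super-connector, then gains at least one opinion as well. Because $\fin=2|V(\widehat{G})|$ puts no restriction on the length of the process, all of these activations are completed by round~$\fin$, and therefore every vertex of $\widehat{G}$ belongs to $P_a^{\fin}\cup P_b^{\fin}$. I expect the main obstacle to be the careful bookkeeping of the flood-back: one must make sure that the third $b$-neighbour of each non-selected selection vertex really does appear rather than the vertex being ``frozen'', and that no path vertex blocks propagation. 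This is exactly what the thresholds $f_1=f_2=1$ on the path and connector vertices (which also prevents them from blocking a second opinion) and the fact that we only require \emph{at least one} opinion per vertex are there to guarantee.
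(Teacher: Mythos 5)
Your proof is correct and follows essentially the same route as the paper's: propagate the selected vertex's opinion along its path to the connector, back-propagate along the remaining paths of that incidence gadget, and combine the two central guard vertices (which carry both opinions) with one path-neighbour to activate the non-selected selection vertices, after which the super-connector and sentry each receive at least one opinion. If anything, you are more explicit than the paper about the two delicate points: why the connectors, which may first receive the \emph{other} opinion from across the incidence gadget, still acquire the flooding opinion (because $f_2=1$ on all path and connector vertices), and why the threshold-$3$ non-selected selection vertices are eventually reached.
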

\begin{proof}
	Since~$(T_a,T_b)$ is good, it contains exactly one vertex in every selection gadget (in a vertex selection gadget it is a vertex in~$T_b$ whereas in an edge selection gadget it is one in~$T_a$).
	Now observe that in at most $2n$ rounds we have that every vertex-connector vertex is in~$P_a \cup P_b$ as the activation process reaches it at the latest by the path from the selected vertex (since~$\eta(v) \le n$ for the selected vertex~$v$); similarly for the edge-connector vertex.
	To see this, recall that all vertices~$v$ introduced in the incidence gadget have~$f_1(v) = 1$.
	Note that the vertex-connector vertex may end up in~$P_a$, since the activation process for the edge selection gadget might reach it sooner than the one described above (again, the same holds for the edge-connector vertex by symmetry).
	Then, from a connector vertex, the activation process continues to the rest of the paths that connect it to its selection gadget.
	Furthermore, the super-connector vertex and the sentry vertex both receive at least one opinion.
	Thus, we have shown (together with \Cref{lem:atLeastOneBVertexInSelectionGadget}) that each vertex is in $P_a^{\fin} \cup P_b^{\fin}$.
\end{proof}

It follows from \Cref{lem:goodTSSAtivatesAllInAtLeastOneOpinion} that if a good pair~$(T_a,T_b)$ is a solution, then ${P_a^{\fin} = P_b^{\fin} = V(\widehat{G})}$.

\begin{lemma}\label{lem:goodSetTSSiffIncident}
	Let $(T_a, T_b)$ be a good pair for~$\widehat{G}$.
	Let $v \in V_w$ and let $e \in E_{ww'}$ be such that~$v$ is selected by~$T_b$ (i.e., $v \in T_b$) and~$e$ is selected by~$T_a$.
	For a sentry vertex~$s$ of the incidence gadget for~$V_w$ and~$E_{ww'}$ it holds that~$s \in P_a^{\fin} \cap P_b^{\fin}$ if and only if~$v \in e$.
\end{lemma}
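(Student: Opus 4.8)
The plan is to track exactly when each of the two opinions first reaches the \emph{super-connector} and, through it, the \emph{sentry}, using the fact that the sentry has degree one. Since the sentry's only neighbour is the super-connector and $f_2(s)=3>1=\deg(s)$, the sentry can never acquire a second opinion by the $f_2$-rule; hence $s\in P_a^{\fin}\cap P_b^{\fin}$ holds precisely when the sentry gains \emph{both} opinions in the same round via its $f_1=1$ rule, which in turn happens precisely when the super-connector carries both opinions simultaneously. So the whole statement reduces to comparing the first arrival times of $a$ and $b$ at the super-connector.

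First I would compute these arrival times. Because $v\in T_b$ and $e\in T_a$ (recall the preselected vertices of an edge gadget lie in $S_b$, so its selected vertex is in $T_a$), opinion $b$ sits at $v$ in round $0$ and travels unimpeded along the path of length $n+\eta(v)+1$ joining $v$ to the vertex-connector, every internal path vertex having $f_1=1$; it reaches the vertex-connector in round $n+\eta(v)+1$ and the super-connector in round $n+\eta(v)+2$. Symmetrically, opinion $a$ reaches the edge-connector in round $n+\eta(v')+1$ and the super-connector in round $n+\eta(v')+2$, where $v'=e\cap V_w$. The decisive arithmetic is that $\eta(v)=\eta(v')$ if and only if $v=v'$ (as $\eta$ is injective and $e\cap V_w$ is a single vertex), i.e.\ if and only if $v\in e$.

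Second I would argue these are genuinely the \emph{first} times the opinions reach the super-connector, ruling out interference. A non-selected selection vertex has $f_1=3$ but its neighbourhood consists only of the two guard-path centres (which carry both opinions from round~$1$ by \cref{lem:atLeastOneBVertexInSelectionGadget}) and a single path vertex; it is therefore deadlocked and injects no opinion into its path early, and the same holds in the edge gadget. Any opinion travelling ``backwards'' from a connector first needs some connector to be active, which itself requires one of the two selected paths to have already delivered its opinion, so no shorter route exists. Consequently the super-connector is opinion-free until round $n+\min(\eta(v),\eta(v'))+2$.

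Finally I would split on simultaneity. If $v\in e$, both opinions reach the super-connector in the same round $n+\eta(v)+2$; being opinion-free with $f_1=1$, it acquires $a$ and $b$ at once, and one round later the sentry (also $f_1=1$) does likewise, so $s\in P_a^{\fin}\cap P_b^{\fin}$. If $v\notin e$, say opinion $a$ arrives strictly earlier; the super-connector takes $a$ by its $f_1=1$ rule, the sentry takes $a$ one round later, and then the sentry can never acquire $b$. The super-connector could only gain $b$ afterwards through its $f_2=3$ rule, which would require all three of its neighbours (vertex-connector, edge-connector, sentry) to carry $b$; but the sentry holds only $a$, so this never occurs, and $s$ receives a single opinion. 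I expect the main obstacle to be the second step — making the ``no faster route / no interference'' claim watertight across both the vertex and edge gadgets — whereas the simultaneity dichotomy is short once the timing and the degree-one property of the sentry are established.
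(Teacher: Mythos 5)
Your proposal is correct and follows essentially the same route as the paper's proof: compute the first arrival rounds of opinion $b$ (from the selected $v$) and opinion $a$ (from the selected $e$) at the super-connector via the two connector vertices, observe they coincide exactly when $\eta(v)=\eta(e)$, i.e.\ $v\in e$, and conclude via the sentry's degree-one neighbourhood together with $f_2(s)=3>\deg(s)$ that the sentry ends with both opinions iff they arrive simultaneously. Your explicit justification of the ``no earlier/interfering route'' step (deadlocked non-selected selection vertices, back-propagation only after a connector activates) is sound and in fact spells out what the paper dispatches in a single remark about path lengths.
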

\begin{proof}
	Let~$c_v$ and~$c_e$ be the vertex- and edge-connector vertex, respectively.
	Slightly abusing the notation, we define~$\eta(e) = \eta(V_w \cap e)$.
	It is not hard to verify that if $\eta(v) \le \eta(e)$, then $c_v \in P_b^{n+\eta(v)}$; similarly, if $\eta(e) \le \eta(v)$, then $c_e \in P_a^{n+\eta(e)}$.
	Note that in the first $2n$ rounds there is no interference between any two incidence gadgets, since the ``backpropagation'' from a connector vertex to the selection vertices uses at least two paths, both of length at least $n+1$.
	Consequently, we find that for the super-connector vertex~$c$ it holds that~$c \in P_a^{\ell} \cup P_b^{\ell}$ for $\ell = 1 + \min\{ n+\eta(v), n+\eta(e)\}$.
	Furthermore, we have $c \in P_a^{\ell} \cap P_b^{\ell}$ if and only if $\eta(v) = \eta(e)$, as otherwise either~$c \in P_a^{\ell}$ or~$c \in P_b^{\ell}$ depending on the minimizer of $\min\{ n+\eta(v), n+\eta(e)\}$.
	It follows that $s$ is in $P_a^{\ell + 1} \cap P_b^{\ell + 1}$ and in $P_a^{\fin} \cap P_b^{\fin}$ if and only if $\eta(v) = \eta(e)$, since otherwise it receives only one opinion and cannot receive the other due to $f_2(s) > \deg_{\widehat{G}}(s)$.
\end{proof}

\begin{figure}[tb!]
	\begin{center}
		\begin{tikzpicture}[node distance=.5cm]
  \tikzstyle{vertex}=[draw,thick,circle,minimum width=2pt,fill=white]
  \tikzstyle{vertexA}=[vertex,fill=red]
  \tikzstyle{vertexB}=[vertex,fill=blue]
  \tikzstyle{edge}=[thick]
  \tikzstyle{sentryStyle}=[vertex,draw=purple]
  \tikzstyle{connectorStyle}=[vertex,draw=yellow!80]
  \tikzstyle{longPath}=[edge,decorate,decoration={snake}]

  \newcommand{\nVertex}{7}
  \newcommand{\nVertexHalf}{4}
  \newcommand{\nEdge}{12}

  \begin{scope}[yshift=-1cm,xshift=-5mm]
    \node[vertex,label={[yshift=10pt]90:$f_1=\deg$},label={90:$f_2=\deg$}] (v1) {};
    \foreach \x[remember=\x as \xx (initially 1)] in {2,3,...,\nVertex} {
      \node[vertex,below of=v\xx] (v\x) {};
    }
    \begin{scope}[on background layer]
      \node[draw,fill=gray!30,dashed,rounded corners,fit=(v1)(v\nVertex),label={270:$V_w$}] (Vfitter) {};
    \end{scope}
  \end{scope}

  \begin{scope}[xshift=12cm]
    \node[vertex,label={[yshift=10pt]90:$f_1=\deg$},label={90:$f_2=\deg$}] (e1) {};
    \foreach \x[remember=\x as \xx (initially 1)] in {2,3,...,\nEdge} {
    \node[vertex,below of=e\xx] (e\x) {};
    }
    \begin{scope}[on background layer]
      \node[draw,fill=gray!30,dashed,rounded corners,fit=(e1)(e\nEdge),label={270:$E_{ww'}$}] (Efitter) {};
    \end{scope}
  \end{scope}

  \path (v1) to node[midway,vertex,label={90:$c^1_{ww'}$}, above=2cm] (c1) {} (Efitter);
  \node[vertex,label={20:$c^2_{ww'}$},label={[yshift=-3pt]270:$f_1=|V_w|$},label={[yshift=-13pt]270:$f_2=\deg$}] at ($(c1) - (0,5)$) (c2) {};

  \begin{scope}
    \node[vertex] at ($(c1) - (1.5,3)$) (a1) {};
    \foreach \x[remember=\x as \xx (initially 1)] in {2,3,...,\nVertex} {
      \node[vertex,right of=a\xx] (a\x) {};
    }
    \node[vertexA,label={0:$f_2=\deg$}, label={270:$s_{ww'}$}, below right of=a\nVertex, yshift=-10]  (A1selected) {};

    \foreach \x in {1,...,\nVertex} {
      \draw[edge, bend left] (A1selected) to (a\x);
      \draw[edge] (c1) to (a\x);
      \draw[edge] (c2) to (a\x);
    }

    \begin{scope}[on background layer]
      \node[draw,dashed,rounded corners,fit=(a1)(a\nVertex),label={0:$A_{ww'}$},label={[yshift=5pt]180:$f_1=1$},label={[yshift=-4pt]180:$f_2=2$}] (A1fitter) {};
    \end{scope}
  \end{scope}

  \begin{scope}
    \node[vertex] at ($(v1) + (1.5,1)$) (low1) {};
    \node[vertex,below of=low1] (low2) {};
    \node[vertex,below of=low2] (low3) {};

    \node[vertex] at ($(low3) - (0,3)$) (high1) {};
    \node[vertex,below of=high1] (high2) {};
    \node[vertex,below of=high2] (high3) {};
    \node[vertex,below of=high3,label={270:$f_1=1$},label={[yshift=-10pt]270:$f_2=1$}] (high4) {};

    \begin{scope}[on background layer]
      \node[draw,dashed,rounded corners,fit=(low1)(low2)(low3),label={[xshift=3pt]270:$\operatorname{low}(v)$}] (Vlow) {};
      \node[draw,dashed,rounded corners,fit=(high1)(high4),label={[xshift=3pt]90:$\operatorname{high}(v)$}] (Vhigh) {};
    \end{scope}
  \end{scope}

  \foreach \lh in {low1,low2,low3} {
    \draw[edge] (v3) to (\lh) to (c1);
  }

  \foreach \lh in {high1,high2,high3,high4} {
    \draw[edge] (v3) to (\lh) to (c2);
  }

  \begin{scope}
    \node[vertex] at ($(e3) + (-1.5,1)$) (ehigh1) {};
    \node[vertex,below of=ehigh1] (ehigh2) {};
    \node[vertex,below of=ehigh2] (ehigh3) {};
    \node[vertex,below of=ehigh3] (ehigh4) {};

    \node[vertex] at ($(ehigh4) - (0,3)$) (elow1) {};
    \node[vertex,below of=elow1] (elow2) {};
    \node[vertex,below of=elow2] (elow3) {};

    \begin{scope}[on background layer]
      \node[draw,dashed,rounded corners,fit=(elow1)(elow2)(elow3),label={[xshift=-5pt]90:$\operatorname{low}_w(e)$}] (Vlow) {};
      \node[draw,dashed,rounded corners,fit=(ehigh1)(ehigh4),label={[xshift=-5pt]270:$\operatorname{high}_w(e)$}] (Vhigh) {};
    \end{scope}
  \end{scope}

  \foreach \lh in {elow1,elow2,elow3} {
    \draw[edge] (e7) to (\lh) to (c2);
  }

  \foreach \lh in {ehigh1,ehigh2,ehigh3,ehigh4} {
    \draw[edge] (e7) to (\lh) to (c1);
  }

\end{tikzpicture}
		\caption{Illustration of the incidence gadget for the constant number of rounds.}
		\label{fig:incidenceGadgetForConstRounds}
	\end{center}
\end{figure}

\begin{theorem}\label{thm:twoTSSIsHardForTWandConstFMax}
	\twoTSS is \Whard when parameterized by $|S_a|+|S_b|$, the pathwidth, and the feedback vertex number of the input graph combined, even if the maximum threshold~$f_{\max}$ is~$3$ and $f_1(v) \le f_2(v)$ for every vertex~$v$.
	Moreover, unless ETH fails, there is no algorithm for \twoTSS with $f_{\max}=3$ and $f_1(v) \le f_2(v)$ for every vertex~$v$ running in $g(k)n^{o(k/\log k)}$, where $k$ is the sum of $|S_a|+|S_b|$, the pathwidth, and the feedback vertex number of the input graph and $g$ is an arbitrary computable function.
\end{theorem}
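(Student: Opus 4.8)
The plan is to realise the reduction from \PSIshort already prepared by the selection and incidence gadgets, and to set up a correspondence between colour-preserving embeddings of $H$ into $G$ and \emph{good} valid target sets of the constructed instance on $\widehat G$. Recall that a good pair places exactly one vertex in $T_b$ inside each vertex selection gadget and exactly one vertex in $T_a$ inside each edge selection gadget, spending the full budget $B=|V(H)|+|E(H)|$; by \cref{lem:atLeastOneBVertexInSelectionGadget} together with the remark following the gadget construction, every target set of size at most $B$ is good. I would therefore argue both directions at the level of good pairs: a good pair picks, for each $w\in V(H)$, a vertex $\phi(w)\in V_w$ and, for each $\{w,w'\}\in E(H)$, an edge of $E_{ww'}$. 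Since the colour classes $V_w=\psi^{-1}(w)$ are pairwise disjoint, the map $w\mapsto\phi(w)$ is automatically injective, so the only thing to verify is that it is edge-preserving exactly when the instance is a yes-instance.

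For the forward direction I would take a colour-preserving embedding $\phi$, select $\phi(w)$ in each vertex gadget and the edge $\{\phi(w),\phi(w')\}\in E_{ww'}$ in each edge gadget, and show this good pair is a valid target set. By \cref{lem:goodTSSAtivatesAllInAtLeastOneOpinion} every vertex already receives at least one opinion, so it remains to establish full bi-activation. Because $\phi$ is an embedding, in every incidence gadget the selected vertex is incident to the selected edge, hence $\eta(v)=\eta(e)$ and \cref{lem:goodSetTSSiffIncident} gives both opinions to each sentry; tracing the same computation shows each super-connector, and then (as these have $f_2=1$) each vertex- and edge-connector, acquires both opinions, which subsequently propagate both colours back along all the length-$(n+\eta)$ paths. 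The only nonroutine point is the selection vertices: such a vertex has $f_1=f_2=3$, so I must check it collects three neighbours of each colour, which it does from its two guard-path centres (bi-activated via \cref{lem:atLeastOneBVertexInSelectionGadget}) together with at least one incidence path delivering both colours back from its connector---here I use that $H$ has no isolated vertex, i.e.\ $\deg_H(w)\ge 1$. Conversely, given any valid target set it is good, and validity forces $P_a^{\fin}=P_b^{\fin}=V(\widehat G)$; in particular every sentry holds both opinions, so \cref{lem:goodSetTSSiffIncident} yields $\eta(\phi(w))=\eta(e)$ in every incidence gadget, i.e.\ the vertex selected in $V_w$ is an endpoint of the edge selected in $E_{ww'}$. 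Hence $\{\phi(w),\phi(w')\}\in E(G)$ for every $\{w,w'\}\in E(H)$, so $\phi$ is the sought embedding.

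Next I would bound the three structural parameters. The seed sets arise only from the two guard-path leaves of each of the $|V(H)|+|E(H)|$ selection gadgets, so $|S_a|+|S_b|=2(|V(H)|+|E(H)|)=O(|E(H)|)$. For pathwidth and the feedback vertex number I would use the single deletion set $F$ consisting of all guard-path centres and, in every incidence gadget, its three connector vertices and its sentry; since there are $O(|E(H)|)$ gadgets, $|F|=O(|E(H)|)$. After deleting $F$, each selection vertex retains only its pendant length-$(n+\eta)$ paths, so every component of $\widehat G-F$ is a subdivided star (a spider) or an isolated guard leaf. These are acyclic, giving $\fvn(\widehat G)\le|F|$, and have pathwidth at most $2$ (keep the centre in every bag and slide along one leg at a time), giving $\pw(\widehat G)\le|F|+2$. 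Since $\tw\le\pw$, all three parameters and their sum $k$ are $O(|E(H)|)$, and the construction is plainly polynomial, so this is a parameterized reduction; \Whness[1] follows from the \Whard-ness of \PSIshort in $|E(H)|$.

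Finally, for the ETH bound I would compose the reduction with \cref{thm:PSI_ETH}. The graph $\widehat G$ has $|V(\widehat G)|=\mathrm{poly}(|V(G)|)$ vertices and parameter $k=\Theta(|E(H)|)$, so a hypothetical $g(k)\cdot|V(\widehat G)|^{o(k/\log k)}$ algorithm for \twoTSS with $f_{\max}=3$ and $f_1\le f_2$ would solve \PSIshort in time $g'(H)\cdot n^{o(|E(H)|/\log|E(H)|)}$, contradicting \cref{thm:PSI_ETH}. The step I expect to be the main obstacle is the forward bi-activation argument---in particular the threshold-$3$ bookkeeping at the selection vertices and the back-propagation of both colours along every incidence path; the parameter bounds and the ETH composition are comparatively routine once the deletion set $F$ is fixed.
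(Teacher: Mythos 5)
Your proposal is correct and follows essentially the same route as the paper: it relies on the same three gadget lemmas (exactly one selection per gadget, at-least-one-opinion everywhere, and the sentry bi-activation iff incidence), the same round-by-round back-propagation argument for full bi-activation in the forward direction, the same style of deletion-set bound for pathwidth and feedback vertex number (your set of guard centres, connectors and sentries is a harmless variant of the paper's set $X$), and the same composition with the ETH lower bound for \PSIshort. Your explicit remarks about the threshold-$3$ bookkeeping at selection vertices and the assumption that $H$ has no isolated vertices are details the paper handles implicitly in its round-count trace and via the convention $|V(H)|\le|E(H)|+1$, so there is no gap.
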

\begin{proof}
	Let us first verify the parameters of our reduction.
	It is straightforward to check that the maximum threshold $\max_{v \in V(\widehat{G})} \{ f_1(v), f_2(v) \}$ is~$3$.
	Next, if we remove from~$\widehat{G}$ all the connector vertices, as well as the vertices of both guard paths in each selection gadget (let us denote this set~$X$), we obtain a forest, where each tree has at most one vertex of degree more than $2$ (the selection vertex). The pathwidth of such a tree is at most $2$. As $|X| = 2\cdot 3\cdot |E(H)| + 6(|V(H)|+|E(H)|)= \Oh{|E(H)|}$, the pathwidth as well as the feedback vertex number of $\widehat{G}$ is $\Oh{|E(H)|}$. Furthermore, as $S_a \cup S_b \subseteq X$, we also have $B < |S_a|+|S_b| = \Oh{|E(H)|}$. Therefore, once we verify the correctness of the reduction, the results will follow from \Cref{thm:PSI_ETH} and the discussion afterwards.
	
	From \Cref{lem:goodSetTSSiffIncident} we next conclude that a good pair of sets $(T_a,T_b)$ is a solution for~$\widehat{G}$ if and only if the selected vertices are incident to the selected edges.
	However, this can only happen when the original instance of \PSIshort is \yesI.
	
	Let $(T_a, T_b)$ be a good pair for~$\widehat{G}$ such that the selection it imposes yields a solution to the given instance of PSI.
	We consider the incidence gadget for~$V_w$ and~$E_{ww'}$; let $v \in V_w$ and $e \in E_{ww'}$ be the vertex and edge selected by~$(T_a,T_b)$, respectively.
	Since we have $v \in e$, we have $\eta(v) = \eta(e)$.
	Furthermore, it follows from the proof of \Cref{lem:goodSetTSSiffIncident} that the super-connector vertex $c$ receives both opinions in round~$\ell = 1 + n + \eta(v)$.
	Notice that in round~$\ell-1$ both vertex- and edge-connector received opinion~$b$ and~$a$, respectively.
	Now, in round~$\ell+1$ the vertex- and edge-connector have both opinions.
	We observe that in round $\ell$ the vertices adjacent to the vertex-connector receive the opinion~$b$ (similarly for the vertices adjacent to the edge-connector).
	Since every vertex~$u$ on paths connecting the vertex-connector to the~$V_w$ selection gadget has $f_1(u)=f_2(u)=1$, we see that in round $\ell+2$ the vertices adjacent to the vertex-connector have both opinions.
	Now, if a vertex $u$ belongs to a path connecting the vertex-connector and $u \in P_b^q \setminus P_b^{q-1}$ for some $q \in \N$ (note that such a~$q$ exists), then $u \in P_a^{q+2}$.
	Thus, in round~$\ell+2n+2$ all vertices on all these paths have both opinions~$a$ and~$b$.
	Furthermore, in round $\ell+n+\eta(x)$ the vertex $x \in V_w \setminus \{v\}$ receives opinion~$b$ and in round $\ell+n+\eta(x)+2$ a vertex $x \in V_w$ receives the other opinion~$a$.
	Thus, if $v \in e$, then all vertices of the incidence gadget as well as the selection gadgets in it have both opinions in round $\ell+2n+3$.
	A symmetric argument holds for the edge selection gadget.
	Thus, the two instances are equivalent.
\end{proof}

\subsection{Constant Duration of Activation Process}\label{sec:hardness_const_duration}
In this section, we show how our ``colored'' selection gadgets and the ideas presented in the \Whness reduction (with respect to the treewidth) of \citet{Ben-ZwiHLN11} yield a hardness result even if we assume that the length of the activation process is bounded by a constant.
The reduction is again from \PSIshort and again uses the edge representation strategy. Let $(G,H,\psi)$ be an instance of \PSIshort.

The key idea behind the reduction of \citet{Ben-ZwiHLN11} is to assign two enumerations to every vertex in~$V_w$ for $w \in V(H)$ with $\low\colon V_w \to [|V_w|]$ and $\high\colon V_w \to [|V_w|]$ so that for every vertex $v \in V_w$ we have $\low(v) + \high(v) = |V_w|$; if the reader is familiar with the original proof, we must admit that our use of their gadgets works in a somewhat simpler regime.
This time, all our selection gadgets (that is, the vertex and edge selection gadgets) are the same as in \Cref{fig:coloredVertexSelectionGadget} (i.e., we only use~$T_b$ for the selection) and we set $f_1(v) = f_2(v) = \deg(v)$ for every selection vertex~$v$ in these gadgets.
Crucially, the reduction of \citet{Ben-ZwiHLN11} is highly ``compact'' in the sense that the result of their reduction has a very low diameter, which then yields a strong bound on the number of rounds of a (successful) activation process.
We set the budget~$B$ to the number of selection gadgets, that is, $B = |V(H)| + |E(H)|$.
Now, we are ready to give a description of the incidence gadget.

\paragraph{Incidence Gadget}
We introduce an incidence gadget for every edge $\{w,w'\} \in E(H)$ and node~$w$.
As usual, the purpose is to verify that the selected vertex in the set~$V_w$ is incident to the selected edge in the set~$E_{ww'}$.
To this end, we define the mappings $\low$ and $\high$ for the edges as well; however, now there are two such pairs of mappings for each edge.
That is, for an edge $e \in E_{ww'}$ we have mappings $\low_w,\high_w,\low_{w'}$, and $\high_{w'}$, where $\low_w(e) = \low(V_w \cap e)$ and similarly for the other mappings.

The incidence gadget connects the selection gadget for~$V_w$ and $E_{ww'}$ in the following way; please refer to \Cref{fig:incidenceGadgetForConstRounds}.
We introduce two new \emph{checking vertices} $c^1_{ww'}$ and $c^2_{ww'}$ and set their thresholds to $f_1(x) = |V_w|$ and $f_2(x) = \deg(x)$ (for $x \in \{c^1_{ww'},c^2_{ww'}\}$).
For each vertex $v \in V_w$ we introduce $|V_w|$ new vertices that we all connect to $v$ and then connect $\low(v)$ to $c^1_{ww'}$ and connect the rest (that is, $\high(v)$) to $c^2_{ww'}$.
Similarly, we connect the selection vertices in the set~$E_{ww'}$, however, this time we ``switch the roles'' of $c^1_{ww'}$ and $c^2_{ww'}$.
For each edge $e \in E_{ww'}$ we introduce $|V_w|$ new vertices, which we all connect to $e$ and then connect $\low_w(e)$ to $c^2_{ww'}$ and the rest (i.e., $\high_w(e)$) to $c^1_{ww'}$.
We set $f_1(x) = f_2(x) = 1$ for all the vertices described here.
The last group of vertices that we add is formed by a \emph{special vertex} $s_{ww'}$ that is in the input set~$S_a$ and a group $A_{ww'}$ of~$|V_w|$ vertices connected to the special vertex, as well as to the checking vertices $c^1_{ww'}$ and $c^2_{ww'}$.
For the special vertex~$s$ we have $f_2(s) = \deg(s)$ and for the other vertices~$x$ we have $f_1(x) = 1$ and $f_2(x)=2$.

This finishes the description of our reduction.
We denote the resulting graph~$\widehat{G}$.
It is not hard to verify that graph~$\widehat{G}$ can be constructed in time polynomial in the sizes of $G$ and~$H$.

By \Cref{lem:atLeastOneBVertexInSelectionGadget} we get the following.
\begin{itemize}
	\item
	$T_a = \emptyset$ in every solution of size~$B$ for~$\widehat{G}$ and
	\item
	$T_b$ contains exactly one vertex in each selection gadget in every solution of size~$B$ for~$\widehat{G}$.
\end{itemize}
We say that the pair $(T_a,T_b)$ with $T_a,T_b \subseteq V(\widehat{G})$ is \emph{good} if it meets both of the above conditions.

\begin{lemma}\label{lem:checkingVerticesInRoundTwoIFFIncident}
	Let $(T_a,T_b)$ be a good pair, and let $v \in V_w$ and $e \in E_{ww'}$ be the selected vertex and edge, respectively.
	Then we have
	\begin{enumerate}
	        \item $A_{ww'} \subseteq P_a^1$,
		\item 
		$c^1_{ww'},c^2_{ww'} \in P^2_a$,
		\item
		$c^1_{ww'},c^2_{ww'} \in P^2_b$ if and only if $v \in e$, and
		\item
		if $v \in e$, then we have $A _{ww'} \subset P^3_b$ and $s_{ww'} \in P^4_b$
		\item
		if $v \notin e$, then we have $\big( \{c^1_{ww'},c^2_{ww'},s_{ww'}\} \cup A _{ww'} \big) \cap P^i_b \subsetneq \{c^1_{ww'},c^2_{ww'}\}$ for every $i \in \N$.
	\end{enumerate}
\end{lemma}
\begin{proof}
	We begin with the first claim.
	Since every vertex in~$A_{ww'}$ neighbors the special vertex which is in~$S_a$ and no vertex in $S_b \cup T_b$, we have $A_{ww'} \subseteq P_a^1 \setminus P_b^1$.
	Then we have $N(c^1_{ww'}) \cap P_a^1 = N(c^2_{ww'}) \cap P_a^1 = A_{ww'}$, 
	Furthermore, since for $c \in \{c^1_{ww'},c^2_{ww'}\}$ we have $|A_{ww'}| = |V_w| = f_1(c)$ and since $(T_a,T_b)$ is good and thus $N(c) \cap P^0_b = \emptyset$, we get $c \in P^2_a$.
	
	To prove the third assertion, we observe that
	\[\left| N(c^1_{ww'}) \cap P^1_b \right| = \low(v) + \high_w(e) \]
	and
	\[\left| N(c^2_{ww'}) \cap P^1_b \right| = \high(v) + \low_w(e) \,.\]
	
	Thus, we have $c^1_{ww'} \in P^2_b$ if and only if $\low(v) + \high_w(e) \ge |V_w|$.
	We get the following set of equivalent expressions.
	\begin{align*}
	\low(v) + \high_w(e) &\ge |V_w| \\
	\low(v) + |V_w| - \low_w(e) &\ge |V_w| \\
	\low(v) &\ge \low_w(e)
	\end{align*}
	Similarly, we get $c^2_{ww'} \in P^2_b$ if and only if
	\[
	\low(v) \le \low_w(e) \,.
	\]
	Furthermore, we have $N(c) \cap P^0_a = \emptyset$ for $c \in \{ c^1_{ww'},c^2_{ww'} \}$.
	We conclude that $c^1_{ww'}, c^2_{ww'} \in P^2_b$ if and only if $\low(v) = \low_w(e)$ which is by the definition if and only if $v \in e \cap V_w$.
	
	The fourth claim follows from the previous ones.
	Since vertices in~$A_{ww'} \subseteq P^1_a$, $A_{ww'} \cap P^1_b = \emptyset$, and $A_{ww'} \cap P^2_b = \emptyset$, we have that a vertex $x \in A_{ww'}$ is in $P^3_b$ if and only if $\left| N(x) \cap P^2_b \right| \ge f_2(x) = 2$.
	If $v \in e$, then $c^1_{ww'}, c^2_{ww'} \in P^2_b$ by the previous claim and, hence, $A_{ww'} \subseteq P^{3}_b$.  
	For the special vertex~$s_{ww'}$ we have~$s_{ww'} \in S_a$ and $f_2(s_{ww'}) = |A_{ww'}|$, and thus, $s_{ww'} \in P^{4}_b$. 
	
	For the fifth claim, we use induction on $i$. We already argued that it holds for $i \in \{0,1,2\}$ as $P^2_b$ contains at most one of $c^1_{ww'}, c^2_{ww'}$ by the third claim and no vertex from $A_{ww'}$ nor $s_{ww'}$.
	Suppose now that the claim holds for some $i \ge 2$, we want to show that 
	$\big( \{c^1_{ww'},c^2_{ww'},s_{ww'}\} \cup A _{ww'} \big) \cap \big(P^{i+1}_b \setminus P^{i+1}_b\big) = \emptyset$.
	If $r \in \{1,2\}$ is such that $c^r_{ww'} \notin P^{i}_b$, then, as $c^r_{ww'} \in P^{i}_a$, in order to include   $c^r_{ww'}$ in $P^{i+1}_b$ all of its neighbors must be in $P^{i}_b$. But, by assumption, the vertices in $A _{ww'}$ are not in $P^{i}_b$ and, thus, $c^r_{ww'} \notin P^{i+1}_b$.
	A vertex $x \in A_{ww'}$ is in $P^{i+1}_b$ only if $\left| N(x) \cap P^i_b \right| \ge f_2(x) = 2$.
	But $N(x) = \{c^1_{ww'},c^2_{ww'},s_{ww'}\}$ and at most one of these vertices is in $P^i_b$.
	Thus $x \notin P^{i+1}_b$.	
	In order to have $s_{ww'} \in P^{i+1}_b$ we need $A_{ww'} \subseteq P^{i}_b$.
	As this is not the case, we also have $s_{ww'} \notin P^{i+1}_b$, finishing the induction.
\end{proof}

We now observe that, irrespective of the selection, the special vertices have the ability to spread their opinion~$a$ to every vertex within four rounds.

\begin{lemma}\label{lem:goodGivesAToAllInFourRounds}
	Let $(T_a,T_b)$ be a good pair.
	Then we have that $P_a^4 = V(\widehat{G})$.
\end{lemma}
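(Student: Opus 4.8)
The plan is to track the propagation of opinion~$a$ in discrete layers, starting from $P_a^0 = S_a$. Recall that a good pair has $T_a = \emptyset$, so $P_a^0 = S_a$ consists exactly of the red leaf of each guard path together with each special vertex~$s_{ww'}$. I would then show that every type of vertex of~$\widehat{G}$ enters $P_a$ within four rounds. Since $P_a^i$ depends only on the configuration in round $i-1$ and opinion~$a$, once gained, is never lost, it suffices to exhibit for each vertex type a single round in which the conditions for acquiring~$a$ are met; the correctness of the bound then follows by enumerating the vertex types of~$\widehat{G}$.

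The spread proceeds layer by layer. In round~$1$ every central guard vertex acquires~$a$ from the adjacent red leaf, and every vertex of a group $A_{ww'}$ acquires~$a$ from the adjacent special vertex, since all of these have $f_1 = 1$ and carry no opinion in round~$0$. In round~$2$ each checking vertex $c \in \{c^1_{ww'}, c^2_{ww'}\}$ acquires~$a$: its $|V_w|$ neighbours in $A_{ww'}$ all lie in $P_a^1$, meeting $f_1(c) = |V_w|$ (this is exactly the first claim of \cref{lem:checkingVerticesInRoundTwoIFFIncident}), and the remaining guard leaf acquires~$a$ from its central neighbour via $f_1 = 1$. In round~$3$ every distribution vertex acquires~$a$, since each is adjacent to one of the checking vertices, which lies in $P_a^2$, and has threshold~$1$. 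Finally, in round~$4$ every selection vertex~$v$ acquires~$a$: its neighbours are exactly the two central guard vertices (in $P_a^1$) and the distribution vertices attached to it across all its incidence gadgets (in $P_a^3$), so all $\deg(v)$ of them lie in $P_a^3$, meeting the threshold $\deg(v)$. These four layers cover $V(\widehat{G})$, giving $P_a^4 = V(\widehat{G})$.

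The one point that needs care — and the main obstacle — is that several vertices already carry opinion~$b$ at the moment they are supposed to acquire~$a$: the selected vertex of each gadget lies in $T_b \subseteq P_b^0$, so the distribution vertices attached to it enter $P_b$ already in round~$1$, and a selection vertex may carry~$b$ throughout. For such vertices the acquisition of~$a$ is governed by the second-opinion rule with threshold $f_2$ rather than by $f_1$. The reduction is designed so that this causes no trouble: each vertex used above has $f_1 = f_2$ (the distribution vertices have $f_1 = f_2 = 1$ and the selection vertices have $f_1 = f_2 = \deg$), so the $f_2$-condition holds precisely when the $f_1$-condition does, and the layer-by-layer acquisition of~$a$ proceeds identically whether or not opinion~$b$ is already present. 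For the central guard vertices, the groups $A_{ww'}$, and the checking vertices I would additionally verify that they carry no opinion in the relevant earlier round (their $b$-neighbours, if any, are too few or appear too late to meet the corresponding threshold), so that their acquisition of~$a$ is already justified by the $f_1$-rule. This completes the plan.
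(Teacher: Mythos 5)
Your proposal is correct and follows essentially the same layer-by-layer argument as the paper's proof: special vertices and guard-path leaves at round~$0$, central guard vertices and the sets $A_{ww'}$ at round~$1$, checking vertices and remaining guard leaves at round~$2$, the low/high distribution vertices at round~$3$, and finally the selection vertices at round~$4$. Your explicit check that vertices already carrying opinion~$b$ still acquire~$a$ on schedule (because the relevant vertices have $f_1 = f_2$) is a point the paper's proof passes over silently, and it is handled correctly.
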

\begin{proof}
	By \Cref{lem:checkingVerticesInRoundTwoIFFIncident,lem:atLeastOneBVertexInSelectionGadget} we have that $P^2_a$ contains all vertices in $A_{ww'}$, all special vertices, all checking vertices, and all vertices in guard paths.
	Fix a node $w \in V(H)$ and $\{w,w'\} \in E(H)$.
	We get that all low- and high-vertices (i.e., the vertices incident to $c^i_{ww'}$ not in $A_{ww'}$ for $i = 1,2$) are in $P^3_a$, since the checking vertices are in~$P^2_a$ by \Cref{lem:checkingVerticesInRoundTwoIFFIncident} and all of these vertices have their respective thresholds set to~$1$.
	Thus, for a selection vertex~$x$ (either in $V_w$ or in $E_{ww'}$) we have that $N(x) \subseteq P^3_a$.
	We get that $P_a^4 = V(\widehat{G})$ and the lemma follows.
\end{proof}

\begin{theorem}\label{thm:twoTSSIsHardForTDandConstRounds}
	\twoTSS is \Whard when parameterized by the treedepth, the feedback vertex number, the 4-path vertex cover number of the input graph and the budget $B$ combined, even if any successful activation process is guaranteed to stabilize in~$\fin = 4$ rounds and $f_1(v) \le f_2(v)$ for every vertex~$v$.
	
	Moreover, unless ETH fails, there is no algorithm for \twoTSS running in $g(k)n^{o(k/\log k)}$, where $k$ is the sum of the budget $B$, the treedepth, the feedback vertex number, and the 4-path vertex cover number of the input graph and $g$ is an arbitrary computable function, even if any successful activation process is guaranteed to stabilize in~$\fin = 4$ rounds and $f_1(v) \le f_2(v)$ for every vertex~$v$.
\end{theorem}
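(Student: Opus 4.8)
The plan is to show that the reduction from \PSIshort constructed above is a parameterized reduction that simultaneously controls all four structural parameters, and then to feed it into \cref{thm:PSI_ETH}. Proving this general version suffices, since treedepth and the budget $B$ are among the parameters listed here, so \cref{thm:twoTSSIsHardForTDandConstRounds:simplified} is an immediate weakening. First I would bound the parameters by exhibiting a small deletion set. Let $X$ collect, for every selection gadget, the two central vertices of its guard paths, and, for every incidence gadget, the two checking vertices $c^1_{ww'},c^2_{ww'}$ and the special vertex $s_{ww'}$. As there are $B=|V(H)|+|E(H)|$ selection gadgets and $2|E(H)|$ incidence gadgets, we have $|X|=2B+6|E(H)|=O(|E(H)|)$. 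The decisive observation is that $\widehat{G}-X$ is a disjoint union of stars---each centered at a selection vertex, whose pendant neighbors are precisely the $\low$- and $\high$-vertices attached to it---together with isolated vertices (the guard-path leaves and the vertices of the sets $A_{ww'}$, which all lose every neighbor once $X$ is deleted). A disjoint union of stars is a forest that contains no $P_4$ and has treedepth at most $2$; hence $\fvn(\widehat{G})\le|X|$, the $4$-path vertex cover number of $\widehat{G}$ is at most $|X|$, and $\td(\widehat{G})\le|X|+2$. Together with $B=O(|E(H)|)$, the combined parameter $k$ is $O(|E(H)|)$.

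Next I would prove correctness in the form: a good pair $(T_a,T_b)$---which by \cref{lem:atLeastOneBVertexInSelectionGadget} is the only possible target set of size at most $B$---is a target set if and only if, in every incidence gadget, the selected vertex is incident to the selected edge, and this is exactly the condition that the selection encodes a solution of \PSIshort. For the forward implication I would use \cref{lem:checkingVerticesInRoundTwoIFFIncident}: the special vertex $s_{ww'}$ lies in $S_a$ and has $f_2(s)=\deg(s)$, so it can receive opinion $b$ only after its whole neighborhood $A_{ww'}$ does, which in turn requires both checking vertices to carry $b$; by the lemma this happens precisely when $v\in e$, so a non-incident choice leaves $s_{ww'}$ without opinion $b$ forever. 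For the reverse implication I would trace the $b$-diffusion under a consistent selection: the seeds spread $b$ to their $\low/\high$-vertices in round $1$, to the checking vertices in round $2$ (\cref{lem:checkingVerticesInRoundTwoIFFIncident}), to the sets $A_{ww'}$ in round $3$, and to the special vertices in round $4$; in parallel the checking vertices push $b$ back onto all remaining $\low/\high$-vertices, whence every non-selected selection vertex (threshold equal to its degree) receives $b$ in round $4$, while \cref{lem:goodGivesAToAllInFourRounds} already yields $P_a^4=V(\widehat{G})$. Hence $P_a^4=P_b^4=V(\widehat{G})$ exactly when the selection is consistent.

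The same layered distance structure shows that no opinion can travel more than four hops from a seed, so both $P_a$ and $P_b$ reach their final extent by round $4$ on every such instance; thus setting $\fin=4$ loses nothing and the produced instances meet the promised round restriction, and $f_1(v)\le f_2(v)$ holds for every vertex by construction. Since the reduction is computable in polynomial time and $k=O(|E(H)|)$ is linear in the number of edges of the pattern graph, it is a parameterized reduction, giving \Whness from that of \PSIshort in $|E(H)|$; moreover, an algorithm for \twoTSSshort running in $g(k)\,n^{o(k/\log k)}$ would compose with it to solve \PSIshort in $f(H)\,n^{o(|E(H)|/\log|E(H)|)}$, contradicting \cref{thm:PSI_ETH} and hence ETH.

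The step I expect to be the main obstacle is the reverse direction of correctness, specifically verifying that, under a consistent selection, opinion $b$ actually reaches \emph{every} vertex within four rounds. The delicate case is the non-selected selection vertices: because their thresholds equal their degree, each of them can turn only once \emph{all} of its neighbors---all $\low/\high$-vertices across every incidence gadget containing it, plus both guard-path centers---already carry $b$. Confirming that the back-propagation from the checking vertices covers exactly this neighborhood in time, with no off-by-one slip in the round count, together with checking that nothing spreads in round $5$ (so that stabilization at round $4$ is genuine also on no-instances), is where the care is needed.
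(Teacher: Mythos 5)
Your proposal is correct and follows essentially the same route as the paper's proof: a small deletion set $X$ of size $O(|E(H)|)$ leaving a $P_4$-free union of stars to bound all four parameters (the paper keeps the special vertices $s_{ww'}$ as star centers rather than deleting them, a cosmetic difference), followed by the same use of \cref{lem:atLeastOneBVertexInSelectionGadget,lem:checkingVerticesInRoundTwoIFFIncident,lem:goodGivesAToAllInFourRounds} for correctness and \cref{thm:PSI_ETH} for the lower bound. The round-counting step you flag as delicate is resolved exactly as you sketch it: the $\low/\high$-vertices reach $P_b^3$ via the checking vertices, the guard-path centres are already in $P_b^1$, so the non-selected selection vertices (with $f_2=\deg$) enter $P_b^4$.
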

\begin{proof}
	Let us again first verify the parameters of our reduction.
	If we remove from~$\widehat{G}$ all the checking vertices as well as the vertices of both guard paths in each selection gadget (let us denote this set~$X$), we obtain a forest composed of stars of two kinds.
	Namely, the first kind of stars is formed by a special vertex $s_{ww'}$ and leaves $A_{ww'}$ for some~$\{w,w'\} \in E(H)$.
	The second kind is a selection vertex with several groups of adjacent vertices, one for each incidence gadget it is a part of.
	As both of them are stars, their treedepth is $2$ and they contain no paths on $4$ vertices. Since $|X| = 2\cdot 2\cdot |E(H)| + 6(|V(H)|+|E(H)|)= \Oh{|E(H)|}$, the treedepth, the feedback vertex number, and the 4-path vertex cover number of $\widehat{G}$ are $\Oh{|E(H)|}$. Furthermore, as $S_a \cup S_b \subseteq X$, we also have $B < |S_a|+|S_b| = \Oh{|E(H)|}$. Hence, once we verify the correctness of the reduction, the results will follow from \Cref{thm:PSI_ETH} and the discussion thereafter.
	
	For the correctness of the reduction, let us first assume that there is a solution $(T_a,T_b)$ for the constructed instance $(\widehat{G}, f_1, f_2, S_a, S_b, 4, B)$.
	By \Cref{lem:atLeastOneBVertexInSelectionGadget} $(T_a,T_b)$ is good.
	By \Cref{lem:goodGivesAToAllInFourRounds} we have $P^4_a = V(\widehat{G})$, that is, the activation process for $a$ stabilizes in four rounds.
	By \Cref{lem:checkingVerticesInRoundTwoIFFIncident} we have that the special vertices are in $P^i_b$ for some $i \in N$ only if the selected vertices are incident to the selected edges, i.e., only if the original instance of PSI is a \yesI.
	
	Now suppose that the instance of PSI is a \yesI.
	Create a pair $(T_a,T_b)$ by letting $T_a= \emptyset$ and $T_b$ be formed from the selection vertices selected according to a fixed solution for the PSI instance.
	Obviously $(T_a,T_b)$ is good.
	By \Cref{lem:goodGivesAToAllInFourRounds} we have $P^4_a = V(\widehat{G})$.
	By \Cref{lem:checkingVerticesInRoundTwoIFFIncident}, all checking vertices are in $P^2_b$, $A_{ww'} \subseteq P^3_b$ and all special vertices are in~$P^4_b$.
	Furthermore, we get that all vertices incident to the checking vertices are in $P^3_b$, as each such vertex~$v$ has $f_2(v)$ equal to the number of checking vertices in $N(v)$.
	Note that now the set $V(\widehat{G}) \setminus P^3_b$ consists only of the selection vertices not in~$T_b$; each such vertex has all its neighbors in $P^3_a \cap P^3_b$.
	From this it follows that $P^4_b=V(\widehat{G})$ and the theorem follows.
\end{proof}

Note that we proved \Cref{thm:twoTSSIsHardForTDandConstRounds,thm:twoTSSIsHardForTWandConstFMax} while using the assumption that for each vertex~$v$ it holds that~$f_1(v) \le f_2(v)$.
However, this might not be always the case in applications. 
Next we show that the \twoTSSshort problem remains hard even if $f_1(v) \ge f_2(v)$ holds for every vertex~$v$. %

\begin{theorem}\label{thm:twoTSSIsHardForTDandConstRoundsWithFOneLarger}
	\twoTSS is \Whard when parameterized by the treedepth of the underlying graph and the budget~$B$ combined even if~$\fin = 6$ rounds and $f_1(v) \ge f_2(v)$ for every vertex~$v$.
\end{theorem}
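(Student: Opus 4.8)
The plan is to give a fresh reduction from \PSIshort{} that reuses the selection-gadget/incidence-gadget skeleton behind \cref{thm:twoTSSIsHardForTDandConstRounds}, but in which \emph{every} vertex~$v$ satisfies $f_1(v)\ge f_2(v)$. The conceptual obstacle is that the earlier reduction rests entirely on the ``first opinion for free, second opinion locked out'' trick (trap leaves with $f_2=2$ and checking vertices with $f_2=\deg$), which is exactly what $f_1\ge f_2$ forbids: once a vertex holds one opinion, acquiring the other becomes \emph{easier}, not harder, so the usual way of keeping a vertex permanently unbalanced disappears. The remedy is to make the consistency-enforcing vertices \emph{seeds} that already carry the ``hard'' opinion~$a$ and are given a small $f_2$, so that they stay unbalanced precisely when the incidence check fails to deliver opinion~$b$; the generous round bound $\fin=6$ leaves slack for opinion~$a$ to be installed and then propagated back before stabilisation.

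First I would replace the guard paths of the selection gadget by two \emph{trap} vertices $t_1,t_2\in S_a$, each adjacent to all selection vertices, with $f_1(t_j)=\deg(t_j)$ and $f_2(t_j)=1$. Since a selection vertex (still carrying $f_1=f_2=\deg$) cannot gain an opinion through the process unless seeded, $t_j$ receives~$b$ if and only if some neighbouring selection vertex lies in~$T_b$. Hence one unit of budget balances both $t_1$ and $t_2$ only when it is spent on a (common-neighbour) selection vertex, whereas spending it on a trap balances only that trap; with $B=|V(H)|+|E(H)|$ this forces $T_a=\emptyset$ and exactly one selected selection vertex per gadget, reproducing \cref{lem:atLeastOneBVertexInSelectionGadget} and the notion of a \emph{good} pair. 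Note that the only strict inequality $f_1>f_2$ in the whole construction occurs at these traps, which is what makes the instance genuinely ``second opinion easier''.

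Next I would redesign the incidence gadget, keeping the $\low/\high$ counting layer (with $f_1=f_2=1$) unchanged but setting $f_1=f_2=|V_w|$ at the two checking vertices $c^1_{ww'},c^2_{ww'}$, and delivering opinion~$a$ to them from a small seed as before: two special vertices $s,s'\in S_a$ flood the group $A_{ww'}$, where I set $f_1=f_2=2$ on $A_{ww'}$ so that each vertex of $A_{ww'}$ acquires~$a$ in round~$1$ from $s,s'$ and later acquires~$b$ only when \emph{both} checking vertices carry~$b$; the group $A_{ww'}$ then hands~$a$ to the checking vertices in round~$2$. A checking vertex, still opinionless after round~$1$, evaluates \emph{both} opinions against $f_1=|V_w|$, so it gains~$b$ simultaneously with~$a$ exactly when the $b$-count $\low(v)+\high_w(e)$ reaches $|V_w|$; because that count is frozen after round~$1$ and $f_2=|V_w|$, a checking vertex that obtains only~$a$ can never pick up~$b$ afterwards. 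This yields the analogue of \cref{lem:checkingVerticesInRoundTwoIFFIncident}, and each special vertex (with $f_2=|V_w|=|A_{ww'}|$) remains unbalanced precisely when the selected vertex is not incident to the selected edge.

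Finally I would verify both directions. For a Yes-instance, once all checks succeed the checking vertices hold both opinions in round~$2$ and push them back through the $\low/\high$ layer to \emph{every} selection vertex, selected or not (round~$3$); the specials balance in round~$4$ via $f_2=|V_w|$, and every selection vertex sees all its neighbours balanced and balances via $f_2=\deg$, so the process stabilises with $P_a^{\fin}=P_b^{\fin}=V(\widehat G)$ comfortably within the six allotted rounds. For a No-instance, some special vertex never meets its $b$-threshold and stays unbalanced, so no good pair is a target set and, by the selection-gadget analogue, no size-$B$ solution exists. The parameter bounds carry over: deleting the trap, checking and special vertices — an $O(|E(H)|)$-sized hub set — leaves a disjoint union of stars, so $\td(\widehat G)$ and $B$ are both $O(|E(H)|)$, and \Whness{} follows from that of \PSIshort{} in~$|E(H)|$. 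The delicate part I expect is ruling out ``wrong-way'' leakage under $f_1\ge f_2$ — in particular that a failed checking vertex cannot be rescued to~$b$ through back-propagation from $A_{ww'}$, and that non-selected selection vertices still balance in time — which is exactly where the precise choices $f_1(A)=f_2(A)=2$ and $f_1(c)=f_2(c)=|V_w|$, together with the freezing of the $\low/\high$ counts, must be argued carefully.
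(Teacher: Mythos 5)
Your plan is workable, but it takes a genuinely different and considerably heavier route than the paper. The paper does not build a new \PSIshort{} reduction at all: it reduces directly from \TSSshort{} (whose \Whness{} for treedepth plus solution size is already known from Ben-Zwi et al.), by attaching $n$ pendant vertices from $S_a$ to every original vertex $v$, setting $f_1\equiv n$ everywhere and $f_2(v)=f(v)$ on original vertices ($f_2=1$ on pendants). Opinion~$a$ then floods the whole graph in round one, after which the spread of opinion~$b$ is governed solely by $f_2=f$ and reproduces the original \TSSshort{} process verbatim; the $4$-round bound of the source instance plus the flooding round and the final pendant round gives $\fin=6$, and treedepth grows by at most an additive constant. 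This is a ten-line argument whose correctness is inherited wholesale from the known hardness result, and it exposes the conceptual point most cleanly: under $f_1\ge f_2$ the problem \emph{contains} \TSSshort{} after a single flooding round. Your approach instead rebuilds the selection and incidence gadgets from scratch so that the consistency enforcers are $S_a$-seeded ``traps'' that stay unbalanced unless opinion~$b$ reaches them, with all the relevant neighbour counts frozen after round two; as far as I can check, the threshold choices you propose ($f_1>f_2$ only at the traps, $f_1=f_2$ elsewhere, the chicken-and-egg freezing of the $\low/\high$ counts at a failed checking vertex) do make the argument go through, but you would have to re-prove analogues of \cref{lem:atLeastOneBVertexInSelectionGadget,lem:checkingVerticesInRoundTwoIFFIncident,lem:goodGivesAToAllInFourRounds} under the new regime. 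What your route buys in exchange for that extra work is a self-contained reduction from \PSIshort{} that also yields the $g(k)n^{o(k/\log k)}$ ETH lower bound directly, which the paper's short argument only provides implicitly through the source \TSSshort{} instance.
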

\begin{proof}
	In this proof, we once again take advantage of the known proof of \Whness of \TSSshort for parameterization by treedepth.
	Let $(G, f, k)$ be an instance of \TSSshort where~$G = (V,E)$ is an~$n$-vertex graph.
	We may assume that $f(v) \le n$ for all~$v \in V$ and that~$k < n$, as this is the case in the proof of \citet{Ben-ZwiHLN11}.
	We construct an instance of \twoTSSshort as follows:
	We attach~$n$ new vertices $v^1, \ldots, v^n$ to every vertex~$v$ in~$G$ and add all of these to the set~$S_a$.
	Call the resulting graph $H$.
	We leave the set~$S_b$ empty.
	Finally, we set $f_1(v) = n$ and $f_2(v) = f(v)$ for all $v \in V$ and $f_1(v) = n$ and $f_2(v) = 1$ for all $v \notin V$.
	This finishes the description of the \twoTSSshort instance~$(H,f_1, f_2, S_a, S_b,6, k)$.
	
	Let us first argue that if $(G,f,k)$ is a yes-instance, then $(H,f_1, f_2, S_a, S_b, 6,k)$ is a yes-instance.
	Let $T \subseteq V$ be a target set for~$G$ of size~$k$; it is worth pointing out that it follows from the proof in~\cite{Ben-ZwiHLN11} that~$T$ activates~$G$ in~$4$ rounds of the activation process.
	We show that $T_a = \emptyset$ and $T_b = T$ is a solution to $(H,f_1, f_2, S_a, S_b,6, k)$.
	We observe the following.
	\begin{claim}\label{clm:twoTSSOneLargerPA1claim}
		We have $P_a^1 = V(H)$, since every vertex not in~$S_a$ has at least $n$ neighbors in $S_a = P_a^0$. Note that this holds independently of $T_a$ and $T_b$.
		\hfill$\lhd$
	\end{claim}
	It holds that $P_b^1 \cap V = P_b^0 = T_b$, since~$k < n$.
	Now, since $T$ is a target set of~$G$ and since by the above claim for all vertices~$v$ of~$H$ the function $f_2(v)$ applies after the first round, we get that $V(G) \subseteq P_b^5$.
	Finally, since $f_2(v) = 1$ for all newly added vertices~$v$, we get $P_b^6 = V(H)$.
	
	Let us now argue that if $(H,f_1, f_2, S_a, S_b,6, k)$ is a yes-instance, then so is $(G, f, k)$.
	Let $(T_a, T_b)$ be a solution to $(H,f_1, f_2, S_a, S_b,6, k)$.
	By \Cref{clm:twoTSSOneLargerPA1claim} we have $P_a^1 = V(H)$ independently of $T_a$. Therefore $P_b^1 \cap V \subseteq T_a \cup T_b$, since each vertex $v \in V \setminus (T_a \cup T_b)$ has at most $|T_b|\le k < n=f_1(v)$ neighbors in $P_b^0$.
	
	If $T_a \neq \emptyset$, then consider sets $\widetilde{T}_a=\emptyset$ and $\widetilde{T}_b=T_a \cup T_b$, suppose that $P^t_b=V(H)$ for some $t \le 6$ and let $\widetilde{P}_a^0, \widetilde{P}_b^0, \widetilde{P}_a^1, \widetilde{P}_b^1, \ldots $ be the activation process arising from $\widetilde{T}_a$ and $\widetilde{T}_b$. We still have $\widetilde{P}_a^1 =V(H)$ and we have $\widetilde{P}_b^1 \cap V = \widetilde{T}_b = T_a \cup T_b \supseteq P_b^1 \cap V$. Therefore, we have $\widetilde{P}_b^i \supseteq P_b^i$ for every $i$ and, in particular, $\widetilde{P}_b^t =V(H)$. Therefore, we can assume that $T_a = \emptyset$.
	
	Next, we observe that we may assume $T_b \subseteq V(G)$.
	To see this, suppose we have $v^i \in T_b$ for some $v \in V(G)$ and $i \in [n]$.
	We claim that $\widehat{T}_b = (T_b \setminus \{v^i\}) \cup \{v\}$ is a solution to $(H,f_1, f_2, S_a, S_b,6, k)$ as well (we show this for a single vertex~$v$, however, by a similar reasoning one can do this for all such $v^i$ simultaneously).
	Let $P_b^0, P_b^1, \ldots, P_b^t$ be the activation process arising from $T_b$, i.e., $P_b^0 = T_b$, with $P_b^t = V(H)$.
	Let $\widehat{P}_b^0, \widehat{P}_b^1, \ldots, \widehat{P}^{\widehat{t}}_b$ be the activation process arising from $\widehat{T}_b$.
	We observe that $v^i$ has $v$ as a neighbor in $\widehat{P}_b^0=\widehat{T}_b$ and $f_2(v^i)=1$, whereas $v$ is the only vertex for which $v^i$ is a neighbor in $P_b^0=T_b$. Therefore we have $\widehat{P}_b^1 \supseteq P_b^1$, $\widehat{P}_b^i \supseteq P_b^i$ for every $i$ and, thus, $\widehat{P}_b^t =V(H)$.
	
	Therefore, we have $P_b^1 \cap V = P_b^0 = T_b$. As $P_b^t \cap V =V$, by setting $T=T_b$ we obtain a solution for $(G,f,k)$, finishing the proof.
\end{proof}

\section{Model Variations}\label{sec:variants}

In this section, we discuss possible variations of the model that might make it more realistic.
We discuss how the results of this paper carry over to the variations.

\subsection{Allowing Limited Imbalance}
\label{subs:imbalance}
We are aware that in real-world applications, the goal is not to balance the spread exactly equally, but to allow some imbalance between the spread of both opinions. However, to keep the model as simple as possible, we decided to study an equally balanced spread, as it is a special case of the more general model. 

To allow some imbalance, one would augment the instance with either a non-negative integer $D$ representing the number of unbalanced vertices, or a real number $d$ representing the ratio of vertices that can remain imbalanced at the end of the process. For any constant $D$ or $0 \le d <1$, this augmented model is at least as hard as \twoTSSshort{} as can be seen from the following simple reduction.
Indeed, given an instance of \twoTSSshort, we can add $D$ or $\frac{d}{1-d}|V|$ new isolated vertices with a single opinion to the network to obtain an instance of the augmented model.
These vertices deplete the imbalance limit, unless some part of the budget is invested in them, thus forcing the vertices of the original graph to be (almost) balanced.

In fact, in our hardness instances, a decrease in the budget leads to a larger increase in the imbalance (no vertex selected in a selection gadget results in at least two imbalanced vertices in that gadget, etc.). Thus, to achieve an outcome with imbalance at most $D$ or $d$, it is necessary to achieve complete balance on the original graph. It is easy to see that the addition of isolated vertices does not affect the structural parameters studied in this work. Therefore, as long as $|S_a|+|S_b|$ is not a parameter, the hardness results carry over to the augmented model, allowing limited imbalance.

Conversely, our algorithms can be easily adapted to handle this augmented model.
We let $D = d \cdot |V|$ in case only $d$ is given.
The modifications for the algorithm parameterized by the vertex cover number are not that straightforward. 
More precisely, while \Cref{rr:vertexWithBothOpinions} and \Cref{rr:vertexNotInVCfValueReduction} can also be used in this setting, \Cref{rr:vertexInSNeverGetSecondOpinion} and \Cref{rr:boundBudgetByVCNumber} need a nontrivial modification.
We refrain from describing those here.
Instead, we start by describing the modifications of the algorithms parameterized by 3-path vertex cover number and vertex integrity.
These can also be used for the case of vertex cover, although with slightly worse running time.

The first necessary modification is that we now need to guess that $r_a(u) < \infty$ and $r_b(u) = \infty$ (or vice-versa) for a vertex of $U$. 
Let $\varphi_U$ be the number of vertices in $U$ with this guess.
To handle this guess, we add the following linking constraint for $b$ (we again only show the case $r_a(u) < \infty$, $r_b(u) = \infty$):
\begin{equation}
 \sum_{v \in N(u)} x^{b,\fin}_v \le f_2(u)-1 \label{eq:twotssNFold:globalUpperbound:case5} 
\end{equation}
(note that the bound is now based on $f_2(u)$, instead of $f_1(u)$ in \eqref{eq:twotssNFold:globalUpperbound:case2}).
If $r_a(u) \ge 1$, then we add constraint \eqref{eq:twotssNFold:globalLowerbound:case1} for $a$, and if $r_a(u) \ge 2$, then also constraint \eqref{eq:twotssNFold:globalUpperbound:case1} for~$a$ and \eqref{eq:twotssNFold:globalUpperbound:case3} for~$b$.

The second modification is that we no longer impose the local constraint \eqref{eq:twotssNFold:FinalAEqualsB} to ensure the balanced outcome.
Instead, we introduce a new binary variable $d_v$ for each vertex $v \in V \setminus U$ that represents whether the vertex is imbalanced, i.e., subject to the following constraint:
\begin{equation}
 d_v = \left[x^{a,\fin}_v = x^{b,\fin}_v\right] \label{eq:twotssNFold:localimbalance}\,.
\end{equation}
Finally, we add a single new global constraint limiting the total number of imbalanced vertices:
\begin{equation}
 \sum_{v \in V \setminus U} d_v \le D - \varphi_U \label{eq:twotssNFold:globalimbalance}\,.
\end{equation}
It is straightforward to check that this modified N-fold IP indeed correctly represents the augmented model.

The treewidth algorithm also needs several simple modifications.
First of all, we add to the solution pattern a non-negative integer $m$ representing the maximum number of imbalanced vertices in this part of the graph.
Naturally, we omit the first point of the definition of a solution pattern being valid for a vertex $v$.
Similarly, when defining viability of the modified activation process, we drop the condition that $\widehat{P}_a^{\fin}= \widehat{P}_a^{\fin+1} = \widehat{P}_b^{\fin}= \widehat{P}_b^{\fin+1}$ and only require that $\widehat{P}_a^{\fin}= \widehat{P}_a^{\fin+1}$ and $\widehat{P}_b^{\fin}= \widehat{P}_b^{\fin+1}$, i.e., that the process stabilizes.
The third condition is applied even if $r_b(v) = \infty$ or $r_a(v)=\infty$, respectively.
A solution complies with a pattern, if it now additionally satisfies that $\left|\widehat{P}_a^{\fin} \setminus \widehat{P}_b^{\fin}\right| + \left|\widehat{P}_b^{\fin} \setminus \widehat{P}_a^{\fin}\right| \le m$, i.e., there are not too many imbalanced vertices.
Obviously, a pair of sets is then a solution to the augmented model if and only if it complies with the pattern $(\emptyset,\emptyset,\emptyset,\emptyset,\emptyset,\emptyset,\emptyset,\emptyset,D)$ at the root.

As to the algorithm itself, in a leaf node we again set the value to zero for any pattern $(\emptyset,\emptyset,\emptyset,\emptyset,\emptyset,\emptyset,\emptyset,\emptyset,m)$ with $m \ge 0$. 
In the introduce vertex node, if the introduced vertex $v$ is imbalanced according to $r_a$ and $r_b$, then we use $m-1$ in the pattern for $y$, otherwise we use $m$. 
In the introduce edge node, we use $m$ without modification in the pattern for $y$.
In the forget node we now also define 
\begin{align*}
\bm{q}_{0,\infty} &= (0,\infty,0,0,0,f_2(v)-1,0,0)                  \\
\bm{q}_{r_a,\infty} &= (r_a,\infty,f_1(v),0,f_1(v)-1,f_2(v)-1,0,f_1(v)-1)   & \text{for } r_a \in [\fin], r_a \ge 1
\end{align*}
(leaving out the symmetric cases)
and let 
\[
\mathcal{Q}' =
\begin{cases}
\{ \bm{q}_{0,0}\}
& \text{if } v \in S_a \cap S_b\\
\big\{ \bm{q}_{0,r_b} \,\big|\, r_b \in [\fin] \cup \{\infty\}\big\}
& \text{if } v \in S_a \setminus S_b\\
\big\{ \bm{q}_{r_a,0} \,\big|\, r_a \in [\fin]\cup \{\infty\}\big\}
& \text{if } v \in S_b \setminus S_a\\
\big\{ \bm{q}_{r_a,r_b} \,\big|\, r_a,r_b \in [\fin]\cup \{\infty\} \big\}
& \text{if } v \notin S_a \cup S_b\, .
\end{cases}
\]
The recursion is made exactly as before, we use $m$ without modification in the pattern for~$y$.
In the join node, we try all possible combinations of non-negative $m_y$ and $m_z$ in the patterns for $y$ and $z$, respectively, such that $m_y+m_z-m$ is exactly the number of vertices in $\beta(x)$ imbalanced according to $r_a$ and $r_b$.

The modification adds a factor of $D^2$ (which is at most $n^2$) to the running time (because of the computation in the join node).
The correctness of the modified algorithm can be proven along the lines of the correctness of the original algorithm and we omit it here.
 
\subsection{No Simultaneous Gain of Opinions}
For our hardness results, we heavily use that if both opinions pass the first threshold of an agent at the same time, she receives both these opinions in the next round.
What if, in the activation process, the agent prefers one of the opinions; i.e., if both opinions pass the first threshold, she only receives the preferred opinion.
The preferred opinion can be either agent-specific~\cite{TongDW18} (this might also correspond to a sample from a model where the agent receives a random opinion in this case~\cite{CarnesNWZ2007}) or instance-specific (such as in the ``good information'' mode of \citet{BudakAA2011}). Obviously, the former case is more general and, thus, at least as hard as the latter one.

In the following, we show how our gadgets can be tweaked to work for instance-specific preferred opinion, which is a special case of agent-specific preferences.

\paragraph{Synchronizing Gadget} The basic building block of all gadgets in the settings where agents do not gain opinions simultaneously is the \emph{synchronizing gadget}. This gadget consists of two ports and ensures that the ports obtain their first opinions in the same round, and moreover, these opinions are necessarily different, as otherwise the gadget is never balanced. Moreover, in the positive case, the ports obtain the second opinion exactly three rounds after the first one (unless they obtain it from outside earlier).

Formally, the synchronizing gadget consists of six vertices $v_1$, $v_2$, $v_3$, $v_4$, $v_5$, and $v_6$ such that $f_1(v_i) = 1$ and $f_2(v_i) = 3$ for all $i\in[6]$.
Each vertex from $v_1$, $v_2$, $v_3$ is connected to each vertex from $v_4$, $v_5$, $v_6$.
The \emph{$b$-port} is connected by an edge with all vertices of $\{v_i \mid i \in [6]\}$, while the \emph{$r$-port} is a neighbor only of vertices $v_1$, $v_2$, and $v_3$. 
The gadget is connected to the rest of the graph only using the ports. 
We let $f_2(r)=3$ and $f_2(b)=6$, while the thresholds $f_1(r)$ and $f_1(b)$ will be specified when using the gadget.
Let the preferred opinion be \emph{red}, and the other opinion be \emph{blue}. For the synchronizing gadget, we assume that the $b$-port never acquires red opinion before $r$-port, and the $r$-port never acquires the blue opinion before the $b$-port; this is secured by the rest of the construction. See \Cref{fig:syncGadgetExample} for an illustration of this gadget, together with running examples showing that the gadget indeed possesses the described behavior.
 
 \begin{figure}[tb!]
 	\centering
 	\definecolor{blue}{HTML}{C0DDFF}
 	\definecolor{red}{HTML}{EE4440}
 	\begin{subfigure}[t]{0.32\textwidth}
 		\centering
 		\begin{tikzpicture}
	 		\draw[rounded corners=0.25ex] (0.95,1.6) rectangle (0.8,1.3);
	 		\draw[rounded corners=0.25ex] (1.05,1.6) rectangle (1.2,1.3);
	 		\node[draw,circle] (r) at (1,2) {$r$};

	 		\draw[rounded corners=0.25ex,fill=blue] (4.3,1.6) rectangle (4.45,1.3);
	 		\draw[rounded corners=0.25ex] (4.55,1.6) rectangle (4.7,1.3);
	 		\node[draw,circle] (b) at (4.5,2) {$b$};
	 		
	 		\node[draw,circle,inner sep=0.1em] (v1) at (2,0.7) {\footnotesize$v_1$};
	 		\node[draw,circle,inner sep=0.1em] (v2) at (2,1.5) {\footnotesize$v_2$};
	 		
	 		\draw[rounded corners=0.25ex, fill=white] (1.95,3.4) rectangle (1.8,3.7);
	 		\draw[rounded corners=0.25ex, fill=white] (2.05,3.4) rectangle (2.2,3.7);
	 		\node[draw,circle,inner sep=0.1em] (v3) at (2,3) {\footnotesize$v_3$};

	 		\node[draw,circle,inner sep=0.1em] (v4) at (3.5,0.7) {\footnotesize$v_4$};
	 		\node[draw,circle,inner sep=0.1em] (v5) at (3.5,2.1) {\footnotesize$v_5$};
	 		
	 		\draw[rounded corners=0.25ex, fill=white] (3.45,3.4) rectangle (3.3,3.7);
	 		\draw[rounded corners=0.25ex, fill=white] (3.55,3.4) rectangle (3.7,3.7);
	 		\node[draw,circle,inner sep=0.1em] (v6) at (3.5,3) {\footnotesize$v_6$};
	 		
	 		\begin{scope}[on background layer] 
	 		    \fill[rounded corners=0.25ex, gray!20] (1.7,3.8) rectangle (2.3,0.4);
	 		    \fill[rounded corners=0.25ex, gray!20] (3.2,3.8) rectangle (3.8,0.4);
            \end{scope}
	 		
	 		\draw (r) edge (v1) edge (v2) edge (v3);
	 		\draw (b) edge (v1) edge (v2) edge (v3) edge (v4) edge (v5) edge (v6);
	 		\draw (v1) edge (v4) edge (v5) edge (v6);
	 		\draw (v2) edge (v4) edge (v5) edge (v6);
	 		\draw (v3) edge (v4) edge (v5) edge (v6);
 		\end{tikzpicture}
 		\caption{First, assume the case when the $b$-port gets its first opinion at least one round before the $r$-port obtains its first opinion. By our assumptions, the $b$-port gets necessarily the blue opinion.}
 		\label{fig:runnin_example_initial_setting}
 	\end{subfigure}
 	\hfill
 	\begin{subfigure}[t]{0.32\textwidth}
 		\centering
			\begin{tikzpicture}
			\draw[rounded corners=0.25ex,fill=red] (0.95,1.6) rectangle (0.8,1.3);
			\draw[rounded corners=0.25ex] (1.05,1.6) rectangle (1.2,1.3);
			\node[draw,circle] (r) at (1,2) {$r$};

			\draw[rounded corners=0.25ex,fill=blue] (4.3,1.6) rectangle (4.45,1.3);
			\draw[rounded corners=0.25ex] (4.55,1.6) rectangle (4.7,1.3);
			\node[draw,circle] (b) at (4.5,2) {$b$};
			
			\node[draw,circle,inner sep=0.1em] (v1) at (2,0.7) {\footnotesize$v_1$};
	 		\node[draw,circle,inner sep=0.1em] (v2) at (2,1.5) {\footnotesize$v_2$};
	 		
	 		\draw[rounded corners=0.25ex, fill=blue] (1.95,3.4) rectangle (1.8,3.7);
	 		\draw[rounded corners=0.25ex, fill=white] (2.05,3.4) rectangle (2.2,3.7);
	 		\node[draw,circle,inner sep=0.1em] (v3) at (2,3) {\footnotesize$v_3$};

	 		\node[draw,circle,inner sep=0.1em] (v4) at (3.5,0.7) {\footnotesize$v_4$};
	 		\node[draw,circle,inner sep=0.1em] (v5) at (3.5,2.1) {\footnotesize$v_5$};
	 		
	 		\draw[rounded corners=0.25ex, fill=blue] (3.45,3.4) rectangle (3.3,3.7);
	 		\draw[rounded corners=0.25ex, fill=white] (3.55,3.4) rectangle (3.7,3.7);
	 		\node[draw,circle,inner sep=0.1em] (v6) at (3.5,3) {\footnotesize$v_6$};
	 		
	 		\begin{scope}[on background layer] 
	 		    \fill[rounded corners=0.25ex, gray!20] (1.7,3.8) rectangle (2.3,0.4);
	 		    \fill[rounded corners=0.25ex, gray!20] (3.2,3.8) rectangle (3.8,0.4);
            \end{scope}
	 		
	 		\draw (r) edge (v1) edge (v2) edge (v3);
	 		\draw (b) edge (v1) edge (v2) edge (v3) edge (v4) edge (v5) edge (v6);
	 		\draw (v1) edge (v4) edge (v5) edge (v6);
	 		\draw (v2) edge (v4) edge (v5) edge (v6);
	 		\draw (v3) edge (v4) edge (v5) edge (v6);
		\end{tikzpicture}
 		\caption{In the next round, the agents $v_i$, $i \in [6]$ obtain the blue opinion, as $f_1(v_i) = 1$ and all of them neighbor with the $b$-port. Also the $r$-port might obtain the red opinion.}
 		\label{fig:runnin_example_round2}
 	\end{subfigure}
 	\hfill
 	\begin{subfigure}[t]{0.32\textwidth}
 	\centering
	 	\begin{tikzpicture}
	 		\draw[rounded corners=0.25ex,fill=red] (0.95,1.6) rectangle (0.8,1.3);
	 		\draw[rounded corners=0.25ex,fill=blue] (1.05,1.6) rectangle (1.2,1.3);
	 		\node[draw,circle] (r) at (1,2) {$r$};

	 		\draw[rounded corners=0.25ex,fill=blue] (4.3,1.6) rectangle (4.45,1.3);
	 		\draw[rounded corners=0.25ex,fill=red] (4.55,1.6) rectangle (4.7,1.3);
	 		\node[draw,circle] (b) at (4.5,2) {$b$};
	 		
	 		\node[draw,circle,inner sep=0.1em] (v1) at (2,0.7) {\footnotesize$v_1$};
		 	\node[draw,circle,inner sep=0.1em] (v2) at (2,1.5) {\footnotesize$v_2$};
		 	
		 	\draw[rounded corners=0.25ex, fill=blue] (1.95,3.4) rectangle (1.8,3.7);
		 	\draw[rounded corners=0.25ex, fill=white] (2.05,3.4) rectangle (2.2,3.7);
		 	\node[draw,circle,inner sep=0.1em] (v3) at (2,3) {\footnotesize$v_3$};

		 	\node[draw,circle,inner sep=0.1em] (v4) at (3.5,0.7) {\footnotesize$v_4$};
		 	\node[draw,circle,inner sep=0.1em] (v5) at (3.5,2.1) {\footnotesize$v_5$};
		 	
	 		\draw[rounded corners=0.25ex, fill=blue] (3.45,3.4) rectangle (3.3,3.7);
	 		\draw[rounded corners=0.25ex, fill=white] (3.55,3.4) rectangle (3.7,3.7);
	 		\node[draw,circle,inner sep=0.1em] (v6) at (3.5,3) {\footnotesize$v_6$};
	 		
	 		\begin{scope}[on background layer] 
	 		    \fill[rounded corners=0.25ex, gray!20] (1.7,3.8) rectangle (2.3,0.4);
	 		    \fill[rounded corners=0.25ex, gray!20] (3.2,3.8) rectangle (3.8,0.4);
            \end{scope}
	 		
	 		\draw (r) edge (v1) edge (v2) edge (v3);
	 		\draw (b) edge (v1) edge (v2) edge (v3) edge (v4) edge (v5) edge (v6);
	 		\draw (v1) edge (v4) edge (v5) edge (v6);
	 		\draw (v2) edge (v4) edge (v5) edge (v6);
	 		\draw (v3) edge (v4) edge (v5) edge (v6);		
 	\end{tikzpicture}
 	\caption{Finally, in some future rounds, the $b$-port and $r$-port might eventually get their second opinions. It is easy to see that no $v_i$, $i\in[6]$, can obtain the red opinion since $f_2(v_i) = 3$.}
 	\label{fig:runnin_example_round2}
 	\end{subfigure}

 	\begin{subfigure}[t]{0.32\textwidth}
 		\centering
 		\begin{tikzpicture}
 			\draw[rounded corners=0.25ex,fill=red] (0.95,1.6) rectangle (0.8,1.3);
 			\draw[rounded corners=0.25ex] (1.05,1.6) rectangle (1.2,1.3);
 			\node[draw,circle] (r) at (1,2) {$r$};

 			\draw[rounded corners=0.25ex] (4.3,1.6) rectangle (4.45,1.3);
 			\draw[rounded corners=0.25ex] (4.55,1.6) rectangle (4.7,1.3);
 			\node[draw,circle] (b) at (4.5,2) {$b$};
 			
 			\node[draw,circle,inner sep=0.1em] (v1) at (2,0.7) {\footnotesize$v_1$};
	 		\node[draw,circle,inner sep=0.1em] (v2) at (2,1.5) {\footnotesize$v_2$};
	 		
	 		\draw[rounded corners=0.25ex, fill=white] (1.95,3.4) rectangle (1.8,3.7);
	 		\draw[rounded corners=0.25ex, fill=white] (2.05,3.4) rectangle (2.2,3.7);
	 		\node[draw,circle,inner sep=0.1em] (v3) at (2,3) {\footnotesize$v_3$};

	 		\node[draw,circle,inner sep=0.1em] (v4) at (3.5,0.7) {\footnotesize$v_4$};
	 		\node[draw,circle,inner sep=0.1em] (v5) at (3.5,2.1) {\footnotesize$v_5$};
	 		
	 		\draw[rounded corners=0.25ex, fill=white] (3.45,3.4) rectangle (3.3,3.7);
	 		\draw[rounded corners=0.25ex, fill=white] (3.55,3.4) rectangle (3.7,3.7);
	 		\node[draw,circle,inner sep=0.1em] (v6) at (3.5,3) {\footnotesize$v_6$};
	 		
	 		\begin{scope}[on background layer] 
	 		    \fill[rounded corners=0.25ex, gray!20] (1.7,3.8) rectangle (2.3,0.4);
	 		    \fill[rounded corners=0.25ex, gray!20] (3.2,3.8) rectangle (3.8,0.4);
            \end{scope}
	 		
	 		\draw (r) edge (v1) edge (v2) edge (v3);
	 		\draw (b) edge (v1) edge (v2) edge (v3) edge (v4) edge (v5) edge (v6);
	 		\draw (v1) edge (v4) edge (v5) edge (v6);
	 		\draw (v2) edge (v4) edge (v5) edge (v6);
	 		\draw (v3) edge (v4) edge (v5) edge (v6);
 		\end{tikzpicture}
 		\caption{Next, assume the case when the $r$-port obtains the red opinion at least one round before the $b$-port obtain the blue opinion. Recall that the opposite case is forbidden by our assumption.}
 		\label{fig:runnin_example_initial_setting}
 	\end{subfigure}
 	\hfill
 	\begin{subfigure}[t]{0.32\textwidth}
 		\centering
 		\begin{tikzpicture}
 			\draw[rounded corners=0.25ex,fill=red] (0.95,1.6) rectangle (0.8,1.3);
 			\draw[rounded corners=0.25ex] (1.05,1.6) rectangle (1.2,1.3);
 			\node[draw,circle] (r) at (1,2) {$r$};

 			\draw[rounded corners=0.25ex,fill=blue] (4.3,1.6) rectangle (4.45,1.3);
 			\draw[rounded corners=0.25ex] (4.55,1.6) rectangle (4.7,1.3);
 			\node[draw,circle] (b) at (4.5,2) {$b$};
 			
 			\node[draw,circle,inner sep=0.1em] (v1) at (2,0.7) {\footnotesize$v_1$};
	 		\node[draw,circle,inner sep=0.1em] (v2) at (2,1.5) {\footnotesize$v_2$};
	 		
	 		\draw[rounded corners=0.25ex, fill=red] (1.95,3.4) rectangle (1.8,3.7);
	 		\draw[rounded corners=0.25ex, fill=white] (2.05,3.4) rectangle (2.2,3.7);
	 		\node[draw,circle,inner sep=0.1em] (v3) at (2,3) {\footnotesize$v_3$};

	 		\node[draw,circle,inner sep=0.1em] (v4) at (3.5,0.7) {\footnotesize$v_4$};
	 		\node[draw,circle,inner sep=0.1em] (v5) at (3.5,2.1) {\footnotesize$v_5$};
	 		
	 		\draw[rounded corners=0.25ex, fill=white] (3.45,3.4) rectangle (3.3,3.7);
	 		\draw[rounded corners=0.25ex, fill=white] (3.55,3.4) rectangle (3.7,3.7);
	 		\node[draw,circle,inner sep=0.1em] (v6) at (3.5,3) {\footnotesize$v_6$};
	 		
	 		\begin{scope}[on background layer] 
	 		    \fill[rounded corners=0.25ex, gray!20] (1.7,3.8) rectangle (2.3,0.4);
	 		    \fill[rounded corners=0.25ex, gray!20] (3.2,3.8) rectangle (3.8,0.4);
            \end{scope}
	 		
	 		\draw (r) edge (v1) edge (v2) edge (v3);
	 		\draw (b) edge (v1) edge (v2) edge (v3) edge (v4) edge (v5) edge (v6);
	 		\draw (v1) edge (v4) edge (v5) edge (v6);
	 		\draw (v2) edge (v4) edge (v5) edge (v6);
	 		\draw (v3) edge (v4) edge (v5) edge (v6);
 		\end{tikzpicture}
 		\caption{In the next round, the agents $v_i$, ${i \in [3]}$, obtain the red opinion since $f_1(v_i) = 1$ and all three of them neighbor with the $r$-port. Also, the $b$-port might obtain the blue opinion.}
 		\label{fig:runnin_example_round2}
 	\end{subfigure}
 	\hfill
 	\begin{subfigure}[t]{0.32\textwidth}
 		\centering
 		\begin{tikzpicture}
 			\draw[rounded corners=0.25ex,fill=red] (0.95,1.6) rectangle (0.8,1.3);
 			\draw[rounded corners=0.25ex] (1.05,1.6) rectangle (1.2,1.3);
 			\node[draw,circle] (r) at (1,2) {$r$};

 			\draw[rounded corners=0.25ex,fill=blue] (4.3,1.6) rectangle (4.45,1.3);
 			\draw[rounded corners=0.25ex,] (4.55,1.6) rectangle (4.7,1.3);
 			\node[draw,circle] (b) at (4.5,2) {$b$};
 			
 			\node[draw,circle,inner sep=0.1em] (v1) at (2,0.7) {\footnotesize$v_1$};
	 		\node[draw,circle,inner sep=0.1em] (v2) at (2,1.5) {\footnotesize$v_2$};
	 		
	 		\draw[rounded corners=0.25ex, fill=red] (1.95,3.4) rectangle (1.8,3.7);
	 		\draw[rounded corners=0.25ex, fill=white] (2.05,3.4) rectangle (2.2,3.7);
	 		\node[draw,circle,inner sep=0.1em] (v3) at (2,3) {\footnotesize$v_3$};

	 		\node[draw,circle,inner sep=0.1em] (v4) at (3.5,0.7) {\footnotesize$v_4$};
	 		\node[draw,circle,inner sep=0.1em] (v5) at (3.5,2.1) {\footnotesize$v_5$};
	 		
	 		\draw[rounded corners=0.25ex, fill=red] (3.45,3.4) rectangle (3.3,3.7);
	 		\draw[rounded corners=0.25ex, fill=white] (3.55,3.4) rectangle (3.7,3.7);
	 		\node[draw,circle,inner sep=0.1em] (v6) at (3.5,3) {\footnotesize$v_6$};
	 		
	 		\begin{scope}[on background layer] 
	 		    \fill[rounded corners=0.25ex, gray!20] (1.7,3.8) rectangle (2.3,0.4);
	 		    \fill[rounded corners=0.25ex, gray!20] (3.2,3.8) rectangle (3.8,0.4);
            \end{scope}
	 		
	 		\draw (r) edge (v1) edge (v2) edge (v3);
	 		\draw (b) edge (v1) edge (v2) edge (v3) edge (v4) edge (v5) edge (v6);
	 		\draw (v1) edge (v4) edge (v5) edge (v6);
	 		\draw (v2) edge (v4) edge (v5) edge (v6);
	 		\draw (v3) edge (v4) edge (v5) edge (v6);
 		\end{tikzpicture}
 		\caption{Since the red opinion is preferred by all agents, the remaining agents $v_4$, $v_5$, and $v_6$ obtain it. It is easy to see that no $v_i$, $i\in[6]$, can obtain the blue opinion since $f_2(v_i) = 3$.}
 		\label{fig:runnin_example_round2}
 	\end{subfigure}

 	\begin{subfigure}[t]{0.32\textwidth}
 	\centering
 	\begin{tikzpicture}
 		\draw[rounded corners=0.25ex,fill=red] (0.95,1.6) rectangle (0.8,1.3);
 		\draw[rounded corners=0.25ex] (1.05,1.6) rectangle (1.2,1.3);
 		\node[draw,circle] (r) at (1,2) {$r$};

 		\draw[rounded corners=0.25ex,fill=blue] (4.3,1.6) rectangle (4.45,1.3);
 		\draw[rounded corners=0.25ex] (4.55,1.6) rectangle (4.7,1.3);
 		\node[draw,circle] (b) at (4.5,2) {$b$};
 		
 		\node[draw,circle,inner sep=0.1em] (v1) at (2,0.7) {\footnotesize$v_1$};
	 		\node[draw,circle,inner sep=0.1em] (v2) at (2,1.5) {\footnotesize$v_2$};
	 		
	 		\draw[rounded corners=0.25ex, fill=white] (1.95,3.4) rectangle (1.8,3.7);
	 		\draw[rounded corners=0.25ex, fill=white] (2.05,3.4) rectangle (2.2,3.7);
	 		\node[draw,circle,inner sep=0.1em] (v3) at (2,3) {\footnotesize$v_3$};

	 		\node[draw,circle,inner sep=0.1em] (v4) at (3.5,0.7) {\footnotesize$v_4$};
	 		\node[draw,circle,inner sep=0.1em] (v5) at (3.5,2.1) {\footnotesize$v_5$};
	 		
	 		\draw[rounded corners=0.25ex, fill=white] (3.45,3.4) rectangle (3.3,3.7);
	 		\draw[rounded corners=0.25ex, fill=white] (3.55,3.4) rectangle (3.7,3.7);
	 		\node[draw,circle,inner sep=0.1em] (v6) at (3.5,3) {\footnotesize$v_6$};
	 		
	 		\begin{scope}[on background layer] 
	 		    \fill[rounded corners=0.25ex, gray!20] (1.7,3.8) rectangle (2.3,0.4);
	 		    \fill[rounded corners=0.25ex, gray!20] (3.2,3.8) rectangle (3.8,0.4);
            \end{scope}
	 		
	 		\draw (r) edge (v1) edge (v2) edge (v3);
	 		\draw (b) edge (v1) edge (v2) edge (v3) edge (v4) edge (v5) edge (v6);
	 		\draw (v1) edge (v4) edge (v5) edge (v6);
	 		\draw (v2) edge (v4) edge (v5) edge (v6);
	 		\draw (v3) edge (v4) edge (v5) edge (v6);
 	\end{tikzpicture}
 	\caption{Finally, we examine the positive case where, in the same round, the $r$-port obtains the red opinion and the $b$-port obtains the blue opinion. This is the only case which leads to a balanced outcome.}
 	\label{fig:runnin_example_initial_setting}
 	\end{subfigure}
 	\hfill
 	\begin{subfigure}[t]{0.32\textwidth}
 	\centering
 	\begin{tikzpicture}
 		\draw[rounded corners=0.25ex,fill=red] (0.95,1.6) rectangle (0.8,1.3);
 		\draw[rounded corners=0.25ex] (1.05,1.6) rectangle (1.2,1.3);
 		\node[draw,circle] (r) at (1,2) {$r$};

 		\draw[rounded corners=0.25ex,fill=blue] (4.3,1.6) rectangle (4.45,1.3);
 		\draw[rounded corners=0.25ex] (4.55,1.6) rectangle (4.7,1.3);
 		\node[draw,circle] (b) at (4.5,2) {$b$};
 		
 		\node[draw,circle,inner sep=0.1em] (v1) at (2,0.7) {\footnotesize$v_1$};
	 		\node[draw,circle,inner sep=0.1em] (v2) at (2,1.5) {\footnotesize$v_2$};
	 		
	 		\draw[rounded corners=0.25ex,fill=red] (1.95,3.4) rectangle (1.8,3.7);
	 		\draw[rounded corners=0.25ex, fill=white] (2.05,3.4) rectangle (2.2,3.7);
	 		\node[draw,circle,inner sep=0.1em] (v3) at (2,3) {\footnotesize$v_3$};

	 		\node[draw,circle,inner sep=0.1em] (v4) at (3.5,0.7) {\footnotesize$v_4$};
	 		\node[draw,circle,inner sep=0.1em] (v5) at (3.5,2.1) {\footnotesize$v_5$};
	 		
	 		\draw[rounded corners=0.25ex, fill=blue] (3.45,3.4) rectangle (3.3,3.7);
	 		\draw[rounded corners=0.25ex, fill=white] (3.55,3.4) rectangle (3.7,3.7);
	 		\node[draw,circle,inner sep=0.1em] (v6) at (3.5,3) {\footnotesize$v_6$};
	 		
	 		\begin{scope}[on background layer] 
	 		    \fill[rounded corners=0.25ex, gray!20] (1.7,3.8) rectangle (2.3,0.4);
	 		    \fill[rounded corners=0.25ex, gray!20] (3.2,3.8) rectangle (3.8,0.4);
            \end{scope}
	 		
	 		\draw (r) edge (v1) edge (v2) edge (v3);
	 		\draw (b) edge (v1) edge (v2) edge (v3) edge (v4) edge (v5) edge (v6);
	 		\draw (v1) edge (v4) edge (v5) edge (v6);
	 		\draw (v2) edge (v4) edge (v5) edge (v6);
	 		\draw (v3) edge (v4) edge (v5) edge (v6);
 	\end{tikzpicture}
 	\caption{Even though the agents $v_i$, $i \in [3]$ neighbor with both ports, they obtain the red opinion since it is preferred. By contrast, the agents $v_4$, $v_5$, and $v_6$ obtain the blue opinion, as they neighbor only with $b$.}
 	\label{fig:runnin_example_round2}
 	\end{subfigure}
 	\hfill
 	\begin{subfigure}[t]{0.32\textwidth}
 	\centering
 	\begin{tikzpicture}
 		\draw[rounded corners=0.25ex,fill=red] (0.95,1.6) rectangle (0.8,1.3);
 		\draw[rounded corners=0.25ex] (1.05,1.6) rectangle (1.2,1.3);
 		\node[draw,circle] (r) at (1,2) {$r$};

 		\draw[rounded corners=0.25ex,fill=blue] (4.3,1.6) rectangle (4.45,1.3);
 		\draw[rounded corners=0.25ex,] (4.55,1.6) rectangle (4.7,1.3);
 		\node[draw,circle] (b) at (4.5,2) {$b$};
 		
 		\node[draw,circle,inner sep=0.1em] (v1) at (2,0.7) {\footnotesize$v_1$};
	 		\node[draw,circle,inner sep=0.1em] (v2) at (2,1.5) {\footnotesize$v_2$};
	 		
	 		\draw[rounded corners=0.25ex, fill=red] (1.95,3.4) rectangle (1.8,3.7);
	 		\draw[rounded corners=0.25ex, fill=blue] (2.05,3.4) rectangle (2.2,3.7);
	 		\node[draw,circle,inner sep=0.1em] (v3) at (2,3) {\footnotesize$v_3$};

	 		\node[draw,circle,inner sep=0.1em] (v4) at (3.5,0.7) {\footnotesize$v_4$};
	 		\node[draw,circle,inner sep=0.1em] (v5) at (3.5,2.1) {\footnotesize$v_5$};
	 		
	 		\draw[rounded corners=0.25ex, fill=blue] (3.45,3.4) rectangle (3.3,3.7);
	 		\draw[rounded corners=0.25ex, fill=red] (3.55,3.4) rectangle (3.7,3.7);
	 		\node[draw,circle,inner sep=0.1em] (v6) at (3.5,3) {\footnotesize$v_6$};
	 		
	 		\begin{scope}[on background layer] 
	 		    \fill[rounded corners=0.25ex, gray!20] (1.7,3.8) rectangle (2.3,0.4);
	 		    \fill[rounded corners=0.25ex, gray!20] (3.2,3.8) rectangle (3.8,0.4);
            \end{scope}
	 		
	 		\draw (r) edge (v1) edge (v2) edge (v3);
	 		\draw (b) edge (v1) edge (v2) edge (v3) edge (v4) edge (v5) edge (v6);
	 		\draw (v1) edge (v4) edge (v5) edge (v6);
	 		\draw (v2) edge (v4) edge (v5) edge (v6);
	 		\draw (v3) edge (v4) edge (v5) edge (v6);
 	\end{tikzpicture}
 	\caption{Next, the agents $v_i$, $i\in [6]$ obtain the second opinion, as they neighbor three vertices having it and $f_2(v_i) = 3$. In the next round the ports will obtain their second opinion as $f_2(r)=3$ and $f_2(b)=6$. %
 	}
 	\label{fig:runnin_example_round2}
 	\end{subfigure}
 	
 	\caption{An illustration of the synchronization gadget used in our hardness constructions in the settings with instance-wide preferred red opinion.%
 	}
 	\label{fig:syncGadgetExample}
 \end{figure}
 
\paragraph{Selection Gadget} Now we explain how to modify the selection gadget used in \Cref{sec:hardness_const_tresholds} and \Cref{sec:hardness_const_duration} (cf. \Cref{fig:coloredVertexSelectionGadget}). The selection gadget for the selection of a single element in a set~$W$ again consists of $|W|$ selection vertices which are in one-to-one correspondence with elements of $W$. In contrast to the original gadget, we replace the guard paths with synchronizing gadgets such that all selection vertices are connected to the $r$-ports of these synchronizing gadgets. Finally, we attach to the $b$-port of each synchronizing gadget its unique vertex $v$, and we add $v$ to $S_a$. See \Cref{fig:selectionGadgetPrefOpinion} for an illustration of the selection gadget which also contains details of the thresholds for each vertex of the gadget.
The edge selection gadget is obtained by putting the unique vertices into $S_b$ and reversing the synchronizing gadgets, i.e., the unique vertex will be connected to the $b$-port.
 
\begin{figure}[tb!]
	\centering
		\begin{tikzpicture}[node distance=.5cm]
			\tikzstyle{vertex}=[draw,thick,circle,minimum width=2pt,fill=white]
			\tikzstyle{vertexA}=[vertex,fill=blue]
			\tikzstyle{vertexB}=[vertex,fill=red]
			\tikzstyle{edge}=[thick]
			
			\newcommand{\n}{12}
			\newcommand{\firstGadgetAt}{2}
			\newcommand{\secondGadgetAt}{7}
			\newcommand{\topNodeDistance}{1cm}
			
			\node[vertex,label={[xshift=-3pt,yshift=5pt]180:\footnotesize$f_1=3$},label={[xshift=-3pt,yshift=-5pt]180:\footnotesize$f_2=3$}] (v1) {};
			\foreach \x[remember=\x as \xx (initially 1)] in {2,3,...,\n} {
				\node[vertex,right of=v\xx] (v\x) {};
			}
			\node[right of=v\n,xshift=0.5em] (I) {$I$};
			
			\begin{scope}[on background layer]
				\node[draw,fill=gray!30,dashed,rounded corners,fit=(v1)(v\n)] {};
			\end{scope}
			
			\begin{scope}[node distance=\topNodeDistance]
				\node[vertex,label={[yshift=.4cm]0:\footnotesize$f_1 = 1$},label={0:\footnotesize$f_2 = 6$}] (vS) at ($(v\secondGadgetAt) + (1.2,1.8)$) {};
				
				\node[vertex,draw,rectangle,fill=gray!30, above=of vS,inner sep=0.5cm,node distance=1cm] (sync) {};
				
				\node[vertex,above=of sync,label={[yshift=0.4cm]90:\footnotesize$f_1 = 1$},label={90:\footnotesize$f_2 = 3$}] (r) {};
				\node[vertexB,right of=r,label={\footnotesize$f_2 = 1$}] (vSselected) {};
				
				\draw[edge] (r) -- (sync.north);
				\draw[edge] (vS) -- (sync.south west);
				\draw[edge] (vS) -- (sync.south east);
				\draw[edge] (r) -- (vSselected);
				
				\foreach \x in {1,2,...,\n} {
					\draw[edge] (vS) to (v\x);
					\draw[edge] (v\x) to +(255:.6);
					\draw[edge] (v\x) to +(285:.6);
				}
				
				\node[vertex,label={[yshift=.4cm]180:\footnotesize$f_1 = 1$},label={180:\footnotesize$f_2 = 6$}] (vS) at ($(v\secondGadgetAt) + (-1.2,1.8)$) {};
				
				\node[vertex,draw,rectangle,fill=gray!30, above=of vS,inner sep=0.5cm,node distance=1cm] (sync) {};
				
				\node[vertex,above=of sync,label={[yshift=0.4cm]90:\footnotesize$f_1 = 1$},label={90:\footnotesize$f_2 = 3$}] (r) {};
				\node[vertexB,left of=r,label={\footnotesize$f_2 = 1$}] (vSselected) {};
				
				\draw[edge] (r) -- (sync.north);
				\draw[edge] (vS) -- (sync.south west);
				\draw[edge] (vS) -- (sync.south east);
				\draw[edge] (r) -- (vSselected);
				
				\foreach \x in {1,2,...,\n} {
					\draw[edge] (vS) to (v\x);
					\draw[edge] (v\x) to +(255:.6);
					\draw[edge] (v\x) to +(285:.6);
				}
			\end{scope}
		\end{tikzpicture}
		\caption{Modified (vertex) selection gadget.The inner details of synchronizing gadgets are schematically hidden in the gray rectangle. Threshold functions of all vertices are depicted near to them and the red vertices are initially part of the seed-set~$S_a$.}
		\label{fig:selectionGadgetPrefOpinion}
	\end{figure}

	\begin{figure}
	\centering
	\begin{subfigure}[t]{0.44\textwidth}
	 \begin{tikzpicture}[node distance=.5cm]
  \tikzstyle{vertex}=[draw,thick,circle,minimum width=2pt,fill=white]
  \tikzstyle{edge}=[thick]
  \tikzstyle{sentryStyle}=[vertex,draw=purple]
  \tikzstyle{connectorStyle}=[vertex,draw=yellow!80]
  \tikzstyle{longPath}=[edge,decorate,decoration={snake}]

  \newcommand{\nVertex}{7}
  \newcommand{\nEdge}{12}
  \newcommand{\sentryPosition}{5}

  \begin{scope}[yshift=8cm]
    \node[vertex,label={[xshift=-3pt,yshift=5pt]180:$f_1=3$},label={[xshift=-3pt,yshift=-5pt]180:$f_2=3$}] (v1) {};
    \foreach \x[remember=\x as \xx (initially 1)] in {2,3,...,\nVertex} {
      \node[vertex,right of=v\xx] (v\x) {};
    }
    \begin{scope}[on background layer]
      \node[draw,fill=gray!30,dashed,rounded corners,fit=(v1)(v\nVertex)] {};
    \end{scope}

    \draw[decorate,decoration={brace,amplitude=6pt}] ($(v\nVertex.south) + (.5,0)$) to node [midway,xshift=12pt] {\footnotesize $P_n$}  ($(v\nVertex.south) + (.5,0) - (0,1.8)$);
  \end{scope}

  \begin{scope}
    \node[vertex,label={[xshift=-3pt,yshift=5pt]180:$f_1=3$},label={[xshift=-3pt,yshift=-5pt]180:$f_2=3$}] (e1) {};
    \foreach \x[remember=\x as \xx (initially 1)] in {2,3,...,\nEdge} {
      \node[vertex,right of=e\xx] (e\x) {};
    }
    \begin{scope}[on background layer]
      \node[draw,fill=gray!30,dashed,rounded corners,fit=(e1)(e\nEdge)] {};
    \end{scope}
  \end{scope}

  \node[vertex,label={0:$f_1=1 \,\, f_2=6$}] (connectorVertex) at ($(v\sentryPosition)!.45!(e\sentryPosition)$) {};
  \node[vertex,draw,rectangle,fill=gray!30, inner sep=0.5cm, node distance=1.2cm, below of=connectorVertex] (connectorMid) {};
  \node[vertex,label={0:$f_1=1 \,\, f_2=3$}, node distance=1.2cm,below of=connectorMid] (connectorEdge) {};
  
  \draw[edge] (connectorEdge) -- (connectorMid.south);
  \draw[edge] (connectorVertex) -- (connectorMid.north west);
  \draw[edge] (connectorVertex) -- (connectorMid.north east);  
  
  \node at ($(connectorVertex) + (0,2)$) {$\cdots$};

  \path (connectorVertex) to node[vertex,midway,xshift=-8pt] (v1first) {} (v1);
  \path (connectorVertex) to node[vertex,midway] (v2first) {} node[vertex,pos=.25] {} (v2);
  \path (connectorVertex) to node[vertex,midway] (vNfirst) {} node[vertex,pos=.125] {} node[vertex,pos=.25] {} node[vertex,pos=.375] {} (v\nVertex);

  \path (connectorEdge) to node[vertex,pos=.33] {} node[vertex,pos=.66] (v2EdgeLast1) {} (e2);
  \path (connectorEdge) to node[vertex,pos=.33] {} node[vertex,pos=.66] (v2EdgeLast2) {} (e7);
  \path (connectorEdge) to node[vertex,pos=.33] {} node[vertex,pos=.66] (v2EdgeLast3) {} (e\nEdge);

  \draw[longPath] (v1) to (v1first);
  \draw[longPath] (v2) to (v2first);
  \draw[longPath] (v\nVertex) to (vNfirst);

  \foreach \i/\j in {1/2,2/7,3/\nEdge} {
    \draw[longPath] (v2EdgeLast\i) to (e\j);
    \begin{scope}[on background layer]
      \draw[edge] (v2EdgeLast\i) to (connectorEdge);
    \end{scope}
  }

  \begin{scope}[on background layer]
    \draw[edge] (v1first) to (connectorVertex);
    \draw[edge] (v2first) to (connectorVertex);
    \draw[edge] (vNfirst) to (connectorVertex);
  \end{scope}
\end{tikzpicture}
\caption{Modif.\ incidence gadget for \Cref{sec:hardness_const_tresholds}.}
\label{fig:incidenceGadgetPrefOpinion1}
	\end{subfigure}
\hfill
	\begin{subfigure}[t]{0.44\textwidth}
		\centering
		\begin{tikzpicture}[node distance=.5cm]
			\definecolor{blue}{HTML}{C0DDFF}
			\definecolor{red}{HTML}{EE4440}
			\tikzstyle{vertex}=[draw,thick,circle,minimum width=2pt,fill=white]
			\tikzstyle{vertexA}=[vertex,fill=red]
			\tikzstyle{vertexB}=[vertex,fill=blue]
			\tikzstyle{edge}=[thick]
			\tikzstyle{sentryStyle}=[vertex,draw=purple]
			\tikzstyle{connectorStyle}=[vertex,draw=yellow!80]
			\tikzstyle{longPath}=[edge,decorate,decoration={snake}]
			
			\newcommand{\nVertex}{3}
			\newcommand{\nVertexHalf}{2}
			\newcommand{\nEdge}{12}
			
			\node[vertex,label={[yshift=10pt]90:\footnotesize$f_1=|V_w|$},label={[xshift=-1pt]90:\footnotesize$f_2=6$}, above=2cm] (c1) {};
			
			\foreach \y in {2.25,2,1.75} {
				\draw[edge] (c1) -- (1,\y);
				\draw[edge] (c1) -- (-1,\y);
			}
			
			\node[vertex,rectangle,below=of c1,inner sep=0.5cm,fill=gray!30] (sync1) {};
			\node[vertex,below=of sync1,label={[yshift=5pt]180:\footnotesize$f_1=1$},label={[yshift=-4pt]180:\footnotesize$f_2=3$}] (r1) {};
			
			\draw[edge] (r1) -- (sync1.south);
			\draw[edge] (c1) edge (sync1.north west) edge (sync1.north east);

			\node[vertex,below=of sync1,label={[yshift=5pt]180:\footnotesize$f_1=1$},label={[yshift=-4pt]180:\footnotesize$f_2=3$}] (r2) at ($(r1) - (0,2.25)$) {};

			\node[vertex,rectangle,below=of r2,inner sep=0.5cm,fill=gray!30] (sync2) {};
			
			\node[vertex,label={-90:\footnotesize$f_1=|V_w|$},label={[yshift=-10pt,xshift=-1pt]-90:\footnotesize$f_2=6$},below=of sync2] (c2) {};

			\draw[edge] (r2) -- (sync2.north);
			\draw[edge] (c2) edge (sync2.south west) edge (sync2.south east);

			\foreach \y in {-6,-5.75,-5.5} {
				\draw[edge] (c2) -- (1,\y);
				\draw[edge] (c2) -- (-1,\y);
			}
			
			\begin{scope}
				\node[vertex] at ($(r1) - (0.5,1.25)$) (a1) {};
				\foreach \x[remember=\x as \xx (initially 1)] in {2,3,...,\nVertex} {
					\node[vertex,right of=a\xx] (a\x) {};
				}
				\node[vertexA,fill=red,label={90:\footnotesize$f_2=\deg$}, label={270:$s_{ww'}$}, right of=a\nVertex, xshift=50,yshift=-10]  (A1selected) {};
				
				\foreach \x in {1,...,\nVertex} {
					\draw[edge, bend left] (A1selected) to (a\x);
					\draw[edge] (r1) to (a\x);
					\draw[edge] (r2) to (a\x);
				}
				
				\begin{scope}[on background layer]
					\node[draw,dashed,rounded corners,fit=(a1)(a\nVertex),label={0:$A_{ww'}$},label={[yshift=5pt]180:\footnotesize$f_1=1$},label={[yshift=-4pt]180:\footnotesize$f_2=2$}] (A1fitter) {};
				\end{scope}
			\end{scope}
		\end{tikzpicture}
		\caption{Modif.\ incidence gadget for \Cref{sec:hardness_const_duration}.}
		\label{fig:incidenceGadgetPrefOpinion2}
	\end{subfigure}
	\caption{Illustrations of the modified incidence gadgets. The inner details of synchronizing gadgets are schematically hidden in the gray rectangle. Threshold functions of all vertices are depicted near to them and the red vertices are initially part of the seed-set~$S_a$.
	}
	\label{fig:gadgetsPrefOpinion}
\end{figure}

\paragraph{Incidence Gadget for \Cref{sec:hardness_const_tresholds}}
To modify the incidence gadget used in \Cref{sec:hardness_const_tresholds} (cf. \Cref{fig:incidenceGadget}), we replace the 3 connector vertices and the sentry vertex by a synchronizing gadget. More precisely, edges that originally connected to the vertex connector will now connect to the $b$-port and edges that originally connected to the edge connector will now connect to the $r$-port of the synchronizing gadget. We set $f_1(r)=f_1(b)=1$ (see \Cref{fig:incidenceGadgetPrefOpinion1} for an illustration of the modified gadget).

\paragraph{Incidence Gadget for \Cref{sec:hardness_const_duration}}
To modify the incidence gadget used in \Cref{sec:hardness_const_duration} (cf. \Cref{fig:incidenceGadgetForConstRounds}), we replace each of the checking vertices $c^1_{ww'}$ and $c^2_{ww'}$ with one synchronizing gadget. In both cases the edges towards the selection gadgets are connected to the $b$-port, while the edges towards the set $A_{ww'}$ connect to the $r$-port of the synchronizing gadget. We set $f_1(r)=1$ and $f_1(b)=|V_w|$ for both gadgets (see \Cref{fig:incidenceGadgetPrefOpinion2} for an illustration of the modified gadget). We can also reduce the number of vertices in set $A_{ww'}$ to a constant number.

It is straightforward to verify the correctness of the modified reductions.
As the synchronizing gadget is of constant size, the replacements only affect the studied structural parameters linearly. 
Also the necessary number of rounds is only increased by a constant.
Therefore the hardness results still hold.
Note that the proof of \Cref{thm:twoTSSIsHardForTDandConstRoundsWithFOneLarger} does not use simultaneous gain of opinions and, thus, does not need any modifications.

\paragraph{Modifications of the Algorithms for Vertex Cover, $3$-Path Vertex Cover, and Vertex Integrity}

To modify the algorithm parameterized by the vertex cover to handle the preference on opinions, only two simple modifications are necessary. 
First, \Cref{rr:vertexInSNeverGetSecondOpinion} needs to be applied to all vertices. 
Second, the equivalence relation $\sim$ on $V \setminus U$ needs to be refined by the preferred opinion of the vertex.

Now we move on to the algorithms parameterized by $3$-path vertex cover number and vertex integrity. 
First of all, while guessing $r_a(u)$ and $r_b(u)$ for the vertices $u \in U$, we can only guess that $r_a(u)=r_b(u)$ if $r_a(u)$ is either $0$ or $\infty$.
The linking constraints for $r_a(u) < r_b(u)$ only need a modification if $a$ is the preferred opinion. 
In this case we do not use the constraint \eqref{eq:twotssNFold:globalUpperbound:case3} stating that $u$ does not acquire the opinion $b$ before \emph{or together} with~$a$.
Instead, we only ensure that $u$ does not acquire the opinion $b$ \emph{before}~$a$.
To this end, in case $2 \le r_a(u)$, we use the following constraint:
        \begin{equation}
			\sum_{v \in N(u)} x^{b,r_a(u) - 2}_v \le f_1(u)-1 \label{eq:twotssNFold:globalUpperbound:case6} \,.
		\end{equation}  
		
Similarly, the local constraints can be used without any modification for the preferred opinion, whereas for the less preferred opinion $c$ we replace the condition 
\[
 \left[ x^{a,t-1}_v + x^{b,t-1}_v = 0 \right] 
\]
in \Cref{eq:twotssNFold:boundingXT:1} and \Cref{eq:twotssNFold:boundingXT:4} by 
\[
 \left[ x^{\neg c,t}_v + x^{c,t-1}_v = 0 \right].
\]

It is again straightforward to verify that this new formulation correctly models the preference on the opinions.

\paragraph{Modifications of the Algorithm for Treewidth}

To modify the algorithm parameterized by the treewidth, number of rounds, and the maximum threshold we again, similarly as above, strengthen the conditions on validity of a pattern, requiring that if $r_a(v)=r_b(v)$, then $r_a(u)$ is either $0$ or $\infty$.
If the opinion $c$ is less preferred at vertex $v$, then we also require $h_c(v)=0$ if $r_c(v) = r_{\neg c}(v)+1$ and $\eta_c(v)=0$ even if $r_{\neg c}(v)=2$, but we remove the condition on equality of $\eta_c(v)= h_c(v)$ for the case $r_c(v)=r_{\neg c}(v)+1$.
We do that, because $\eta_c(v)$ will be used in timestep $r_{\neg c}(v)-2$ instead of the original $r_{\neg c}(v)-1$.
The modified activation process will now also respect the opinion preference for vertices in $\alpha(x)$, while it will be still determined by $r_a$ and $r_b$ for vertices of $\beta(x)$.

For the viability of the process, the condition 2(b) is not required if $c$ is less preferred and $r_c(v)=r_{\neg c}(v)+1$.
Also, if $c$ is less preferred, then the third condition of viability will now state that if $2 \le r_{\neg c}(v) < r_c(v) < \infty$, then $\Big| N_{G_x}(v) \cap \widehat{P}^{r_{\neg c}(v) - 2}_c \Big| \le \eta_c(v)$.
At the same time, for the preferred opinion $c$, it will stay untouched, that is, if $1 \le r_{\neg c}(v) < r_c(v) < \infty$, then $\Big| N_{G_x}(v) \cap \widehat{P}^{r_{\neg c}(v) - 1}_c \Big| \le \eta_c(v)$.

Now the semantics of the table is exactly as before, and the solution is again found in $\DP_r[\emptyset,\emptyset,\emptyset,\emptyset,\emptyset,\emptyset,\emptyset,\emptyset]$ for the root~$r$.

For the computation, no modifications are needed for the Leaf node and Introduce Vertex node. Similarly, the computation in Introduce Edge node is performed as before for the preferred opinion $c$. If $c$ is the less preferred opinion, then we have to modify (only) the third point of the computation as follows.
\begin{itemize}
	\item If $r_c(v) + 1 < r_c(u) \le \fin$, then we set $g'_c(u) = \max(0,g_c(u)-1)$.
	Also, if $r_c(u) \neq r_{\neg c} + 1$, then we let $h'_c(u) = h_c(u)-1$, otherwise we set $h'_c=0$ (note that $h'_c=0$ in this case).
	If also $r_c(v) + 2 \le r_{\neg c}(u) < r_c(u)$, then we set $\eta'_c(u)=\eta_c(u)-1$, otherwise we set $\eta'_c(u)=\eta_c(u)$.	
\end{itemize}
For the forget node, we again only explicitly describe the computation for the case ${r_a \le r_b}$. Note that the cases $r_a=r_b=1$ and $r_a=r_b > 1$ are not used. If $a$ is the less preferred opinion, then no changes are needed. If $a$ is the preferred opinion, we define $\bm{q}_{r_a,r_b} = (\overline{r}_a,\overline{r}_b,\overline{g}_a,\overline{g}_b,\overline{h}_a,\overline{h}_b,\overline{\eta}_a,\overline{\eta}_b)$ as 
\[
\mkern-10mu
\begin{array}{ll}
(r_a,r_b,f_1(v),f_2(v),0,0,0,0)                 & \text{if } r_a =1,  r_b=2,\\
(r_a,r_b,f_1(v),f_2(v),0,f_2(v)-1,0,0)          & \text{if } r_a=1,  r_b>2,\\
(r_a,r_b,f_1(v),f_2(v),f_1(v)-1,0,0,f_1(v)-1)   & \text{if } 1 < r_a = r_b-1,\\
\end{array}
\]
and as before for all the other (relevant) cases.
No modifications are needed for the Join node.

The changes do not affect the running time of the algorithm.
The correctness can be proved along the lines of the correctness of the original algorithm.

\subsection{Probabilistic Model}
In practice, while the number of neighbors possessing the opinion can be the main factor influencing whether an agent gains the opinion, the decision will likely depend on a lot of different factors, many of them perhaps unknown even to the agent herself.
The standard approach to account for the factors that the model cannot capture is to treat these factors as a noise, i.e., to introduce randomness into the diffusion process.
Indeed, originally both main models for information diffusion are stochastic.
This amounts to predefined probability of activation along each edge in the Independent cascades model and random threshold for each agent in the Linear threshold model.
Although both these approaches can also be adopted in our setting, we focus only on the latter one here (we slightly discuss the other one in the next subsection).
To make the thresholds stochastic, one samples them from a specified distribution.
As the relation between the two thresholds might be more important than their actual value, it seems natural that a joint distribution is given over pairs of integers rather than two independent distributions for each of the thresholds separately.
Since obtaining such a distribution for a real world network might require many observations, a natural approach would be to cluster the agents into few types and use the same distribution for all the agents of the same type. If the agents of the same type do not have the same degree, we can sample, e.g., a real number between $0$ and $1$, scale it to the degree and round it to an integer.
In fact, the distribution is often estimated by comparing the behavior of the real world network to the modeled one.
Similarly, the seed sets might not be known exactly, but rather as a probability of an agent of a given type becoming a seed for each of the opinions.
Therefore, the input to this model would be a network with defined types of agents; for each type of agent, specified distributions for thresholds and the probability of becoming a seed for each opinion, and values $\fin$ and $B$ as usual.
As there is little hope of achieving an exactly balanced outcome if the network is actually stochastic, we would presumably also specify some allowed imbalance, as discussed in \Cref{subs:imbalance}.
On the one hand, our hardness results are for a special case of this model, namely that the distributions have support of size 1, making the model deterministic.
Indeed, it is not hard to check that, in fact, our constructions even use only constant number of agents types.
Therefore, these hardness results carry over to the stochastic model.
On the other hand, even for the single-opinion stochastic models, computing the expected number of agents activated by a certain seed set is \textsf{\#P}-complete~\cite{ChenWW10}.
Thus, the typical approach to estimating this number is to sample the network several times and average over the results~\cite{TongWD21,LiZCGSL2014,TongWD18,ChenCCKLRSWWY11,BudakAA2011,TongDW18}.
Indeed, a $(1 \pm \varepsilon)$-approximation of the expected number can be obtained with probability at least $1 - \delta$ by sampling $\Omega(n^2 \log(1/\delta)/\varepsilon^2)$ deterministic instances~\cite[Proposition 4.1]{KempeKT15}.
Therefore, it becomes essential to simulate behavior in a deterministic setting, as was argued by, e.g., \cite[Lemma~1]{LuCL2015}.
Our algorithms can be used to compute the optimal solution for each deterministic sample.
However, a more challenging task is to compute a single solution that would achieve a good balance for all samples of the thresholds.
If the seed sets are also stochastic, then it would be natural to specify the solution in terms of the number of agents of each type being part of the solution, also based on whether they are in the seed set or not, rather than specifying particular agents to be in the solution.

This stochastic setting is reminiscent of the two-campaign model of~\citet{GarimellaGPT17}, who also seek seed sets, subject to a budget constraint, that maximize the expected number of \emph{balanced} agents, that is, agents reached by both campaigns or by neither.
The two models differ in three important ways.
First, their diffusion follows the Independent cascades model with per-edge propagation probabilities, whereas in our model, the randomness resides in the agent thresholds.
Second, dependence between the two opinions is captured at the edge level in their work, whereas our joint distribution over pairs of thresholds captures this dependence at the \emph{agent} level, so that different agents can individually treat the two opinions as complementary or competing.
Third, since their objective is neither monotone nor submodular and is \textsf{\#P}-hard to evaluate, they develop constant-factor approximation algorithms, while our results give exact algorithms for the deterministic special case; extending these to a single solution robust across all threshold samples---in the spirit of their approximation framework---is one of several challenges we leave open for follow-up research.

\subsection{Other Possible Modifications}

Another important aspect of our model is that the thresholds as well as the network are the same for both opinions. 
In the spirit of~\citet{LuCL2015}, one could consider specific thresholds for each opinion, i.e., we would have four thresholds for each agent, distinguishing whether it is the first or second opinion to gain and also whether it is opinion $a$ or $b$. 
We could also have a different network for each of the opinions~\cite{BlazejKS2022}, i.e., some edges transmit only opinion $a$ while others transmit only opinion $ b$.
This could, e.g., correspond to a sample from a stochastic model, where the edges transmit the opinion only with some given probability (as in the Independent cascades model).

Obviously, our symmetric model is a special case of both these modifications and, thus, the hardness persists.
Conversely, the algorithms never really make use of the thresholds being symmetric or the network being the same.
Hence, they can be easily modified to handle the asymmetric setting as long as the union of both networks has a favorable structure.
However, we omit the description of the straightforward modifications necessary for these cases.

Finally, the proposed variations of the model can be combined.
Nevertheless, similarly can be combined the suggested modifications of the algorithms and the modifications of the hardness reductions.
This way a rather general model can be analyzed from the complexity view point, based on our results.

\section{Conclusions}\label{sec:conslusions}
We have initiated the study of the \twoTSSshort problem from the parameterized computational complexity perspective.
Similarly to \TSSshort, most combinations of natural parameters do not lead to efficient algorithms, with the parameterized complexity with respect to the number of rounds~$\fin$ and the maximum threshold of $f_{\max}$ (and possibly the budget $B$) being an interesting open problem. We also showed that there are promising algorithms for networks that are sparse (which is often the case in practice).
We believe that identifying and studying important special cases of \twoTSSshort (such as the majority version of the \TSSshort problem) is an interesting research direction.
Yet another similar flavored research direction is to identify further real-world applications of the \twoTSSshort{}.
It is worth mentioning that most of our results (both hardness and algorithmic) can be naturally adopted for multiple opinion models.

Finally, for our theoretical analysis, we have chosen the ``balancing objective''.
This is suitable for an initial study; however, we believe that more desirable objectives are yet to be found together with their applications (and, of course, results).
All in all, we would like to see further research directions based on (parts of) our model.

\section*{Acknowledgements}
\noindent This project has received funding from the European Research Council (ERC) under the European Union’s Horizon 2020 research and innovation programme (grant agreement No 101002854) and was co-funded by the European Union under the project Robotics and advanced industrial production (reg. no. CZ.02.01.01/00/22\_008/0004590). We thank the anonymous reviewers of AAAI 2022 and the Artificial Intelligence Journal for their valuable comments.

\begin{center}
    \vspace{0.25cm}
    \includegraphics[width=5cm]{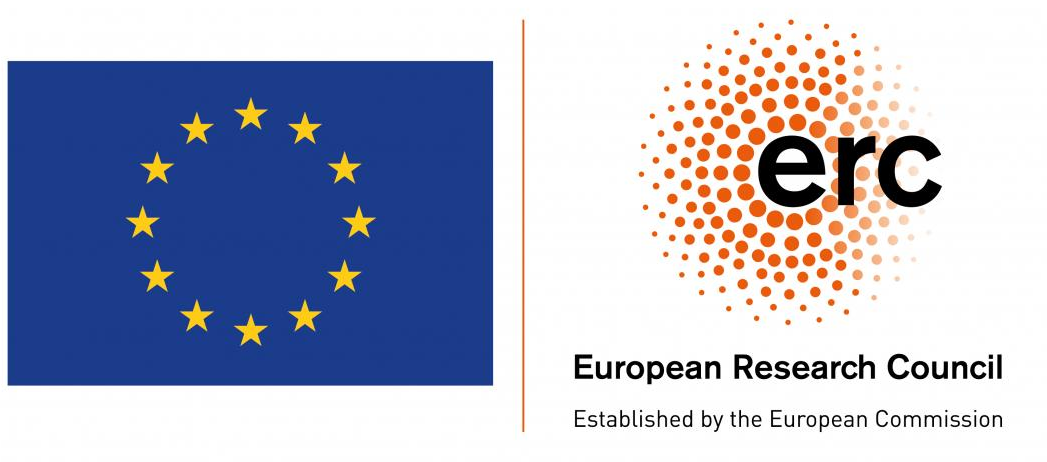}
\end{center}

\bibliographystyle{elsarticle-num-names}
\bibliography{references}

\end{document}